\documentclass{article}
\usepackage{verbatim}
\usepackage{hyperref}
\usepackage{times}
\usepackage[pdftex]{graphicx}
\usepackage{amssymb,amsfonts,amsmath,amsthm}
\usepackage{float}
\usepackage{xcolor}
\usepackage{upgreek}
\usepackage{enumitem}
\usepackage{bm}
\usepackage{bbm}
\usepackage{tikz}
\usetikzlibrary{arrows.meta}
\newcommand{\oW}{\overline W}
\newcommand{\uW}{\underline W}
\newcommand{\oV}{\overline{\mathcal V}}
\newcommand{\uV}{\underline{\mathcal V}}
\newcommand{\oZ}{\overline Z}

\usepackage{caption}
\usepackage{subcaption}

\usepackage{tikz}
\usetikzlibrary{arrows.meta}      
\usetikzlibrary{calc}             
\usetikzlibrary{patterns}   

\DeclareMathOperator*{\argmax}{arg\,max}
\DeclareMathOperator*{\argmin}{arg\,min}

\DeclareMathOperator{\Tr}{Tr}

\newtheorem{theorem}{Theorem}
\newtheorem{proposition}{Proposition}
\newtheorem{lemma}{Lemma}
\newtheorem{ass}{Assumption}

\newtheorem{definition}{Definition}
\newtheorem{remark}{Remark}

\newtheorem{example}{Example}

\newcommand\HH {\mathbb H}
\newcommand\EE {\mathbb E}
\newcommand\FF {\mathbb F}

\newcommand\RR {\mathbb R}
\newcommand\PP {\mathbb P}

\newcommand\bS {\mathbb S}

\newcommand\cA {\mathcal A}
\newcommand\cH {\mathcal H}

\newcommand\cY {\mathcal Y}
\newcommand\cZ {\mathcal Z}
\newcommand\cL {\mathcal L}
\newcommand\cC {\mathcal C}
\newcommand\cD {\mathcal D}
\newcommand\cE {\mathcal E}
\newcommand\cF {\mathcal F}

\newcommand\cV {\mathcal V}

\newcommand\cS {\mathcal S}

\def\pa{\partial}

\def\a{\alpha}

\newcommand\1 {\mathbf 1}
\def\qed{\hskip6pt\vrule height6pt width5pt depth1pt}

\title{Principal-agent problems with adverse selection:\\ A stochastic target problem formulation\thanks{The authors thank Nizar Touzi for fruitful conversations and Jakša Cvitanić for identifying an error in an earlier version of this manuscript, which led to the improved statement of Theorem~\ref{thm:HJB_principal_screens}.
}}
\author{
  Guillermo Alonso Alvarez\thanks{Department of Mathematics, University of Michigan. \texttt{guialv@umich.edu}.}
  \and
  Ibrahim Ekren\thanks{Department of Mathematics, University of Michigan. \texttt{iekren@umich.edu}. I. Ekren is partially supported by the NSF grant DMS-2406240.}
  \and
  Liwei Huang\thanks{Department of Mathematics, University of Michigan. \texttt{huanglw@umich.edu}.}
}
\date{May 2026}
\begin{document}
\maketitle
\begin{abstract}
We study a principal-agent problem with adverse selection, where the principal does not know the agent's true cost but must design a contract to optimize a specific criterion. Unlike standard screening frameworks that allow for self-selection, we assume the principal can only offer a unique contract. We show that the agent's optimization problem can be reformulated as a stochastic target problem. After characterizing the credible domain of this target problem, we show that the principal's objective can be solved as a stochastic optimal control problem with partial information and state constraints. The description of the credible domain also allows us to obtain the value of screening contracts. 
\end{abstract}
\section{Introduction}

A continuous-time principal–agent problem is a bilevel optimization: 
the principal designs a terminal-payment contract $\xi$, given as a 
measurable function of the output process $X$, and the agent chooses 
an action profile to maximize her expected utility under $\xi$. The 
principal's payoff depends on the agent's optimal response, so the 
bilevel structure rarely admits a tractable reduction. The seminal work of Holmstr\"om and Milgrom \cite{holmstrom1987aggregation} shows that, in a continuous time setting with exponential utility, optimal contracts are linear in the output process. Beyond this tractable setting, however, the problem becomes significantly more complex due to the infinite-dimensional nature of the contract space and the incentive compatibility constraints. We refer to \cite{cvitanic2012contract} for a textbook treatment 
of continuous-time contract theory.

A subsequent foundational contribution by Sannikov 
\cite{sannikov2008continuous}, extended in the BSDE/stochastic-control 
framework by Cvitani\'c, Possama\"i, and Touzi \cite{cvitanic2018dynamic}, 
resolves this difficulty by reparameterizing the contract space. 
Rather than optimizing directly over a terminal payment $\xi$, the principal optimizes over a pair $(y_0, Z)$ consisting of an initial continuation utility $y_0$ for the agent and a process $Z$ encoding the sensitivity 
of the contract to $X$. The contract is then recovered as 
$\xi = Y^{y_0, Z}_T$, the terminal value of a forward stochastic 
differential equation starting at $y_0$ and controlled by $Z$. Under 
this reparametrization, the agent's optimal response is given pointwise 
by maximizing a Hamiltonian, and the principal's problem becomes a 
standard stochastic control problem on the joint state $(X, Y)$. This 
methodology has become a standard approach in the continuous-time contracting 
literature.

Sannikov's reparametrization presupposes that the principal knows 
the agent's cost function. In the present paper, we develop a new 
methodology for continuous-time contracting problems in which the agent's cost depends on a binary 
parameter $\Theta\in\{0,1\}$, observed by the agent at time zero but 
not by the principal, who knows only the prior distribution of 
$\Theta$. The principal commits at time zero to a single contract and 
cannot offer a menu. This contrasts with the classical \emph{screening} 
formulation of dynamic adverse-selection contracting 
\cite{cvitanic2013dynamics,santibanez2020bank}, in which the principal 
offers a menu of two contracts and the two types self-select at the 
acceptance stage, revealing the type to the principal. In our setting 
the type is never revealed directly, and the principal must update her 
beliefs about $\Theta$ from observations of $X$ alone. 

A natural reparametrization in this single-contract setting is to 
track, separately for each possible type $\theta\in\{0,1\}$, the 
agent's continuation utility process 
$Y^{\theta, y_\theta, Z^\theta}$ under her own type-specific 
Hamiltonian, with type-specific initial value $y_\theta$ and 
sensitivity process $Z^\theta$. Since the contract itself cannot 
depend on the type, the two type-indexed continuation utilities are 
forced to share the same terminal value:
\[
\frac{Y^{0, y_0, Z^0}_T}{\beta_0} \;=\; \frac{Y^{1, y_1, Z^1}_T}{\beta_1},\quad \mathbb P\text{-a.s.}
\]
This is a \emph{stochastic target} constraint in the sense of Soner 
and Touzi \cite{soner2002dynamic,soner2002stochastic,
bouchard2010stochastic,bouchard2010optimal}, and it induces a 
non-trivial set of pairs $(y_0, y_1)$ of initial continuation utilities 
that can be simultaneously delivered by a single contract: the 
\emph{credible set} $\mathcal E(X_0)$. Our approach is related to the recent stochastic-target reformulation of closed-loop Stackelberg games introduced in \cite{hernandez2024closed}, where the follower’s continuation value is used as a controlled state variable in the leader’s problem. However, their setting does not feature adverse selection which requires in our case to filter the private information and add the filter as an additional state variable. \cite{hernandez2024closed} considers a single follower with no privately known type, and the target constraint enforces the dynamic consistency of the follower’s continuation value.

Our first main result characterizes $\mathcal E(X_0)$ explicitly as a 
strip in the plane: the gap $y_0 - y_1$ lies between two boundary 
functions $\underline W(0, X_0)$ and $\overline W(0, X_0)$ that solve 
a pair of fully nonlinear PDEs. This turns the stochastic target 
constraint into a \emph{state constraint} on the trajectory of the 
gap process $Y^0 - Y^1$, in the spirit of the state-constrained 
optimal control literature \cite{soner1986optimalI,
soner1986optimalII,capuzzodolcettalions1990,ishii2002class,
flemingsoner2006}.

In our framework where the principal can only offer a unique contract, a complete reformulation of the principal's problem requires one 
further state variable: her conditional belief $p_t$ that $\Theta = 0$ 
given the path of $X$ up to time $t$. Because our state-constraint 
formulation keeps track of both type-indexed continuation utilities 
simultaneously, the belief dynamics arise naturally from the Wonham 
filter, driven by the innovation of $X$ and the dynamics of $(X,p)$ can be obtained under the information of the principal. Then, the principal's value 
function admits a representation as the supremum, over admissible 
pairs $(y_0, y_1) \in \mathcal E(X_0)$, of the value of a 
state-constrained stochastic optimal control problem in the augmented 
state $(X_t, Y^0_t, Y^1_t, p_t)$. In fact, the principal's problem can be 
stated as an optimal control problem with partial information and 
state constraints.

The state-constrained HJB equation arising from this formulation does 
not satisfy the standard inward-pointing condition of 
\cite{soner1986optimalI,ishii2002class}. To handle this, we adapt the 
methodology of Bouchard, \'Elie, and Imbert \cite{bouchard2010optimal} 
and show that, under additional structure (when the data are 
independent of $X$, the control set is bounded, and the principal is 
risk-neutral), the state-constrained problem reduces to a Dirichlet 
boundary value problem in which the principal receives an explicit 
boundary utility upon hitting the lateral boundary. The resulting 
equation falls within the standard viscosity-solution framework: we 
characterize the principal's value as the unique viscosity solution 
to a Hamilton--Jacobi--Bellman equation, which makes the problem 
amenable to numerical solution.

We then provide a verification theorem. When the state-constrained 
value function admits sufficient regularity, the optimal contract is 
given by a feedback rule in the augmented state 
$(X_t, Y^0_t, Y^1_t, p_t)$. Optimal contracts therefore depend on the 
output process, the continuation utility of each type, and the 
principal's belief about the type. The contract is non-trivial in a 
strong sense: it cannot be expressed as a function of $X$ alone, as in 
the moral-hazard-only case of \cite{cvitanic2018dynamic}, nor as a 
deterministic function of the principal's belief, as in pure filtering 
models, but jointly couples all three.

Our state-constrained stochastic-control formulation provides a common framework for comparing alternative participation requirements. Participation may be imposed conditionally or unconditionally.  In the conditional problem, the contract must satisfy a participation constraint for each possible type: after learning her type, the agent must obtain at least the corresponding type-specific reservation utility. In the unconditional problem, participation is imposed only ex ante, in expectation under the prior distribution over types. Conditional rationality is a stronger constraint, since every realized type must be willing to participate. Unconditional rationality is weaker since participation is required before the type is realized.

We compare these single-contract problems with the classical screening formulation of \cite{cvitanic2013dynamics}. In a screening model, the principal offers a menu of contracts, one for each type, and the agent self-selects. This choice can reveal the type immediately and eliminates the need for filtering. However, the menu must satisfy incentive-compatibility constraints: each type must prefer the contract intended for her over the contract intended for the other type. Our framework allows us to write the conditional single-contract problem, the unconditional single-contract problem, and the screening problem in a common stochastic-control framework. As shown formally in Theorem \ref{thm:HJB_principal_screens}, screening dominates the conditional single-contract problem, whereas the unconditional single-contract problem can dominate screening because it imposes a weaker participation requirement. Thus, screening is a useful benchmark, but it is not always the relevant feasible contracting environment. In some applications, the principal may not be able to offer a menu that reliably reveals the agent’s type at time zero. In that case, the type remains hidden and the principal must learn about it dynamically from the output process.

We also use numerical methods to study the comparison between the conditional single-contract, unconditional single-contract, and screening formulations. The corresponding solution of the state-constrained HJB equation is approximated using the deep Galerkin method of \cite{sirignano_spiliopoulos}. We consider both dominated and non-dominated type structures. In the dominated case, the two types can be ranked uniformly: one type has a lower cost of effort, and therefore a higher Hamiltonian for all relevant actions. In the non-dominated case, the cost functions are not uniformly ordered, so neither type is globally more efficient than the other. The numerical results confirm the ordering of the principal values: $V_{p,c} \leq V_{p,s}\leq V_{p,uc}$. They also show how the conditional, unconditional, and screening values vary with the prior belief $p$, and how this dependence differs between dominated and non-dominated structures. The simulations further illustrate how the optimal initial promised utilities for each agent's type $(y_0,y_1)$ adjust across formulations.

\paragraph{Outline.} 
Sections~\ref{subsec:prob_setup} and~\ref{section.PA.problem} set up the probabilistic framework and agent
 and principal's problems. Section~\ref{s:firstred} 
reduces the principal's problem to a stochastic target problem and 
proves the credible-set characterization. 
Section~\ref{s:statec} introduces the filter $p_t$ and reformulates 
the problem as a state-constrained optimal control problem with the 
corresponding HJB equation. 
Section~\ref{s:screening} recasts the screening problem of 
\cite{cvitanic2013dynamics} in our framework and develops the 
comparison between single-contract and screening formulations. 
Section~\ref{sec:dirichlet} carries out 
the Dirichlet reduction in the $X$-independent risk-neutral setting 
and proves the uniqueness of a constrained viscosity solution. 
Section~\ref{s:optcontract} establishes the verification theorem 
and discusses the structure of optimal contracts and Section~\ref{sec:benchmark} contains the numerical results.

\section{Probabilistic setup}\label{subsec:prob_setup}

Let $T>0$ be a fixed terminal time and $d\geq 1$ the dimension. We denote by $
\Omega_X := C([0,T];\RR^d)$
the canonical path space representing the agent's output process. 
We work on the product space
$
\Omega := \{0,1\} \times \Omega_X,$
equipped with its canonical coordinates $(\Theta,X)$, i.e.\ for $\omega=(\theta,x)\in\Omega$,
\[
\Theta(\omega)=\theta,\qquad X_t(\omega)=x_t,\ \ t\in[0,T],
\]
where $\Theta$ represents the agent's type (e.g., profitability of the project and/or cost to the agent).

We fix a continuous function
$$\sigma:[0,T]\times \RR^d\mapsto \RR^{d\times d},$$
such that $\sigma(t,x)$ is invertible for all $(t,x) \in [0,T]\times \RR^d$, 
and 
$$\sup_{t\in [0,T]}\sup_{x\neq y}\left\{\frac{|\sigma(t,x)-\sigma(t,y)|}{|x-y|}+|\sigma(t,x)|+|\sigma^{-1}(t,x)|\right\}<\infty.$$

We denote by ${\bar P}$ the probability\footnote{We take the convention that probability distributions with double-struck letters such as $\PP$ refer to measures on $\Omega$ whereas plain roman letters such as $P$ refer to probability measures on $\Omega_X$.} measure on $\Omega_X$ defined as the distribution of the solution of the following SDE
\begin{align}\label{sde:sigma}
    X_t=X_0+\int_0^t\sigma(s,X_s)dB^{{\bar P}}_s
\end{align}
where $B^{{\bar P}}$ is a ${\bar P}$ Brownian motion and $X_0\in \RR^d$ a fixed initial condition. Note that, in general, ${\bar P}$ may depend on $X_0$, but we omit this dependence to simplify the notation. Under our assumptions, both weak and strong uniqueness hold for \eqref{sde:sigma}.

We endow $\Omega$ with the filtration $\FF=(\cF_t)_{0\le t\le T}$ defined by
\[
\cF_t := \sigma(\Theta)\ \vee\ \cF_t^X,
\qquad
\cF_t^X := \sigma(X_s:\, s\le t)\vee \mathcal{N}({\bar P}),
\]
where $\mathcal{N}({\bar P})$ are the null sets of ${\bar P}$ (which is a measure on $\Omega_X$), $\FF^X$ represents the information available to the principal, and $\FF$ represents the information available to the agent.

We fix a probability measure\footnote{Extending our methodology to the case where $\theta$ can take more than two different values requires smooth solutions to a fully nonlinear anisotropic curvature-flow equation. Relaxing this smoothness assumption is left for future work.} $\mu\in Prob(\{0,1\})$
\[
p_0 := \mu\bigl(\{\Theta=0\}\bigr) \in(0,1), \mbox{ and }1-p_0:= \mu\bigl(\{\Theta=1\}\bigr)
\]
which is the distribution of $\Theta$.
We define the reference probability $\bar \PP$ on $\Omega$ by the product measure
\[
\bar \PP := \mu\otimes {\bar P},
\]
so that, under $\bar \PP$, $\Theta\sim\mu$, $X$ solves \eqref{sde:sigma} where $B^{{\bar P}}$ is a $\FF$-Brownian motion,
and $\Theta$ is independent of $X$.

A control process is an $\FF$-adapted process $\alpha=(\alpha_t)_{0\le t\le T}$ with values in a
compact subset $A$ of a finite-dimensional space, and we denote by $\mathcal{A}$ the set of all such controls. We also denote $\cA^X$ the set of $\FF^X$-adapted processes with values in $A$. Since $\Theta\in\{0,1\}$ and $\cF_t=\sigma(\Theta)\vee\cF_t^X$, the map
\begin{equation}\label{eq:control_bijection}
\alpha\ \longleftrightarrow\ (\alpha^0,\alpha^1),\qquad \alpha_t=\alpha^0_t\1_{\{\Theta=0\}}+\alpha^1_t\1_{\{\Theta=1\}}=\alpha^\Theta_t
\end{equation}
is a bijection between $\mathcal{A}$ and $\mathcal{A}^X\times\mathcal{A}^X$.

We fix a bounded measurable function
\[
\lambda:[0,T]\times \RR^d\times \{0,1\} \times A \to \RR^d
\]
such that $\alpha\mapsto\lambda(t,x,\theta,\alpha)$ is continuous, uniformly in $(t,x,\theta)\in[0,T]\times\RR^d\times\{0,1\}$.

\paragraph{Controlled measures via Girsanov's theorem.}
For $\theta\in\{0,1\}$ and $\alpha\in\mathcal{A}^X$, define the Dol\'eans--Dade exponential
\begin{equation}\label{eq:density}\notag
Z_T^{\alpha,\theta}\ :=\ \exp\!\left(\int_0^T\!\lambda(s,X_s,\theta,\alpha_s)^\top dB^{\bar P}_s-\tfrac12\int_0^T\!|\lambda(s,X_s,\theta,\alpha_s)|^2\,ds\right),
\end{equation}
which is a $(\FF^X,{\bar P})$-martingale by the boundedness of $\lambda$, and set
\begin{equation}\label{eq:Ptilde}\notag
\frac{dP^{\alpha,\theta}}{d{\bar P}}\bigg|_{\cF_T^X}:=Z_T^{\alpha,\theta}.
\end{equation}
By Girsanov's theorem and weak uniqueness for \eqref{sde:sigma}, $P^{\alpha,\theta}$ is the unique element of $Prob(\Omega_X)$ absolutely continuous with respect to ${\bar P}$ under which
\begin{equation}\label{eq:X_dynamics_small}
X_t = X_0 + \int_0^t \sigma(s,X_s)\Bigl(\lambda(s,X_s,\theta,\alpha_s)\,ds + dB^{P^{\alpha,\theta}}_s\Bigr),
\qquad  P^{\alpha,\theta}\text{-a.s.}
\end{equation}
for some $d$-dimensional $(\FF^X,P^{\alpha,\theta})$-Brownian motion $B^{P^{\alpha,\theta}}$.

For $\alpha\in\mathcal{A}$ with components $(\alpha^0,\alpha^1)$ as in \eqref{eq:control_bijection}, we define the joint measure $\PP^\alpha\in Prob(\Omega)$ by
\begin{equation}\label{eq:disintegration}
\PP^\alpha(\{\theta\}\times A)\ :=\ \mu(\{\theta\})\,P^{\alpha^\theta,\theta}(A),
\qquad A\in\cF_T^X,\ \theta\in\{0,1\},
\end{equation}
so that under $\PP^\alpha$, $\Theta\sim\mu$ and the conditional law of $X$ given $\Theta=\theta$ is $P^{\alpha^\theta,\theta}$. Its $X$-marginal is
\begin{equation}\label{eq:defpp}
P^{\PP^\alpha}\ :=\ p_0\,P^{\alpha^0,0}+(1-p_0)\,P^{\alpha^1,1}\ \in\ Prob(\Omega_X).
\end{equation}
Under $\PP^\alpha$, the process $X$ satisfies
\begin{equation}\label{eq:X_dynamics_big}
X_t = X_0 + \int_0^t \sigma(s,X_s)\Bigl(\lambda(s,X_s,\Theta,\alpha_s)\,ds + dB_s^{\PP^\alpha}\Bigr),
\qquad 0\le t\le T,\ \PP^\alpha\text{-a.s.},
\end{equation}
for some $d$-dimensional $(\FF,\PP^\alpha)$-Brownian motion $B^{\PP^\alpha}$.

\begin{remark}
Similarly to \cite{cvitanic2018dynamic}, one could introduce \emph{control models}: pairs $(\PP,\alpha)\in Prob(\Omega)\times \mathcal{A}$ satisfying \eqref{eq:X_dynamics_big} with $\PP(\,\cdot\mid\Theta=\theta)\ll{\bar P}$. However, thanks to weak uniqueness of \eqref{sde:sigma} and Girsanov's theorem, such a pair is entirely determined by the control: $\PP=\PP^\alpha$. We therefore work directly with $\alpha\in\mathcal{A}$ (or equivalently with the pair $(\alpha^0,\alpha^1)\in(\mathcal{A}^X)^2$) and the induced measure $\PP^\alpha$ throughout.
\end{remark}

The boundedness of $\lambda$ ensures that for every $\alpha\in\mathcal{A}$, all conditional distributions $P^{\alpha^\theta,\theta}$ and marginal laws $P^{\PP^\alpha}$ are absolutely continuous with respect to ${\bar P}$.

\section{The contracting problem}\label{section.PA.problem}

\subsection{Agent's problem}

We are given a cost function
\begin{equation}\label{eq:cost}\notag
c:[0,T]\times\RR^d\times\{0,1\}\times A\to\RR
\end{equation}
that is bounded, measurable in $(t,x)$ for each $(\theta,\alpha)$, and continuous in $\alpha$ for each $(t,x,\theta)$. Since $A$ is a compact subset of a finite-dimensional Euclidean space, these conditions make $c$ jointly Borel measurable, so that for every $\alpha\in\mathcal{A}$ the process $t\mapsto c(t,X_t,\Theta,\alpha_t)$ is automatically $\FF$-progressively measurable.

We denote by $\mathcal{C}_a$ the set of contracts, i.e.\ $\cF_T^X$-measurable random variables $\xi$ satisfying
\begin{equation}\label{def:cont}\notag
\mathcal{C}_a
:=
\left\{
\xi\ \ \cF_T^X\text{-measurable}:
\EE^{\bar P}\bigl[|\xi|^2\bigr]<\infty
\right\}.
\end{equation}

At time $t=0$, the principal offers a contract $\xi\in\mathcal{C}_a$.
At time $t=0+$, the agent learns her private information $\Theta$, which is not communicated to the principal.
For a given $\beta:\{0,1\}\mapsto (0,\infty)$, if the realization is $\theta \in \{0,1\}$, the agent solves
\begin{equation}\label{eq:agent_problem_k}
\sup_{\alpha\in\mathcal{A}^X} J_a(\alpha;\xi,\theta),
\end{equation}
where
\begin{equation}\label{eq:agent_value_k}\notag
J_a(\alpha;\xi,\theta)
:=
\EE^{P^{\alpha,\theta}}\!\left[
e^{-\int_0^T \kappa(r,X_r)\,dr}\,\beta_\theta\xi
-
\int_0^T e^{-\int_0^s \kappa(r,X_r)\,dr}\,c\bigl(s,X_s,\theta,\alpha_s\bigr)\,ds
\right],
\end{equation}
and $\kappa:[0,T]\times\RR^d\to\RR$ is a given bounded continuous function. Note that for fixed $\theta$ this is exactly the problem solved in \cite{cvitanic2018dynamic}.

Equivalently, using the joint formulation on $\Omega$, for $\alpha\in\mathcal{A}$ with components $(\alpha^0,\alpha^1)$ as in \eqref{eq:control_bijection} we define
\begin{equation}\label{eq:agent_value_joint}\notag
J_a(\alpha;\xi)
:=
\EE^{\PP^\alpha}\!\left[
e^{-\int_0^T \kappa(r,X_r)\,dr}\,\beta_\Theta\xi
-
\int_0^T e^{-\int_0^s \kappa(r,X_r)\,dr}\,c\bigl(s,X_s,\Theta,\alpha_s\bigr)\,ds
\right],
\end{equation}
and the disintegration \eqref{eq:disintegration} yields
\[
J_a(\alpha;\xi)
=
p_0\,J_a(\alpha^0;\xi,0)+(1-p_0)\,J_a(\alpha^1;\xi,1).
\]
The agent's optimization can therefore be written as
\begin{equation}\label{eq:agent_problem_k2}\notag
\sup_{\alpha\in\mathcal{A}} J_a(\alpha;\xi)=p_0\sup_{\alpha^0\in\mathcal{A}^X} J_a(\alpha^0;\xi,0)+(1-p_0)\sup_{\alpha^1\in\mathcal{A}^X} J_a(\alpha^1;\xi,1).
\end{equation}
For any $\xi\in\mathcal{C}_a$, we denote by
\[
\mathcal{A}^\ast(\xi)
:=
\left\{
(\alpha^0,\alpha^1)\in(\mathcal{A}^X)^2:
\ \alpha^\theta\text{ attains }\sup_{\alpha\in\mathcal{A}^X}J_a(\alpha;\xi,\theta)\ \text{for }\theta\in\{0,1\}
\right\}
\]
the set of optimal responses of the agent to a contract given different types.
We fix $R\in\RR$ and $R_\theta\in\RR$ and define two notions of rationality.

\begin{definition}\label{def:rationality}
We say that $\xi\in\mathcal{C}_a$ is \emph{unconditionally rational} if
\[
\sup_{\alpha\in\mathcal{A}} J_a(\alpha;\xi)=p_0\sup_{\alpha^0\in\mathcal{A}^X}J_a(\alpha^0;\xi,0)+(1-p_0)\sup_{\alpha^1\in\mathcal{A}^X}J_a(\alpha^1;\xi,1)
\ \ge\ R.
\]
We say that $\xi\in\mathcal{C}_a$ is \emph{individually (or conditionally) rational} if
\[
\sup_{\alpha\in\mathcal{A}^X}
J_a(\alpha;\xi,\theta)\ \ge\ R_\theta,
\qquad \theta\in\{0,1\}.
\]
We denote by $\mathcal{CR}\subset\mathcal{C}_a$ (resp.\ $\mathcal{UCR}\subset\mathcal{C}_a$) the set of
conditionally rational (resp.\ unconditionally rational) contracts.
\end{definition}
In order to compare our results with screening contracts, we use the definition in \cite[Definition 5.5]{cvitanic2013dynamics}.

\begin{definition}\label{def:screening}
A family $(\xi_\theta)_{\theta\in\{0,1\}}\in \mathcal{C}_a^2$ is called \emph{incentive-compatible screening contracts}
(termed \emph{initially individually rational} in \cite{cvitanic2013dynamics})
if, for all $\theta,\theta'\in\{0,1\}$,
\[
\sup_{\alpha\in\mathcal{A}^X} J_a(\alpha;\xi_\theta,\theta)
\ \ge\
\sup_{\alpha\in\mathcal{A}^X} J_a(\alpha;\xi_{\theta'},\theta)\vee R_\theta.
\]
That is, each type weakly prefers the contract designed for her over the alternative.
We denote by $\mathcal{S}\subset \mathcal{C}_a^2$ the set of all such families.
\end{definition}
We compare below the different notions of rationality and incentive compatibility after defining the principal's problem.

\subsection{Principal's problems}
Unlike screening models, in our framework, the principal offers a single contract (not a menu), implying that the principal does not initially learn $\Theta$ and only knows its prior distribution $\mu$.
We fix a concave, non-decreasing utility function $U_p$ for the principal, and a Lipschitz continuous liquidation function $\Gamma:\RR\mapsto \RR$ so that the terminal utility of the principal is $U_p(\Gamma (X_T)-\xi)$. Note that in this expression, both $X_T$ and $\xi \in \cC_a$ are $\cF^X_T$-measurable. Thus, the expected utility of the principal does not depend on $\PP^\alpha$ for $\alpha\in \mathcal{A}$ but only on the $X$-marginal $P^{\PP^\alpha}$ defined in \eqref{eq:defpp}.
We now define the principal's values for
conditionally, and unconditionally rational contracts, respectively.
\begin{equation}\label{eq:Vp_conditional}
V_{p,c} := \sup_{\xi\in\mathcal{CR}}\ \sup_{(\alpha^0,\alpha^1)\in\mathcal{A}^\ast(\xi)}
\EE^{\PP^{\alpha}}\bigl[U_p(\Gamma (X_T)-\xi)\bigr],
\end{equation}
\begin{equation}\label{eq:Vp_unconditional}
V_{p,uc} := \sup_{\xi\in\mathcal{UCR}}\sup_{(\alpha^0,\alpha^1)\in\mathcal{A}^\ast(\xi)}
\EE^{\PP^\alpha}\bigl[U_p(\Gamma (X_T)-\xi)\bigr],
\end{equation}
where, in both expressions, $\alpha\in\mathcal{A}$ denotes the control associated to $(\alpha^0,\alpha^1)$ via \eqref{eq:control_bijection}. Recalling \eqref{eq:defpp}, these values can also be written as
\begin{equation}\label{eq:Vp_conditional2}
V_{p,c} := \sup_{\xi\in\mathcal{CR}}\ \sup_{(\alpha^0,\alpha^1)\in\mathcal{A}^\ast(\xi)}
\Bigl\{p_0\EE^{P^{\alpha^0,0}}\bigl[U_p(\Gamma (X_T)-\xi)\bigr]+(1-p_0)\EE^{P^{\alpha^1,1}}\bigl[U_p(\Gamma (X_T)-\xi)\bigr]\Bigr\},
\end{equation}
\begin{equation}\label{eq:Vp_unconditional2}
V_{p,uc} := \sup_{\xi\in\mathcal{UCR}}\sup_{(\alpha^0,\alpha^1)\in\mathcal{A}^\ast(\xi)}
\Bigl\{p_0\EE^{P^{\alpha^0,0}}\bigl[U_p(\Gamma (X_T)-\xi)\bigr]+(1-p_0)\EE^{P^{\alpha^1,1}}\bigl[U_p(\Gamma (X_T)-\xi)\bigr]\Bigr\}.
\end{equation}

Similarly, for incentive-compatible screening contracts, we define
\begin{align}\label{eq:Vp_screening}
V_{p,s} := \sup_{(\xi_\theta)_{\theta\in\{0,1\}}\in\mathcal{S}}
&p_0\,
\sup_{(\alpha^0,\alpha^1)\in\mathcal{A}^\ast(\xi_0)}
\EE^{P^{\alpha^0,0}}\bigl[U_p(\Gamma (X_T)-\xi_0)\bigr]\\
+&(1-p_0)\,
\sup_{(\alpha^0,\alpha^1)\in\mathcal{A}^\ast(\xi_1)}
\EE^{P^{\alpha^1,1}}\bigl[U_p(\Gamma (X_T)-\xi_1)\bigr].\notag
\end{align}

\begin{remark}
\begin{itemize}

  \item[(i)] For any initially individually rational screening contracts $(\xi_0,\xi_1)\in \cS$, the agent of type $\theta$ will choose the contract $\xi_\theta$. As such, at time $t=0+$, the principal learns the type of the agent.
\item[(ii)] In view of \eqref{eq:Vp_conditional2}–\eqref{eq:Vp_unconditional2}, the principal’s value is convex in $p_0$. Meanwhile, \eqref{eq:Vp_conditional}–\eqref{eq:Vp_unconditional} provide more convenient expressions for the dynamic formulation of the problem.
\end{itemize}
\end{remark}
The participation (rationality) constraints can be interpreted as follows. At time $t=0$, the principal offers a contract $\xi$, and the agent knows the prior distribution $\mu$ of her type. Immediately after, at time $t=0+$, the agent observes the realization of $\Theta$ and may condition her actions on this information; accordingly, her control is given by a family of processes $\{(\alpha_t^\theta)\colon \theta\in\{0,1\}\}$.

The \emph{unconditionally rational condition} requires that the agent's expected utility, taken with respect to the prior $\mu$, is no less than her reservation utility. This reflects the requirement that the agent be incentivized ex ante, before the realization of her type.

The \emph{individually (or conditionally) rational condition}, in the sense of \cite{cvitanic2013dynamics}, requires that for each realized type, the agent's utility conditional on this information is no less than her reservation utility. This corresponds to a setting in which the contract must deliver sufficient utility for each realized type, once these are observed.

In the screening framework described in \cite{cvitanic2013dynamics}, the principal offers a \emph{menu} of contracts $(\xi_0,\xi_1)$ that satisfy the initial individual rationality/incentive compatibility, so that an agent of type $\theta$ optimally selects $\xi_\theta$ at time $t=0+$ (self-selection). Hence, the principal learns the realized type $\Theta$ immediately from the chosen contract, and the subsequent contracting problem becomes type-by-type with \emph{no remaining hidden parameter}. In particular, there is no need to filter $\Theta$ from the output process $X$ after $t=0+$, since the informational asymmetry is resolved at the contract-selection time. By contrast, in our setting the principal offers a \emph{single} contract (not a menu) and therefore does not observe $\Theta$ at $t=0+$; the principal must keep track of the $F^X$-conditional law of $\Theta$ over time, which leads to a genuine filtering (belief-update) component in the dynamic formulation.
Our stochastic target formulation below allows us to use the methodology of Sannikov \cite{sannikov2008continuous,cvitanic2018dynamic} in the presence of information asymmetry which has the main advantage of allowing us to perform this filtering procedure.

\section{First reduction of the problem of the principal}\label{s:firstred}
In this section, we exhibit state constraints that allow us to reduce \eqref{eq:Vp_conditional} and \eqref{eq:Vp_unconditional} to a state constraint control problem.

\subsection{Control problem of the agent for each type}

At time $0$ the agent learns $\Theta$ and can condition her controls on this information. Given the control problem \eqref{eq:agent_problem_k} and the state dynamics \eqref{eq:X_dynamics_small}, we define the Hamiltonian by
\begin{align}\label{def:hk}
    H^\theta(t,x,z):=\sup_{\a\in A}\{z^{\top} \sigma(t,x) \lambda(t,x,\theta,\a)  -c(t,x,\theta,\a) \},
\end{align}
and, as in \cite{cvitanic2018dynamic}, for all $(\theta,y_\theta)\in \{0,1\}\times \RR$ and every $\FF^X$-adapted $\RR^d$-valued process $Z^\theta$, we introduce the forward process
$Y^\theta=Y^{\theta,0,X_0,y_\theta,Z^\theta}$ defined under $(\FF^X,\bar P)$ by
\begin{align}
    \label{fsde}
    Y^\theta_s=y_\theta-\int_0^s \bigl[H^\theta(r,X_r,Z_r^\theta)-\kappa(r,X_r)Y^\theta_r\bigr]dr+\int_0^s (Z^\theta_r)^\top dX_r.
\end{align}
This process plays the role of the agent's continuation value. All stochastic equations below are stated under $\bar P$; since $P^{\alpha,\theta}\ll\bar P$ with densities in every $L^q$ (by boundedness of $\lambda$), the corresponding statements hold $P^{\alpha,\theta}$-a.s.\ as well for every $\alpha\in\mathcal{A}^X$.

For an $\FF^X$-adapted process $\phi$, we use the standard BSDE norms \cite{el1997backward}
\begin{align}
    \|\phi\|^2_{\HH_2(\bar P)}&:=\EE^{\bar P}\!\left[\int_0^T |\phi_s|^2\,ds\right], &
    \|\phi\|^2_{\bS_2(\bar P)}&:=\EE^{\bar P}\!\left[\sup_{0\leq s\leq T} |\phi_s|^2\right].\notag
\end{align}

\begin{definition}\label{def:response}
We denote by $\cV=\cV(X_0)$\footnote{The set depends on $X_0$ because of the dependence of $\bar P$ on $X_0$. The dependence will be omitted if there is no confusion.} the set of $\FF^X$-adapted $\RR^d$-valued processes $Z$ with $\|Z\|_{\HH_2(\bar P)}<\infty$. For $(t,x)\in[0,T]\times\RR^d$, we write $\cV(t,x)$ for the analogous space of $\FF^X$-adapted processes on $[t,T]$ associated to the reference measure of the SDE \eqref{sde:sigma} started at $X_t=x$.
\end{definition}

By \cite[Proposition~3.3]{cvitanic2018dynamic}, for fixed $y_\theta\in\RR$ and $Z\in\cV$, if the contract is $\xi=Y^{\theta,0,X_0,y_\theta,Z}_T$ and $\Theta=\theta$, then the optimal control of the agent at time $t$ is to choose any control in $A_\theta^*(t,X_t,Z_t)$ defined by
\begin{align}\label{eq:defAstar}
    A_\theta^*(t,x,z):=\argmax_{\a\in A} \{z^{\top} \sigma(t,x) \lambda(t,x,\theta,\a)  -c(t,x,\theta,\a) \},
\end{align}
which is non-empty by continuity of the integrand in $\alpha$ and compactness of $A$.

\begin{remark}\label{rem:selection}
Under our standing assumptions on $\sigma,\lambda,\kappa,c$ and compactness of $A$, for every $Z\in\cV$ there exists $(\alpha^0,\alpha^1)\in(\cA^X)^2$ such that, for each $\theta\in\{0,1\}$,
\[
H^\theta(t,X_t,Z_t)= Z^\top_t\sigma(t,X_t)\lambda(t,X_t,\theta,\alpha^\theta_t)-c(t,X_t,\theta,\alpha^\theta_t),\qquad dt\times d\bar P\text{-a.s.,}
\]
by the Kuratowski--Ryll-Nardzewski measurable selection theorem applied to the non-empty correspondence $A^*_\theta$. Moreover, $Y^{\theta,0,X_0,y,Z}\in\bS_2(\bar P)$ for every $y\in\RR$ by Gr\"onwall's inequality applied to the linear-in-$Y$ dynamics \eqref{fsde}.
\end{remark}

In our framework with information asymmetry, the contract cannot depend on the type $\Theta$, which means that $\frac{Y^{\theta,0,X_0,y_\theta,Z^\theta}_T}{\beta_\theta}$ must be independent of $\theta$. We formulate this constraint as a target problem in the spirit of \cite{soner2002dynamic}. Denote by $G:=\{(\beta_0y,\beta_1y):y\in\RR\}$ the subspace of $\RR^2$ in the direction $(\beta_0,\beta_1)$.

\begin{definition}
Following the terminology of \cite{cvitanic2013dynamics}, we define the \emph{credible set}
\begin{align*}
    \cE(X_0)&:=\bigl\{(y_0,y_1)\in \RR^2: \exists (Z^0,Z^1)\in \cV^2 \text{ s.t.\ } \frac{Y^{0,0,X_0,y_0,Z^0}_T}{\beta_0}=\frac{Y^{1,0,X_0,y_1,Z^1}_T}{\beta_1},\, \text{a.s.}\bigr\}\\
    &=\bigl\{(y_0,y_1)\in \RR^2: \exists (Z^0,Z^1)\in \cV^2 \text{ s.t.\ }(Y^{0,0,X_0,y_0,Z^0}_T,Y^{1,0,X_0,y_1,Z^1}_T) \in G,\, \text{a.s.}\bigr\},\notag
\end{align*}
where the almost sure statements are with respect to $\bar P$. For $(y_0,y_1)\in\cE(X_0)$, we  define the set of admissible control pairs realizing the target condition 
\begin{align}\label{eq:contz}
\cV(y_0,y_1)&:=\bigl\{(Z^0,Z^1)\in \cV^2: \frac{Y^{0,0,X_0,y_0,Z^0}_T}{\beta_0}=\frac{Y^{1,0,X_0,y_1,Z^1}_T}{\beta_1},\, \bar P\text{-a.s.}\bigr\}.
\end{align}
\end{definition}

The common value of $\frac{Y^{0,0,X_0,y_0,Z^0}_T}{\beta_0}=\frac{Y^{1,0,X_0,y_1,Z^1}_T}{\beta_1}$ is the contract (a measurable function of the paths of $X$ but independent of type), and on the event $\{\Theta=\theta\}$, classical optimal control arguments characterize the best response of the agent via \eqref{eq:defAstar}. The term ``credible'' is justified by the fact that $(y_0,y_1)\in\cE(X_0)$ represents the pairs of continuation (or promised) utilities for the two types that can be simultaneously induced by some contract.

To make the description of the credible set explicit, for $(t,x,\theta,\xi)\in [0,T]\times \RR^d\times\{0,1\}\times \cC_a$ we introduce the BSDE
\begin{align}\label{eq:bsdek}
    \cY^{\theta,0,X_0,\xi}_s=\beta(\theta)\xi+\int_s^T\bigl[H^\theta(r,X_r,\cZ^{\theta,0,X_0,\xi}_r)-\kappa(r,X_r) \cY^{\theta,0,X_0,\xi}_r\bigr]dr-\int_s^T (\cZ^{\theta,0,X_0,\xi}_r)^\top dX_r,\, \bar P\text{-a.s.}
\end{align}
The well-posedness of \eqref{eq:bsdek} and the resulting parameterization of $\cE(X_0)$ are provided in the following lemma.

\begin{lemma}\label{lem:bsde_wp}
Under our standing assumptions on $\sigma,\lambda,\kappa,c$ and compactness of $A$, for each $\theta\in\{0,1\}$ and every $\xi\in\cC_a$, the BSDE \eqref{eq:bsdek} admits a unique solution
\[
(\cY^{\theta,0,X_0,\xi},\cZ^{\theta,0,X_0,\xi})\in\bS_{2}(\bar P)\times\HH_{2}(\bar P)
\]
under $(\FF^X,\bar P)$. The solution map is Lipschitz continuous in the terminal condition: there exists a constant $C>0$, depending only on $T$ and on $\|\sigma\|_\infty,\|\sigma^{-1}\|_\infty,\|\lambda\|_\infty,\|\kappa\|_\infty$, such that for all $\xi,\xi'\in\cC_a$,
\[
\|\cY^{\theta,0,X_0,\xi}-\cY^{\theta,0,X_0,\xi'}\|_{\bS_{2}(\bar P)}+\|\cZ^{\theta,0,X_0,\xi}-\cZ^{\theta,0,X_0,\xi'}\|_{\HH_{2}(\bar P)}\ \le\ C\,\|\xi-\xi'\|_{L^{2}(\bar P)}.
\]
Moreover,
\begin{equation}\label{eq:cE_rep}
\cE(X_0)=\bigl\{(\cY^{0,0,X_0,\xi}_0,\cY^{1,0,X_0,\xi}_0):\xi\in \cC_a\bigr\},
\end{equation}
{and $\cE(X_0)$ is a connected set}.
\end{lemma}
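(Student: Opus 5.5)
The plan has three parts: well-posedness and the stability estimate, the representation \eqref{eq:cE_rep}, and connectedness. Each reduces to standard Lipschitz BSDE theory once the equation is rewritten against the Brownian motion $B^{\bar P}$.

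\textbf{Well-posedness and stability.} Since $dX_r=\sigma(r,X_r)dB^{\bar P}_r$, set $\tilde\cZ_r:=\sigma(r,X_r)^\top\cZ_r$. Then \eqref{eq:bsdek} becomes a BSDE driven by $B^{\bar P}$ in the Brownian filtration $\FF^X$; this filtration equals the one generated by $B^{\bar P}$ by strong uniqueness of \eqref{sde:sigma} and invertibility of $\sigma$. The generator is
\[
f(r,y,\tilde z)=H^\theta\bigl(r,X_r,\sigma^{-\top}(r,X_r)\tilde z\bigr)-\kappa(r,X_r)y .
\]
It is Lipschitz in $y$ because $\kappa$ is bounded. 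It is Lipschitz in $\tilde z$ because $H^\theta(r,x,\cdot)$ is a supremum of affine maps with slopes $\sigma(r,x)\lambda(r,x,\theta,\a)$. Hence
\[
|H^\theta(r,x,z)-H^\theta(r,x,z')|\le \|\sigma\|_\infty\|\lambda\|_\infty|z-z'|,
\]
and after the change of variables the constant becomes $\|\lambda\|_\infty$. Moreover $f(r,0,0)=\sup_\a\{-c\}$ is bounded. The terminal value $\beta_\theta\xi$ lies in $L^2(\bar P)$. So the Pardoux--Peng / El Karoui--Peng--Quenez theorem \cite{el1997backward} gives a unique solution $(\cY,\tilde\cZ)\in\bS_2\times\HH_2$, together with the standard a priori estimate on differences of solutions. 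Returning to $\cZ=\sigma^{-\top}\tilde\cZ$ uses $\|\sigma^{-1}\|_\infty$ and yields the stated Lipschitz bound, with $C$ absorbing $\beta_\theta$.

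\textbf{The representation \eqref{eq:cE_rep}.} This is a forward/backward identification.

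For "$\supseteq$": take $\xi\in\cC_a$ and set $y_\theta=\cY^{\theta,0,X_0,\xi}_0$ and $Z^\theta=\cZ^{\theta,0,X_0,\xi}\in\cV$. Reading \eqref{eq:bsdek} forward from time $0$ shows that $\cY^\theta$ solves \eqref{fsde} with these data. The forward equation is linear in $Y$ with Lipschitz coefficients, so its solution is unique, and therefore $Y^{\theta,0,X_0,y_\theta,Z^\theta}_T=\beta_\theta\xi$ for both $\theta$. This is exactly the target condition.

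For "$\subseteq$": take $(y_0,y_1)\in\cE(X_0)$ with witnesses $(Z^0,Z^1)$, and set $\xi:=Y^{0}_T/\beta_0=Y^1_T/\beta_1$. This $\xi$ is $\cF^X_T$-measurable and lies in $L^2$ by Remark \ref{rem:selection}. Then $(Y^\theta,Z^\theta)\in\bS_2\times\HH_2$ solves \eqref{eq:bsdek} with terminal value $\beta_\theta\xi$. By uniqueness, $Y^\theta=\cY^{\theta,0,X_0,\xi}$, so $y_\theta=\cY^{\theta,0,X_0,\xi}_0$.

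\textbf{Connectedness.} The map
\[
\Phi:\cC_a=L^2(\cF^X_T,\bar P)\to\RR^2,\qquad \xi\mapsto(\cY^{0,0,X_0,\xi}_0,\cY^{1,0,X_0,\xi}_0)
\]
is continuous, indeed Lipschitz, by the stability estimate, since $|\cY_0-\cY'_0|\le\|\cY-\cY'\|_{\bS_2}$. The space $\cC_a$ is a vector space, hence path-connected. By \eqref{eq:cE_rep}, $\cE(X_0)=\Phi(\cC_a)$ is the continuous image of a path-connected set, so it is path-connected and in particular connected.

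\textbf{Main obstacle.} The only delicate point is the Lipschitz property of the generator in $z$ when $A$ is merely compact and $\lambda,c$ are only measurable in $(t,x)$. The supremum-of-affine-maps bound above handles the Lipschitz estimate directly. For measurability of $r\mapsto H^\theta(r,X_r,z)$, I would use continuity in $\a$ together with a countable dense subset of $A$.
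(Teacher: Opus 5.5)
Your proposal is correct and follows essentially the same route as the paper. Both arguments rewrite the equation with $\tilde\cZ=\sigma^\top\cZ$ as a Lipschitz Brownian BSDE and invoke El Karoui--Peng--Quenez, prove both inclusions of \eqref{eq:cE_rep} by a forward/backward identification via uniqueness, and get connectedness as the continuous image of the vector space $\cC_a$. The only differences are cosmetic: you obtain measurability of $H^\theta$ from a countable dense subset of $A$ rather than the measurable maximum theorem, you note the sharper $\tilde z$-Lipschitz constant $\|\lambda\|_\infty$, and you correctly flag that $C$ must also absorb $\beta_\theta$, which the paper's statement omits.
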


Lemma~\ref{lem:bsde_wp} shows that, rather than optimizing \eqref{eq:Vp_conditional} and \eqref{eq:Vp_unconditional} over contracts $\xi\in\cC_a$, the principal can equivalently optimize over pairs $(y_0,y_1)\in\cE(X_0)$ and control pairs $(Z^0,Z^1)\in\cV (y_0,y_1)$ satisfying the target condition
\begin{align}\label{targetcondition}\notag
    \frac{Y^{0,0,X_0,y_0,Z^0}_T}{\beta_0}=\frac{Y^{1,0,X_0,y_1,Z^1}_T}{\beta_1},\quad \bar P\text{-a.s.}
\end{align}
This equivalence is useful because for $(y_0,y_1)\in\cE(X_0)$, and control pairs $(Z^0,Z^1)\in\cV (y_0,y_1)$, we can easily solve the optimization problem of each type.
\begin{proposition}\label{prop:principal_target}
Let $(y_0,y_1)\in \cE(X_0)$ and $(Z^0,Z^1)\in \cV (y_0,y_1)$, and set
\[
\xi:=\frac{Y^{0,0,X_0,y_0,Z^0}_T}{\beta_0}=\frac{Y^{1,0,X_0,y_1,Z^1}_T}{\beta_1},\qquad \bar P\text{-a.s.}
\]
Then $\xi\in\cC_a$ and, for each $\theta\in\{0,1\}$,
\[
y_\theta=\sup_{\alpha\in\mathcal{A}^X} J_a(\alpha;\xi,\theta).
\]
Moreover, a pair $(\alpha^0,\alpha^1)\in(\mathcal{A}^X)^2$ belongs to $\mathcal{A}^\ast(\xi)$ if and only if
\[
\a^\theta_t\in A_\theta^*(t,X_t,Z^\theta_t),\qquad dt\times d\bar P\text{-a.s.,}\quad \theta\in\{0,1\}.
\]
Consequently, the principal's values admit the representations
\begin{align}\label{eq:Vpc_target}
V_{p,c}&=\sup\bigl\{\EE^{\PP^{\alpha}}[U_p(\Gamma(X_T)-\beta_0^{-1}Y^{0,0,X_0,y_0,Z^0}_T)]\,:\ (y_0,y_1)\in\cE(X_0),\ y_\theta\ge R_\theta,\ \theta\in\{0,1\};\notag\\
&\qquad\qquad\qquad\ (Z^0,Z^1)\in\cV (y_0,y_1),\ \alpha^\theta_t\in A_\theta^*(t,X_t,Z^\theta_t),\ \theta\in\{0,1\}\bigr\},\\[0.5em]
\label{eq:Vpuc_target}
V_{p,uc}&=\sup\bigl\{\EE^{\PP^{\alpha}}[U_p(\Gamma(X_T)-\beta_0^{-1}Y^{0,0,X_0,y_0,Z^0}_T)]\,:\ (y_0,y_1)\in\cE(X_0),\ p_0 y_0+(1-p_0)y_1\ge R;\notag\\
&\qquad\qquad\qquad\ (Z^0,Z^1)\in\cV (y_0,y_1),\ \alpha^\theta_t\in A_\theta^*(t,X_t,Z^\theta_t),\ \theta\in\{0,1\}\bigr\},
\end{align}
where $\beta_0^{-1}Y^{0,0,X_0,y_0,Z^0}_T=\beta_1^{-1}Y^{1,0,X_0,y_1,Z^1}_T$, $\alpha\in\mathcal{A}$ is associated to $(\alpha^0,\alpha^1)$ via \eqref{eq:control_bijection}, and $\PP^\alpha$ is defined as in \eqref{eq:disintegration}.
\end{proposition}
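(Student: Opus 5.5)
We prove the statement type by type, with a Sannikov-type verification argument as in \cite[Proposition~3.3]{cvitanic2018dynamic}, and then assemble the representations of $V_{p,c}$ and $V_{p,uc}$.

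\emph{Integrability of $\xi$.} Fix $(y_0,y_1)\in\cE(X_0)$ and $(Z^0,Z^1)\in\cV(y_0,y_1)$. By Remark~\ref{rem:selection}, $Y^{\theta}:=Y^{\theta,0,X_0,y_\theta,Z^\theta}\in\bS_2(\bar P)$. Hence $\xi=\beta_\theta^{-1}Y^\theta_T$ is $\cF^X_T$-measurable and square integrable under $\bar P$, so $\xi\in\cC_a$. Note that $\xi$ does not depend on the choice of $\theta$, by the target condition \eqref{eq:contz}.

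\emph{Value of each type.} Fix $\theta$ and $\alpha\in\cA^X$, and set $\Lambda_s:=e^{-\int_0^s\kappa(r,X_r)dr}$. Applying It\^o's formula to $\Lambda_sY^\theta_s$ with \eqref{fsde}, and substituting $dX_r=\sigma(\lambda^\theta_r\,dr+dB^{P^{\alpha,\theta}}_r)$ from \eqref{eq:X_dynamics_small} (where $\lambda^\theta_r$ abbreviates $\lambda(r,X_r,\theta,\alpha_r)$), we get
\[
\Lambda_T\beta_\theta\xi-\int_0^T\Lambda_sc(s,X_s,\theta,\alpha_s)ds=y_\theta-\int_0^T\Lambda_s\bigl[H^\theta(s,X_s,Z^\theta_s)-h^\theta(s,X_s,Z^\theta_s,\alpha_s)\bigr]ds+\int_0^T\Lambda_s(Z^\theta_s)^\top\sigma\, dB^{P^{\alpha,\theta}}_s.
\]
Here $h^\theta(t,x,z,a):=z^\top\sigma\lambda-c$ denotes the integrand in \eqref{def:hk}, so the bracket is nonnegative by definition of $H^\theta$.

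\emph{Main obstacle: the stochastic integral.} We must show that the stochastic integral has zero $P^{\alpha,\theta}$-expectation. Since $Z^\theta\in\HH_2(\bar P)$ only, we use the fact that $dP^{\alpha,\theta}/d\bar P$ has moments of all orders (boundedness of $\lambda$). H\"older's inequality together with BDG then gives that the integral is a true $P^{\alpha,\theta}$-martingale under a slight loss of integrability, namely an $L^{q}$ bound with $q<2$. Should this fall short, we would localize with stopping times $\tau_n$ and pass to the limit, using $Y^\theta\in\bS_2(\bar P)$ together with the same density bounds to obtain uniform integrability.

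\emph{Conclusion for each type.} Taking $P^{\alpha,\theta}$-expectations yields $J_a(\alpha;\xi,\theta)=y_\theta-\EE^{P^{\alpha,\theta}}[\int_0^T\Lambda_s(H^\theta-h^\theta)ds]\le y_\theta$. Equality holds for the measurable selection of Remark~\ref{rem:selection}, so $y_\theta=\sup_{\alpha}J_a(\alpha;\xi,\theta)$. Moreover, $\alpha$ is optimal if and only if the nonnegative integrand vanishes $dt\times dP^{\alpha,\theta}$-a.e. Since $P^{\alpha,\theta}$ and $\bar P$ are equivalent (the density is a strictly positive exponential), this is equivalent to $\alpha_t\in A^*_\theta(t,X_t,Z^\theta_t)$, $dt\times d\bar P$-a.s. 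Applied for each $\theta$, this characterizes $\cA^*(\xi)$.

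\emph{Representations of $V_{p,c}$ and $V_{p,uc}$.} These follow from the equivalence between contracts and credible pairs.
\begin{itemize}
\item \emph{From contracts to pairs.} Given $\xi\in\cC_a$, Lemma~\ref{lem:bsde_wp} provides $(\cY^{\theta},\cZ^{\theta})$ solving \eqref{eq:bsdek}. Rewriting the BSDE forward shows $\cY^\theta=Y^{\theta,0,X_0,\cY^\theta_0,\cZ^\theta}$. Thus $(\cY^0_0,\cY^1_0)\in\cE(X_0)$, $(\cZ^0,\cZ^1)\in\cV(\cY^0_0,\cY^1_0)$, and $\xi=\beta_0^{-1}Y^0_T$.
\item \emph{From pairs to contracts.} Conversely, every admissible $(y_0,y_1,Z^0,Z^1)$ defines a contract $\xi\in\cC_a$ by the first step.
\end{itemize}
By the value identity just proved, the condition $\xi\in\mathcal{CR}$ is equivalent to $y_\theta\ge R_\theta$ for $\theta\in\{0,1\}$. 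Likewise, $\xi\in\mathcal{UCR}$ is equivalent to $p_0y_0+(1-p_0)y_1\ge R$. Finally, the inner supremum over $\cA^*(\xi)$ becomes the supremum over selections $\alpha^\theta_t\in A^*_\theta(t,X_t,Z^\theta_t)$, which yields \eqref{eq:Vpc_target} and \eqref{eq:Vpuc_target}.
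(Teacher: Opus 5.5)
Your proposal is correct and follows the paper's route. The paper obtains the value identity and the characterization of $\mathcal{A}^\ast(\xi)$ by applying \cite[Proposition~3.3]{cvitanic2018dynamic} type by type, and derives the representations of $V_{p,c}$ and $V_{p,uc}$ from the parametrization \eqref{eq:cE_rep} of Lemma~\ref{lem:bsde_wp}, which is what you do, except that you write out the cited verification argument. Your fallback to localization is unnecessary: $Z^\theta\in\HH_2(\bar P)$ and the $L^2(\bar P)$-bound on $dP^{\alpha,\theta}/d\bar P$ already make the square root of the quadratic variation $P^{\alpha,\theta}$-integrable, so BDG with exponent one gives a true martingale.
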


The key observation is that a single $\cF^X_T$-measurable contract $\xi$ admits the two representations $$\xi=\beta_0^{-1}Y^{0,0,X_0,y_0,Z^0}_T=\beta_1^{-1}Y^{1,0,X_0,y_1,Z^1}_T.$$ For each $\theta\in\{0,1\}$, the representation $\xi=\beta_\theta^{-1}Y^{\theta,0,X_0,y_\theta,Z^\theta}_T$ places the agent's type-$\theta$ problem directly in the setting of \cite[Proposition~3.3]{cvitanic2018dynamic}, and the first two statements follow by applying that proposition $\theta$-by-$\theta$. The representations of the principal's values then follow from the parametrization \eqref{eq:cE_rep} in Lemma~\ref{lem:bsde_wp}. We therefore omit the proof.

\subsection{Description of the credible set}

Our methodology consists in keeping track of all responses of agents of different types. To write the principal's problem as a standard control problem, we need a more explicit description of $\cE(X_0)$ together with the admissible control pairs $(Z^0,Z^1)\in\cV (y_0,y_1)$.

The forward dynamics \eqref{fsde} of $Y^\theta$ are linear in $Y^\theta$ with a coefficient $-\kappa$ that is independent of $\theta$. It follows that, for every $c\in\RR$, $(y_0,y_1)\in\cE(X_0)$ if and only if $(y_0+\beta_0 c,y_1+\beta_1 c)\in\cE(X_0)$. To characterize $\cE(X_0)$, it therefore suffices to study the projection of the set on the orthogonal to $(\beta_0,\beta_1)$ and we introcuce the process
\[
\Delta_s:=\Delta^{0,X_0,\delta,Z^0,Z^1}_s:=\beta_1 Y^0_s-\beta_0 Y^1_s,
\]
whose dynamics under $(\FF^X,\bar P)$ is
\begin{align}
    \label{fsdediff}\notag
    \Delta_s&=\delta-\int_0^s \bigl[\beta_1 H^0(r,X_r,Z_r^0)-\beta_0 H^1(r,X_r,Z_r^1)-\kappa(r,X_r)\Delta_r\bigr]dr\\
    &\quad+\int_0^s (\beta_1 Z^0_r-\beta_0 Z^1_r)^\top dX_r.\notag
\end{align}
More generally, for $(t,x)\in[0,T]\times\RR^d$,  and $\delta\in\RR$, we define $\Delta_s=\Delta^{t,x,\delta,Z^0,Z^1}_s$ on $\{X_t=x\}$ for $s\in[t,T]$ by
\begin{align*}
    \Delta_s&=\delta-\int_t^s \bigl[\beta_1 H^0(r,X_r,Z_r^0)-\beta_0H^1(r,X_r,Z_r^1)-\kappa(r,X_r)\Delta_r\bigr]dr\\
    &\quad+\int_t^s (\beta_1 Z^0_r-\beta_0 Z^1_r)^\top dX_r,\,\bar P\text{-a.s.}\notag
\end{align*}
By Lemma~\ref{lem:bsde_wp}, $\cE(X_0)$ is connected, so the set
\[
\bigl\{\delta\in\RR: \exists (Z^0,Z^1)\in(\cV(t,x))^2 \text{ s.t.\ }\Delta^{t,x,\delta,Z^0,Z^1}_T=0 \text{ on }\{X_t=x\}\bigr\}
\]
is an interval of $\RR$, which we now characterize.

Following \cite{bouchard2010stochastic,soner2002stochastic}, we expect the smallest and largest values of $\delta$ in this interval to be $\underline W(t,x)$ and $\overline W(t,x)$, defined as the solutions to
\begin{align}\label{eq:pdeuwk}
    -\pa_t \underline W-\tfrac{1}{2}\mathrm{tr}(\sigma^2(t,x)\pa_{xx}\underline W)-\underline H(t,x,\underline W,\pa_x\underline W)&=0,\\
    -\pa_t \overline W-\tfrac{1}{2}\mathrm{tr}(\sigma^2(t,x)\pa_{xx}\overline W)-\overline H(t,x,\overline W,\pa_x\overline W)&=0,\label{eq:pdeowk}\\
    \underline W(T,x)=\overline W(T,x)&=0,\notag
\end{align}
where, for $(t,x,y,z)\in[0,T]\times\RR^d\times\RR\times\RR^d$,
\begin{align}
\underline H(t,x,y,z)&:=-\kappa(t,x)\,y+\inf_{z_1\in\RR^d}\left(\beta_1 H^0\bigl(t,x,\frac{\beta_0 z_1+z}{\beta_1}\bigr)-\beta_0 H^1\bigl(t,x,z_1\bigr)\right)\label{low_H},\\
\overline H(t,x,y,z)&:=-\kappa(t,x)\,y+\sup_{z_1\in\RR^d}\left(\beta_1 H^0\bigl(t,x,\frac{\beta_0 z_1+z}{\beta_1}\bigr)-\beta_0 H^1\bigl(t,x,z_1\bigr)\right).\label{high_H}
\end{align}
When $\underline W$ and $\overline W$ are smooth, setting
\[
(\underline Y_s,\underline Z_s):=(\underline W(s,X_s),\pa_x\underline W(s,X_s))\quad\text{and}\quad (\overline Y_s,\overline Z_s):=(\overline W(s,X_s),\pa_x\overline W(s,X_s)),
\]
It\^o's formula shows that these processes satisfy the BSDEs
\begin{align}
    \label{eq:bsdeup}
    \overline Y_s&=\int_s^T\overline H\bigl(r,X_r,\overline Y_r,\overline Z_r\bigr)\,dr-\int_s^T \overline Z_r^\top\,dX_r,\\
    \label{eq:bsdedown}
    \underline Y_s&=\int_s^T\underline H\bigl(r,X_r,\underline Y_r,\underline Z_r\bigr)\,dr-\int_s^T \underline Z_r^\top\,dX_r,
\end{align}
under $(\FF^X,\bar P)$.

To give a self-contained proof of the characterization of $\cE(X_0)$, we make the following assumption on the solutions of \eqref{eq:pdeuwk}--\eqref{eq:pdeowk}.

\begin{ass}[Boundary dynamics]\label{ass:boundary}
\begin{enumerate}
    \item[(i)] The PDEs \eqref{eq:pdeuwk} and \eqref{eq:pdeowk} admit unique $C^{1,2}([0,T)\times\RR^d)\cap C^0([0,T]\times\RR^d)$ solutions, and the sets
    \begin{align}\label{eq:defcv}
        \underline\cV(t,x,p)&:=\argmin_{z_1\in\RR^d}\left(\beta_1 H^0\bigl(t,x,\frac{\beta_0 z_1+z}{\beta_1}\bigr)-\beta_0 H^1\bigl(t,x,z_1\bigr)\right),\\
        \overline\cV(t,x,p)&:=\argmax_{z_1\in\RR^d}\left(\beta_1 H^0\bigl(t,x,\frac{\beta_0 z_1+z}{\beta_1}\bigr)-\beta_0 H^1\bigl(t,x,z_1\bigr)\right)\notag
    \end{align}
    are non-empty for all $(t,x,p)\in[0,T)\times\RR^d\times\RR^d$.
    \item[(ii)] For any $Z^1\in\cV$ such that $Z^1_s\in\underline\cV(s,X_s,\underline Z_s)$ (resp.\ $Z^1_s\in\overline\cV(s,X_s,\overline Z_s)$) holds $dt\times d\bar P$-a.s., we have
    \[
   \frac{\beta_0 Z^1+\underline Z}{\beta_1}\in\cV(X_0)\quad\text{(resp.\ }\frac{\beta_0 Z^1+\overline Z}{\beta_1}\in\cV(X_0)\text{).}
    \]
    \item[(iii)] The BSDEs \eqref{eq:bsdeup}--\eqref{eq:bsdedown} satisfy the strict comparison principle: if $(Y,Z)$ satisfies
    \[
    Y_s=\int_s^T H\bigl(r,X_r,Y_r,Z_r\bigr)\,dr-\int_s^T Z_r^\top\,dX_r
    \]
    with $\underline H\le H\le\overline H$, then $\underline Y_0\le Y_0\le\overline Y_0$. Moreover, if $\underline Y_0=Y_0$ (resp.\ $Y_0=\overline Y_0$), then $\underline Y_t=Y_t$ for all $t\in[0,T]$ (resp.\ $\overline Y_t=Y_t$ for all $t\in[0,T]$) and
    \[
    \underline H(t,X_t,\underline Y_t,\underline Z_t)=H(t,X_t,\underline Y_t,\underline Z_t)\quad\text{(resp.\ }\overline H(t,X_t,\overline Y_t,\overline Z_t)=H(t,X_t,\overline Y_t,\overline Z_t)\text{).}
    \]
\end{enumerate}
\end{ass}

\begin{remark}[Interpretation of the finiteness of $\underline W$ and $\overline W$]
The finiteness of the solutions to \eqref{eq:pdeuwk} and \eqref{eq:pdeowk} admits a concrete economic interpretation. The function $\underline W(t,x)$ (resp.\ $\overline W(t,x)$) is the smallest (resp.\ largest) gap $y_0-y_1$ in the agents' continuation utilities that is compatible with \emph{simultaneously} implementing both types under a \emph{single} contract at state $(t,x)$. Equivalently, $\underline W(t,x)$ is the threshold below which no admissible contract can incentivize both types at once, and $\underline W(t,x)$ thus measures a \emph{feasibility boundary} for implementation under a single contract.

If $\underline W(t,x)=-\infty$ there is no finite utility gap ruling out simultaneous implementation: for every pair of target continuation utilities $(y_0,y_1)\in\RR^2$ small enough, one can find an admissible contract $\xi\in\cC_a$ (possibly depending on $(t,x,y_0,y_1)$) such that, when offered at $(t,x)$, type $\theta$ attains continuation utility $y_\theta$ for $\theta\in\{0,1\}$ and $\xi$ induces the intended incentive-compatible actions for both types. In this case there is no binding lower feasibility constraint and the principal can tailor a single admissible contract to deliver arbitrary finite promised utilities. 
\end{remark}

We are now ready to characterize $\cE(X_0)$ and the admissible control pairs on its boundary.

\begin{theorem}\label{thm:domain}
Under Assumption~\ref{ass:boundary}, for any $X_0\in\RR^d$ the following two statements hold. 
\begin{enumerate}
    \item[(a)] $\cE(X_0)=\bigl\{(y_0,y_1)\in\RR^2: \underline W(0,X_0)\le \beta_1 y_0-\beta_0 y_1\le\overline W(0,X_0)\bigr\}$. In particular, $\cE(X_0)$ is closed.
    \item[(b)] For any $(y_0,y_1)\in\cE(X_0)$ and $(Z^0,Z^1)\in(\cV(X_0))^2$, with $Y^{\theta,0,X_0,y_\theta,Z^\theta}$ given by \eqref{fsde} for $\theta\in\{0,1\}$, the following are equivalent:
    \begin{enumerate}
        \item[(i)] For all $s\in[0,T] $ we have $\underline W(s,X_s)\le \beta_1Y^{0,0,X_0,y_0,Z^0}_s-\beta_0Y^{1,0,X_0,y_1,Z^1}_s\le\overline W(s,X_s)$.
        \item[(ii)] For all $s\in[0,T]$ on the event $\{\underline W(s,X_s)=\beta_1Y^{0,0,X_0,y_0,Z^0}_s-\beta_0Y^{1,0,X_0,y_1,Z^1}_s\}$ we have for $r\in[s,T]$
        \[
        Z^1_r\in\underline\cV(r,X_r,\underline Z_r),\,Z^0_r=\frac{\beta_0 Z^1_r+\underline Z_r}{\beta_1}\mbox{ and }\underline W(r,X_r)=\beta_1Y^{0,0,X_0,y_0,Z^0}_r-\beta_0Y^{1,0,X_0,y_1,Z^1}_r.
        \]
        (resp. on the event $\{\overline W(s,X_s)=\beta_1Y^{0,0,X_0,y_0,Z^0}_s-\beta_0Y^{1,0,X_0,y_1,Z^1}_s\}$  we have for $r\in[s,T]$
      $
        Z^1_r\in\overline\cV(r,X_r,\overline Z_r),\,Z^0_r=\frac{\beta_0 Z^1+\overline Z_r}{\beta_1}\mbox{ and }\overline W(r,X_r)=\beta_1Y^{0,0,X_0,y_0,Z^0}_r-\beta_0Y^{1,0,X_0,y_1,Z^1}_r
      $).

        \item[(iii)] There exists $\xi\in\cC_a$ such that $(\cY^{\theta,0,X_0,\xi},\cZ^{\theta,0,X_0,\xi})=(Y^{\theta,0,X_0,y_\theta,Z^\theta},Z^\theta)$ for $\theta\in\{0,1\}$.
        \item[(iv)] $(Z^0,Z^1)\in\cV (y_0,y_1)$.
    \end{enumerate}
\end{enumerate}
\end{theorem}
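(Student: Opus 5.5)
The proof rests on one observation: the gap process is itself a BSDE solution squeezed between two extremal BSDEs. Fix $\xi\in\cC_a$ and use the representation \eqref{eq:cE_rep} of Lemma~\ref{lem:bsde_wp}. Set
\[
D_s:=\beta_1\cY^{0,0,X_0,\xi}_s-\beta_0\cY^{1,0,X_0,\xi}_s ,
\qquad D_T=\beta_1\beta_0\xi-\beta_0\beta_1\xi=0 .
\]
Let $Z:=\beta_1\cZ^{0}-\beta_0\cZ^1$. Then $D$ solves
\[
D_s=\int_s^T H(r,X_r,D_r,Z_r)\,dr-\int_s^T Z_r^\top dX_r ,
\]
with the driver
\[
H(r,x,y,z):=-\kappa y+\beta_1H^0\!\left(r,x,\tfrac{\beta_0\cZ^1_r+z}{\beta_1}\right)-\beta_0H^1(r,x,\cZ^1_r).
\]
Since $\cZ^1_r$ is one admissible choice of $z_1$ in \eqref{low_H}--\eqref{high_H}, we have $\underline H\le H\le\overline H$ along the path. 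Assumption~\ref{ass:boundary}(iii) then gives $\underline W(0,X_0)=\underline Y_0\le D_0\le\overline Y_0=\overline W(0,X_0)$, which is the inclusion "$\subset$" in (a). The same argument started at time $s$ (conditioning, Markov property, and $\underline Y_s=\underline W(s,X_s)$) gives $\underline W(s,X_s)\le D_s\le\overline W(s,X_s)$ for all $s$.

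For "$\supset$" I construct the two extremal contracts explicitly. By Assumption~\ref{ass:boundary}(i) and a measurable selection as in Remark~\ref{rem:selection}, choose $Z^1_s\in\underline\cV(s,X_s,\underline Z_s)$ and $Z^0:=(\beta_0Z^1+\underline Z)/\beta_1$; Assumption~\ref{ass:boundary}(ii) ensures both lie in $\cV$. With $y_1$ arbitrary and $y_0$ defined by $\beta_1y_0-\beta_0y_1=\underline W(0,X_0)$, the gap $\Delta$ solves the same linear forward SDE as $\underline Y$ from the same initial value. Uniqueness therefore forces $\Delta_s=\underline W(s,X_s)$, so $\Delta_T=0$. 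Hence $Y^1_T/\beta_1=Y^0_T/\beta_0$, and the lower boundary line belongs to $\cE(X_0)$; I also need $\xi\in L^2$, which follows from Remark~\ref{rem:selection}. The upper line is handled symmetrically. Lemma~\ref{lem:bsde_wp} says $\cE(X_0)$ is connected, and $\cE(X_0)$ is invariant under $(y_0,y_1)\mapsto(y_0+\beta_0c,y_1+\beta_1c)$. Its image under $\beta_1y_0-\beta_0y_1$ is therefore an interval containing both endpoints, which gives the full strip and closedness.

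For (b) I prove the chain (iv)$\Rightarrow$(iii)$\Rightarrow$(i)$\Rightarrow$(ii)$\Rightarrow$(iv).

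(iv)$\Rightarrow$(iii): take $\xi$ equal to the common terminal value. Then $(Y^\theta,Z^\theta)$ solves \eqref{eq:bsdek}, and BSDE uniqueness from Lemma~\ref{lem:bsde_wp} identifies them with $(\cY^\theta,\cZ^\theta)$.

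(iii)$\Rightarrow$(i): this is exactly the time-$s$ comparison from the first paragraph.

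(i)$\Rightarrow$(ii): this is the key step. On the event $\{\Delta_s=\underline W(s,X_s)\}$, the process $\Delta-\underline Y$ is nonnegative on $[s,T]$ and vanishes at $s$. Its dynamics are
\[
d(\Delta-\underline Y)_r=-\big[H-\underline H\big]\,dr+\kappa(\Delta-\underline Y)_r\,dr+(\beta_1Z^0-\beta_0Z^1-\underline Z)_r^\top dX_r ,
\]
with $H-\underline H\ge 0$ evaluated at $Z_r:=\beta_1Z^0_r-\beta_0Z^1_r$. After discounting by $e^{-\int\kappa}$, the process is a nonnegative supermartingale started at $0$ on that event, so it stays at $0$. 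Its drift and martingale part must then both vanish. That yields $\beta_1Z^0-\beta_0Z^1=\underline Z$ and equality in the infimum, i.e. $Z^1_r\in\underline\cV(r,X_r,\underline Z_r)$. The upper case is symmetric. The delicate points are handling the event-conditioning (apply the argument with stopping times, the first hitting time of the boundary) and checking integrability, which should follow from the $\HH_2$ bounds.

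(ii)$\Rightarrow$(iv): I would argue that under (ii) the process $\Delta$ cannot cross the boundary. The plan is to show (ii) implies (i), and then that (i) implies $\Delta_T=0$, using $\underline W(T,\cdot)=\overline W(T,\cdot)=0$ so that $\underline W(T,X_T)\le\Delta_T\le\overline W(T,X_T)$. For (ii) implies (i), at the hitting time $\tau$ of either boundary, (ii) forces $\Delta$ to follow the same SDE as the boundary process, so $\Delta=\underline W(\cdot,X)$ on $[\tau,T]$ by pathwise uniqueness of the linear SDE. Continuity then prevents an exit without first touching the boundary, since $(y_0,y_1)\in\cE(X_0)$ means the process starts inside the strip.

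I expect the main obstacle to be the (i)$\Rightarrow$(ii) stickiness argument. It requires extracting pointwise equality of drivers and diffusion coefficients from a nonnegative process vanishing at a random time, and Assumption~\ref{ass:boundary}(iii) is designed for exactly this. My first attempt will be to apply (iii) to the shifted BSDE started at the stopping time $\tau$.
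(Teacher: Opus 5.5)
Your proposal follows essentially the same route as the paper: comparison via Assumption~\ref{ass:boundary}(iii), plus the explicit extremal controls and the connectivity of $\cE(X_0)$, for (a); BSDE uniqueness to pass from (iv) to (iii); the time-shifted comparison for (i); strict comparison for the stickiness step (i)$\Rightarrow$(ii); and the hitting-time and forward-uniqueness argument for (ii)$\Rightarrow$(i). The one slip is in your alternative supermartingale version of (i)$\Rightarrow$(ii): the drift of $\Delta-\underline Y$ also contains the sign-indefinite term $\underline H(r,X_r,\Delta_r,Z_r)-\underline H(r,X_r,\Delta_r,\underline Z_r)$, so the discounted difference is a supermartingale only after you write this Lipschitz term as $\gamma_r^\top(Z_r-\underline Z_r)$ with bounded $\gamma$ and change measure by Girsanov; your stated first attempt, applying Assumption~\ref{ass:boundary}(iii) to the BSDE restarted at $s$, is exactly the paper's argument and needs no such step.
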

\begin{remark}
\begin{enumerate}
\item[(i)]The first item of the Theorem is a characterization for $(y_0,y_1)\in\cE(X_0)$ whereas the second item is a characterization for $(Z^0,Z^1)\in\cV (y_0,y_1)$.
    \item[(ii)] Theorem~\ref{thm:domain} can be proven under slightly weaker hypotheses formulated directly on the BSDEs \eqref{eq:bsdeup}--\eqref{eq:bsdedown}.

    \item[(iii)] The existence of smooth solutions to \eqref{eq:pdeuwk}--\eqref{eq:pdeowk} is stronger than the viscosity-based domain description of 
    \cite{soner2002dynamic,soner2002stochastic,bouchard2010stochastic}. The additional regularity is used here not to characterize $\cE(X_0)$ itself, but to define the gradient processes $\underline Z$ and $\overline Z$ that enter the sets $\underline\cV(t,X_t,\underline Z_t)$ and $\overline\cV(t,X_t,\overline Z_t)$, which are needed for the optimal contract problem.
    \item[(iv)] Part (b) of Theorem~\ref{thm:domain} says that whenever $\beta_1Y^0_s-\beta_0Y^1_s$ touches the boundary of $\cE(X_0)$, the controls must match the optimizers in $\underline\cV$ or $\overline\cV$, which keep $\beta_1Y^0-\beta_0Y^1$ on the boundary until the terminal time. Among all admissible pairs $(Z^0,Z^1)$, this amounts to matching the diffusion coefficient of $\beta_1Y^0-\beta_0Y^1$ with that of $\underline W(\cdot,X_\cdot)$ or $\overline W(\cdot,X_\cdot)$; otherwise, $\beta_1Y^0_t-\beta_0Y^1_t-\overline W(t,X_t)$ would change sign due to Brownian noise.
\end{enumerate}
\end{remark}

\subsection{Interpretation of Theorem \ref{thm:domain}}\label{ss.inter}

In the benchmark setting without adverse selection, i.e., when $\mu$ is a Dirac mass and there is a single type, one of the main contributions of \cite{sannikov2008continuous,cvitanic2018dynamic} is to show that, instead of optimizing over $\xi$ in \eqref{eq:Vp_conditional}--\eqref{eq:Vp_unconditional}, the principal can optimize over the initial value $y$ and a control process $Z$. The process $Z$ is required only to satisfy an integrability condition. The principal then considers the family of contracts $\{\xi=\beta_0^{-1}Y^{0,X_0,y,Z}_T:y\in\RR,\,Z\in\cV\}$, which leads to a stochastic control problem for the principal in which the control $Z$ carries no additional constraints and the state is $(X_t,Y_t)$.

Theorem~\ref{thm:domain} extends this principle to the adverse-selection setting. The equivalence (ii)$\Leftrightarrow$(iii) replaces the contract $\xi$ by the pair of initial promises $(y_0,y_1)\in\RR^2$ and the control pair $(Z^0,Z^1)\in\cV (y_0,y_1)$, while (i) pins down the state constraint on $\beta_1Y^0-\beta_0Y^1$ that emerges from the information asymmetry  (iii) shows that the constraint on $Z^0,Z^1$ is vacuous in the interior of the credible set but binding on its boundary.

As illustrated in Figure \ref{fig:credible-set}, the forward processes $Y^{\theta,0,X_0,y_\theta,Z^\theta}$ are continuous, so at any time $s$ at which the promise gap lies strictly inside the feasible band,
\[
\underline W(t,X_t)<\beta_1 Y^{0,0,X_0,y_0,Z^0}_t-\beta_0 Y^{1,0,X_0,y_1,Z^1}_t<\overline W(t,X_t),
\]
there is no instantaneous restriction on the current values of $(Z^0_t,Z^1_t)$ beyond integrability. Consequently, the Hamiltonian for the principal's control problem will be maximized over $(z^0,z^1)\in\RR^d\times\RR^d$.

By contrast, when the promise gap hits the boundary, the set of admissible values collapses. More precisely, if
\[
\beta_1 Y^{0,0,X_0,y_0,Z^0}_t-\beta_0 Y^{1,0,X_0,y_1,Z^1}_t=\underline W(t,X_t),
\]
then necessarily
\[
Z^1_t\in\underline\cV(t,X_t,\underline Z_t)\qquad\text{and}\qquad \frac{\beta_0 Z^1_t+\underline Z_t}{\beta_1}.
\]
Similarly, if
\[
\beta_1Y^{0,0,X_0,y_0,Z^0}_t-\beta_0Y^{1,0,X_0,y_1,Z^1}_t=\overline W(t,X_t),
\]
then necessarily
\[
Z^1_t\in\overline\cV(t,X_t,\overline Z_t)\qquad\text{and}\qquad \frac{\beta_0 Z^1_t+\overline Z_t}{\beta_1}.
\]
In other words, the interior of the domain corresponds to a ``free'' choice of the current controls, whereas on the boundary, the controls must satisfy a compatibility condition encoded by $\underline\cV$ or $\overline\cV$ together with the gradient correction linking $Z^0$ and $Z^1$. This observation will be used to derive a dynamic characterization of the principal's value function.

\begin{figure}[ht]
\centering
\begin{tikzpicture}[scale=0.85,>=Latex]
  \draw[->] (-0.2,0)--(12,0) node[right,font=\small]{$y_1$};
  \draw[->] (0,-2)--(0,7.5) node[above,font=\small]{$y_0$};
  
  \fill[pattern=north east lines, pattern color=black!25]
    (0,-1)--(10,4)--(10,7)--(0,2)--cycle;
  \draw[thick] (0,2)--(10,7);
  \draw[thick,dashed] (0,-1)--(10,4);
  \node at (2,3.5) [rotate=26.565,font=\scriptsize,fill=white,inner sep=1pt]
    {$\beta_1 Y^0_t-\beta_0 Y^1_t=\overline{W}(t,X_t)$};
  \node at (5.2,1.2) [rotate=26.565,font=\scriptsize,fill=white,inner sep=1pt]
    {$\beta_1 Y^0_t-\beta_0 Y^1_t=\underline{W}(t,X_t)$};
  \node[font=\scriptsize,align=left,anchor=west] at (3.5,3)
    {no condition on};
  \node[font=\scriptsize,align=left,anchor=west] at (3.75,2.5)
    {$(Z^0_t,Z^1_t)$};
  \draw[->, thick] (7.5, 4.5) -- (9, 5.25);
  \node[font=\scriptsize, anchor=west] at (8, 4.5) {$(\beta_0,\beta_1)$};
  \fill (6,5) circle (1.8pt);
  \draw[->,shorten >=2pt] (4.2,6.4) to[out=-30,in=150] (5.92,5.08);
  \node[font=\scriptsize,align=left,anchor=west] at (3.2,7.7) {upper hit:};
  \node[font=\scriptsize,align=left,anchor=west] at (3.2,7.2)
    {$Z^1_t\!\in\!\overline{\mathcal{V}}(t,X_t,\overline{Z}_t)$};
  \node[font=\scriptsize,align=left,anchor=west] at (3.2,6.7)
    {$Z^0_t=\frac{\beta_0 Z^1_t+\overline{Z}_t}{\beta_1}$};
  \draw[fill=white,thick] (8,3) circle (2.0pt);
  \draw[->,shorten >=2pt] (9.2,2) to[out=150,in=-30] (8,2.94);
  \node[font=\scriptsize,align=left,anchor=west] at (7.5,2) {lower hit:};
  \node[font=\scriptsize,align=left,anchor=west] at (7.5,1.5)
    {$Z^1_t\!\in\!\underline{\mathcal{V}}(t,X_t,\underline{Z}_t)$};
  \node[font=\scriptsize,align=left,anchor=west] at (7.5,1)
    {$Z^0_t=\frac{\beta_0 Z^1_t+\underline{Z}_t}{\beta_1}$};
\end{tikzpicture}
\caption{The credible set $\{(y_0,y_1):\underline{W}(t,X_t)\le\beta_1 y_0-\beta_0 y_1\le\overline{W}(t,X_t)\}$ 
for general $\beta_0, \beta_1 > 0$. The constraint on $\beta_1 y_0-\beta_0 y_1$ defines a strip unbounded in the 
direction $(\beta_0,\beta_1)$ by additive invariance. The boundaries have slope $\beta_0/\beta_1$ (here shown for 
$\beta_0=1, \beta_1=2$, giving slope $1/2$). Inside the strip, $(Z^0_t,Z^1_t)$ is unconstrained; on the boundaries, 
the matching conditions via $\underline{\mathcal{V}}, \overline{\mathcal{V}}$ as $Z^0_t=\frac{\beta_0 Z^1_t \pm Z_t}{\beta_1}$ apply.}
\label{fig:credible-set}
\end{figure}

\section{Reduction to a state constraint optimal control problem}\label{s:statec}

Theorem~\ref{thm:domain} exhibits the family of control pairs $(Z^0,Z^1)\in\cV (y_0,y_1)$ as admissible controls for an optimal control problem that we now formalize. Fix $(y_0,y_1)\in\cE(X_0)$ and $(Z^0,Z^1)\in\cV (y_0,y_1)$, and set
$\xi:=\beta^{-1}_0Y^{0,0,X_0,y_0,Z^0}_T=\beta_1^{-1}Y^{1,0,X_0,y_1,Z^1}_T$. By Proposition~\ref{prop:principal_target}, any $(\alpha^0,\alpha^1)\in\mathcal{A}^\ast(\xi)$ satisfies $\alpha^\theta_t\in A^*_\theta(t,X_t,Z^\theta_t)$, $dt\times d\bar P$-a.s.

A key advantage of our stochastic target formulation --- $Y^0$ and $Y^1$ are controlled separately with a terminal target constraint --- is that it yields an explicit description of the $\cF^X_t$-conditional law of $\Theta$ under $\PP^\alpha$, where $\alpha\in\mathcal{A}$ is associated to $(\alpha^0,\alpha^1)$ via \eqref{eq:control_bijection}. Moreover, a crucial simplification in \eqref{eq:Vpc_target}--\eqref{eq:Vpuc_target} is that the integrand $U_p( \Gamma (X_T)-\xi)$ is $\cF^X_T$-measurable, so the expectation $\EE^{\PP^\alpha}[U_p( \Gamma (X_T)-\xi)]$ depends on $\PP^\alpha$ only through its $X$-marginal $P^{\PP^\alpha}$ defined in \eqref{eq:defpp}. We now characterize $P^{\PP^\alpha}$ via a filter.

For compactness, introduce
\begin{align*}
\bar\lambda(t,x,p,\a_0,\a_1)&:=p\,\lambda(t,x,0,\a_0)+(1-p)\,\lambda(t,x,1,\a_1)\in\RR^d,\\
\Delta\lambda(t,x,\a_0,\a_1)&:=\lambda(t,x,0,\a_0)-\lambda(t,x,1,\a_1)\in\RR^d.
\end{align*}
The boundedness of $\lambda$ together with the Lipschitz continuity and the non-degeneracy of $\sigma$ yield the following lemma.

\begin{lemma}[Wonham filter]\label{lem:filter}
Let $(y_0,y_1)\in\cE(X_0)$, $(Z^0,Z^1)\in\cV (y_0,y_1)$, and set $\xi:=\beta^{-1}_0 Y^{0,0,X_0,y_0,Z^0}_T$. Let $(\alpha^0,\alpha^1)\in\mathcal{A}^\ast(\xi)$, and let $\alpha\in\mathcal{A}$ be the associated control via \eqref{eq:control_bijection}, with induced measure $\PP^\alpha\in Prob(\Omega)$ defined in \eqref{eq:disintegration} and $X$-marginal $P^{\PP^\alpha}$ defined in \eqref{eq:defpp}. Define the $\cF^X_t$-conditional probability
\[
p_t\;:=\;\PP^\alpha\bigl(\Theta=0\mid\cF^X_t\bigr),\qquad 0\le t\le T.
\]
Then, the \emph{innovation process}
\begin{equation}\label{eq:filter}\notag
dB^{\PP^\alpha,X}_t
:=
\sigma(t,X_t)^{-1}\,dX_t
-
\bar\lambda\bigl(t,X_t,p_t,\alpha^0_t,\alpha^1_t\bigr)\,dt
\end{equation}
is an $(\FF^X,P^{\PP^\alpha})$-Brownian motion, and $p_t$ satisfies the Kushner--Stratonovich equation
\begin{equation}\label{eq:wonham}
dp_t\;=\;p_t(1-p_t)\,\Delta\lambda\bigl(t,X_t,\alpha^0_t,\alpha^1_t\bigr)^{\!\top}dB^{\PP^\alpha,X}_t,
\qquad p_0=\mu(\{0\}).
\end{equation}
Consequently, the family $(X,p,Y^0,Y^1)$ satisfies the controlled dynamics
\begin{align}\label{eq:xstatec}
dX_t&=\sigma(t,X_t)\bigl(\bar\lambda(t,X_t,p_t,\alpha^0_t,\alpha^1_t)\,dt+dB^{\PP^\alpha,X}_t\bigr),\\
\label{eq:pstatec}
dp_t&=p_t(1-p_t)\,\Delta\lambda(t,X_t,\alpha^0_t,\alpha^1_t)^\top dB^{\PP^\alpha,X}_t,\\
\label{eq:ystatec}
dY^\theta_t&=\Bigl(-H^\theta(t,X_t,Z^\theta_t)+\kappa(t,X_t)Y^\theta_t+(Z^\theta_t)^\top\sigma(t,X_t)\bar\lambda(t,X_t,p_t,\alpha^0_t,\alpha^1_t)\Bigr)dt\notag\\
&\quad+(Z^\theta_t)^\top\sigma(t,X_t)\,dB^{\PP^\alpha,X}_t,\qquad \theta\in\{0,1\}.
\end{align}
\end{lemma}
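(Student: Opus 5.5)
The plan is to compute the filter explicitly through a Bayes formula in which the two type-conditional densities are written relative to the reference measure $\bar P$, and then read off the dynamics by It\^o's formula. First, by the disintegration \eqref{eq:disintegration} and the Girsanov construction, under $\PP^\alpha$ the conditional law of $X$ given $\Theta=\theta$ is $P^{\alpha^\theta,\theta}$, with $\cF^X_t$-density $Z^{\alpha^\theta,\theta}_t$ with respect to $\bar P$. Since $\alpha^0,\alpha^1\in\cA^X$ are $\FF^X$-adapted, both densities $L^\theta_t:=Z^{\alpha^\theta,\theta}_t$ are $\FF^X$-adapted, strictly positive, true martingales (by boundedness of $\lambda$). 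The abstract Bayes formula then gives
\[
p_t=\frac{p_0L^0_t}{p_0L^0_t+(1-p_0)L^1_t},\qquad \frac{dP^{\PP^\alpha}}{d\bar P}\Big|_{\cF^X_t}=p_0L^0_t+(1-p_0)L^1_t=:L_t .
\]
This step uses only that the event $\{\Theta=0\}$ is independent of $X$ under $\bar\PP$.

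Second, I apply It\^o's formula. Using $dL^\theta_t=L^\theta_t\lambda(t,X_t,\theta,\alpha^\theta_t)^\top dB^{\bar P}_t$, one gets $dL_t=L_t\bar\lambda(t,X_t,p_t,\alpha^0_t,\alpha^1_t)^\top dB^{\bar P}_t$, since $p_0L^0_t/L_t=p_t$. Then Girsanov's theorem shows that $B^{\bar P}_t-\int_0^t\bar\lambda\,ds$ is an $(\FF^X,P^{\PP^\alpha})$-Brownian motion. Because $dB^{\bar P}=\sigma^{-1}dX$, this process is exactly the innovation $B^{\PP^\alpha,X}$. Next, It\^o applied to $p=f(L^0,L^1)$ with $f(a,b)=p_0a/(p_0a+(1-p_0)b)$, or equivalently to the quotient $p_0L^0/L$, gives
\[
dp_t=p_t(1-p_t)\Delta\lambda^\top\bigl(dB^{\bar P}_t-\bar\lambda\,dt\bigr).
\]
The drift terms cancel precisely into the innovation form; this is a short but careful computation of the quadratic covariation terms, and it yields \eqref{eq:wonham}.

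Third, the state dynamics follow by substitution. Writing $dX_t=\sigma(t,X_t)\,dB^{\bar P}_t=\sigma(t,X_t)\bigl(\bar\lambda\,dt+dB^{\PP^\alpha,X}_t\bigr)$ gives \eqref{eq:xstatec}. Inserting the same expression for $dX_t$ into \eqref{fsde} gives \eqref{eq:ystatec}. All these identities hold $\bar P$-a.s., and hence $P^{\PP^\alpha}$-a.s. by absolute continuity. The stochastic integrals are consistent under the change of measure because $Z^\theta\in\HH_2(\bar P)$ and the densities lie in every $L^q$.

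The main obstacle is rigor rather than computation. I need $p_t$ to be the conditional probability under $\PP^\alpha$ given $\cF^X_t$ with the completed filtration, and the Bayes formula needs a careful justification. I would verify it directly: for $A\in\cF^X_t$, check that $\PP^\alpha(\{\Theta=0\}\cap A)=p_0E^{\bar P}[L^0_T1_A]=p_0E^{\bar P}[L^0_t1_A]=E^{\bar P}[L_tp_t1_A]=E^{P^{\PP^\alpha}}[p_t1_A]$, using the martingale property of $L^0$. Note that the membership $(\alpha^0,\alpha^1)\in\cA^\ast(\xi)$ plays no role beyond guaranteeing $\FF^X$-adaptedness; the result holds for any pair in $(\cA^X)^2$.
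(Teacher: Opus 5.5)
Your proposal is correct, but it follows a different route from the paper's proof.

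\textbf{The paper's route (innovations approach).} It identifies $\EE^{\PP^\alpha}[\lambda(t,X_t,\Theta,\alpha_t)\mid\cF^X_t]=\bar\lambda(t,X_t,p_t,\alpha^0_t,\alpha^1_t)$. It then shows by the tower property that the innovation is an $(\FF^X,\PP^\alpha)$-martingale with quadratic variation $tI_d$, and concludes by L\'evy's characterization. Finally, it obtains the equation for $p_t$ by specializing the Fujisaki--Kallianpur--Kunita filtering equation to $f=\mathbf 1_{\{\theta=0\}}$.

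\textbf{Your route (reference measure, Kallianpur--Striebel style).} You work under $\bar P$ throughout. You write the explicit Bayes formula $p_t=p_0L^0_t/(p_0L^0_t+(1-p_0)L^1_t)$. You apply Girsanov to the mixture density $L=p_0L^0+(1-p_0)L^1$, whose stochastic logarithm you correctly identify as $\int\bar\lambda^\top dB^{\bar P}$. You then do a direct It\^o computation, which indeed gives $dp_t=p_t(\lambda^0_t-\bar\lambda_t)^\top(dB^{\bar P}_t-\bar\lambda_t\,dt)$ with $\lambda^0_t-\bar\lambda_t=(1-p_t)\Delta\lambda_t$.

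\textbf{What your route buys.}
\begin{itemize}
\item It is self-contained: there is no appeal to the general filtering theorem and its hypotheses.
\item It produces an explicit continuous version of $p$, so the SDE makes sense without choosing versions of the conditional probabilities.
\item It makes the equivalence $P^{\PP^\alpha}\sim\bar P$ explicit through the strictly positive density $L_T$. This is exactly what justifies reading the $\bar P$-stochastic integrals in \eqref{fsde} under $P^{\PP^\alpha}$. It also underlies the claim $P\sim\bar P$ in the proof of Theorem~\ref{thm:sc_rep}.
\end{itemize}

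\textbf{What the paper's route buys.} It is shorter and relies on standard filtering machinery, at the cost of outsourcing the key equation to a cited theorem.

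Your closing remark is also right: the optimality $(\alpha^0,\alpha^1)\in\mathcal{A}^\ast(\xi)$ plays no role, and only $\FF^X$-adaptedness of the controls is used. The same is true of the paper's argument.
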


\begin{proof}
We write $\lambda^\theta_t:=\lambda(t,X_t,\theta,\alpha^\theta_t)$ and $\PP:=\PP^\alpha$ for brevity. We observe that,  under $\PP$, \eqref{eq:X_dynamics_big} gives
\[
\sigma(t,X_t)^{-1}\,dX_t=\lambda^\Theta_t\,dt+dB^{\PP}_t,
\]
where $B^{\PP}$ is a $\PP$-Brownian motion; boundedness of $\sigma^{-1}$ is used. Since $\alpha^0,\alpha^1$ and $X$ are $\FF^X$-adapted, the $\cF^X_t$-conditional expectation of $\lambda^\Theta_t$ is
\[
\widehat\lambda_t:=\EE^\PP\!\bigl[\lambda^\Theta_t\mid\cF^X_t\bigr]=p_t\,\lambda^0_t+(1-p_t)\,\lambda^1_t=\bar\lambda(t,X_t,p_t,\alpha^0_t,\alpha^1_t).
\]
Next, we introduce the innovation process. Define
\[
B^{\PP,X}_t:=\int_0^t\sigma(s,X_s)^{-1}\,dX_s-\int_0^t\widehat\lambda_s\,ds=\int_0^t(\lambda^\Theta_s-\widehat\lambda_s)\,ds+B^{\PP}_t.
\]
Boundedness of $\lambda$ makes the drift term absolutely continuous with bounded density, so $B^{\PP,X}$ is $\FF^X$-adapted and continuous. Moreover, for any bounded $\cF^X_s$-measurable $\eta$ and $s\le t$,
\[
\EE^\PP\!\bigl[\eta\bigl(B^{\PP,X}_t-B^{\PP,X}_s\bigr)\bigr]
=\EE^\PP\!\left[\eta\int_s^t(\lambda^\Theta_u-\widehat\lambda_u)\,du\right]=0
\]
by the tower property, so $B^{\PP,X}$ is an $(\FF^X,\PP)$-martingale. Its quadratic variation coincides with that of $B^{\PP}$, namely $tI_d$. The characterization of L\'evy identifies $B^{\PP,X}$ as an $(\FF^X,\PP)$-Brownian motion. 

Finally, we introduce the  Kushner equation. For a bounded $f:\{0,1\}\to\RR$, set $\pi_t(f):=\EE^\PP[f(\Theta)\mid\cF^X_t]$. Since $\Theta$ is $\cF_0$-measurable and time-invariant, the Fujisaki--Kallianpur--Kunita filtering equation reduces to
\[
d\pi_t(f)=\bigl(\pi_t(f\,(\lambda^{\cdot}_t)^\top)-\pi_t(f)\,\pi_t((\lambda^{\cdot}_t)^\top)\bigr)\,dB^{\PP,X}_t.
\]
Taking $f(\theta)=\mathbf 1_{\{\theta=0\}}$, so that $\pi_t(f)=p_t$, gives
\[
\pi_t(f\,(\lambda^{\cdot}_t)^\top)=p_t\,(\lambda^0_t)^\top,\qquad \pi_t((\lambda^{\cdot}_t)^\top)=\bar\lambda^\top,
\]
and therefore
\[
dp_t=\Bigl(p_t(\lambda^0_t)^\top-p_t\bigl[p_t(\lambda^0_t)^\top+(1-p_t)(\lambda^1_t)^\top\bigr]\Bigr)\,dB^{\PP,X}_t=p_t(1-p_t)(\lambda^0_t-\lambda^1_t)^\top\,dB^{\PP,X}_t,
\]
which is \eqref{eq:wonham}. The dynamics \eqref{eq:xstatec} and \eqref{eq:ystatec} follow by substituting the decomposition $\sigma(t,X_t)^{-1}dX_t=\bar\lambda\,dt+dB^{\PP,X}_t$ into \eqref{eq:X_dynamics_big} and \eqref{fsde}. \qed
\end{proof}
Together with Theorem~\ref{thm:domain}, Lemma~\ref{lem:filter} reformulates the principal's problem as an optimal control problem with state constraints. We introduce the parabolic strip and its boundaries
\begin{align*}
    \cD&:=\bigl\{(t,x,y_0,y_1,p)\in[0,T)\times\RR^d\times\RR\times\RR\times(0,1): \underline W(t,x)<\beta_1 y_0-\beta_0 y_1<\overline W(t,x)\bigr\},\\
    \cD_d&:=\bigl\{(t,x,y_0,y_1,p)\in[0,T)\times\RR^d\times\RR\times\RR\times(0,1): \beta_1 y_0-\beta_0 y_1=\underline W(t,x)\bigr\},\\
    \cD_u&:=\bigl\{(t,x,y_0,y_1,p)\in[0,T)\times\RR^d\times\RR\times\RR\times(0,1): \beta_1 y_0-\beta_0 y_1=\overline W(t,x)\bigr\},
\end{align*}
and write $\mathrm{cl}_y(\cD):=\cD\cup\cD_d\cup\cD_u\cup (\{T\}\times \RR^d\times G\times (0,1))$ which is the closure of $\cD$ in $(t,x,y_0,y_1)$, but not in $p$.

Given the initial data $(0,X_0,y_0,y_1,p_0)\in\mathrm{cl}_y(\cD)$, the controls $(Z^0,Z^1)\in(\cV(X_0))^2$, and $\alpha^\theta_t\in A^*_\theta(t,X_t,Z^\theta_t)$ for $\theta\in\{0,1\}$, let $P\in Prob(\Omega_X)$ carry a $(P,\cF^X)$-Brownian motion $B$ such that $(X,p,Y^0,Y^1)$ is the solution of
\begin{align}\label{eq:xstatec1}
dX_t&=\sigma(t,X_t)\bigl(\bar\lambda(t,X_t,p_t,\alpha^0_t,\alpha^1_t)\,dt+dB_t\bigr),\\
\label{eq:pstatec1}
dp_t&=p_t(1-p_t)\,\Delta\lambda(t,X_t,\alpha^0_t,\alpha^1_t)^\top dB_t,\\
\label{eq:ystatec1}
dY^\theta_t&=\Bigl(-H^\theta(t,X_t,Z^\theta_t)+\kappa(t,X_t)Y^\theta_t+(Z^\theta_t)^\top\sigma(t,X_t)\bar\lambda(t,X_t,p_t,\alpha^0_t,\alpha^1_t)\Bigr)dt\notag\\
&\quad+(Z^\theta_t)^\top\sigma(t,X_t)\,dB_t,\qquad \theta\in\{0,1\}.
\end{align}

Define the value function
\begin{equation}\label{eq:Vsc}
V_{sc}(0,X_0,y_0,y_1,p_0)\;:=\;\sup\EE^{P}\!\bigl[U_p(\Gamma (X_T)-\beta^{-1}_0 Y^0_T)\bigr],
\end{equation}
where the supremum is taken over such probability measures satisfying $P$-a.s. the state constraint
\begin{align}\label{state:c}
   \underline W(t,X_t)\le \beta_1 Y^0_t-\beta_0 Y^1_t \le\overline W(t,X_t),\qquad t\in[0,T]
\end{align}
which is equivalent to $(t,X_t,Y^0_t,Y^1_t,p_t)\in \mathrm{cl}_y(\cD)$, $P$-a.s.

More generally, for any $(t,x,y_0,y_1,p)\in\mathrm{cl}_y(\cD)$, we define $V_{sc}(t,x,y_0,y_1,p)$ as the analogous supremum where the controlled system \eqref{eq:xstatec1}--\eqref{eq:ystatec1} is initialized at time $t$ with $(X_t,Y^0_t,Y^1_t,p_t)=(x,y_0,y_1,p)$, and the state constraint \eqref{state:c} is required to hold $P$-a.s.\ for all $s\in[t,T]$.

\begin{theorem}\label{thm:sc_rep}
The principal's values admit the representations
\begin{align}
V_{p,c}&=\sup\bigl\{V_{sc}(0,X_0,y_0,y_1,p_0)\ :\ y_\theta\ge R_\theta,\ \theta\in\{0,1\}\bigr\},\label{eq:repvaluestatic1sc}\\[0.5ex]
V_{p,uc}&=\sup\bigl\{V_{sc}(0,X_0,y_0,y_1,p_0)\ :\ p_0 y_0+(1-p_0)y_1\ge R\bigr\}.\label{eq:repvaluestaticsc_uc}
\end{align}
\end{theorem}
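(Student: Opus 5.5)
The proof is a chain of equalities. We start from the target representations \eqref{eq:Vpc_target}--\eqref{eq:Vpuc_target} of Proposition~\ref{prop:principal_target}. We then replace the admissible set $\cV(y_0,y_1)$ by the state constraint \eqref{state:c} using Theorem~\ref{thm:domain}(b). Finally, we rewrite the expectation under $\PP^\alpha$ as an expectation under the $X$-marginal $P^{\PP^\alpha}$, whose dynamics are given by the Wonham filter in Lemma~\ref{lem:filter}. It suffices to show that, for fixed $(y_0,y_1)$ satisfying the relevant participation constraint, the inner supremum in \eqref{eq:Vpc_target} (resp.\ \eqref{eq:Vpuc_target}) equals $V_{sc}(0,X_0,y_0,y_1,p_0)$. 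This must be read with the convention that the set of admissible $P$ is empty, and $V_{sc}=-\infty$, when $(y_0,y_1)\notin\cE(X_0)$. By Theorem~\ref{thm:domain}(a), this convention is consistent, since the constraint at $t=0$ is exactly $(y_0,y_1)\in\cE(X_0)$.

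\emph{Step 1 ($\le$).} Fix $(y_0,y_1)\in\cE(X_0)$, $(Z^0,Z^1)\in\cV(y_0,y_1)$ and $\alpha^\theta_t\in A^*_\theta(t,X_t,Z^\theta_t)$. By Theorem~\ref{thm:domain}(b), (iv)$\Rightarrow$(i), the gap $\beta_1Y^0-\beta_0Y^1$ satisfies \eqref{state:c} $\bar P$-a.s. Since $P^{\PP^\alpha}\ll\bar P$, it also holds $P^{\PP^\alpha}$-a.s. Lemma~\ref{lem:filter} then shows that $P:=P^{\PP^\alpha}$, together with the innovation $B:=B^{\PP^\alpha,X}$, yields a solution of \eqref{eq:xstatec1}--\eqref{eq:ystatec1} with $p_t=\PP^\alpha(\Theta=0\mid\cF^X_t)$. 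Because $U_p(\Gamma(X_T)-\xi)$ is $\cF^X_T$-measurable, we have $\EE^{\PP^\alpha}[U_p(\Gamma(X_T)-\xi)]=\EE^{P}[U_p(\Gamma(X_T)-\beta_0^{-1}Y^0_T)]$. Hence this quantity is bounded by $V_{sc}(0,X_0,y_0,y_1,p_0)$.

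\emph{Step 2 ($\ge$).} Conversely, take $P$, $B$, controls $(Z^0,Z^1)$ and selections $\alpha^\theta$ such that \eqref{eq:xstatec1}--\eqref{eq:ystatec1} hold and \eqref{state:c} holds $P$-a.s. We must show three things.
\begin{enumerate}
\item[(a)] $P$ coincides with the marginal $P^{\PP^\alpha}$ built from $(\alpha^0,\alpha^1)$.
\item[(b)] The process $p$ solving \eqref{eq:pstatec1} is the true filter.
\item[(c)] $(Z^0,Z^1)\in\cV(y_0,y_1)$.
\end{enumerate}
For (a) and (b), observe that $p_tP^{\alpha^0,0}+(1-p_t)P^{\alpha^1,1}$ is exactly the mixture density: by Girsanov, $L_t:=p_0Z^{\alpha^0,0}_t+(1-p_0)Z^{\alpha^1,1}_t$ is the density of $P^{\PP^\alpha}$ with respect to $\bar P$. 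By Itô's formula, $p^*_t:=p_0Z^{\alpha^0,0}_t/L_t$ solves \eqref{eq:pstatec1} driven by the innovation. The couple $(X,p)$ solves a system with bounded drift, Lipschitz nondegenerate $\sigma$ and bounded diffusion in $p$, where the controls are $\FF^X$-adapted functionals of the path. I would argue uniqueness in law of such solutions through the Girsanov change back to $\bar P$: under $\tilde P$ with $d\tilde P/dP=\mathcal E(-\int\bar\lambda^\top dB)$, $X$ solves \eqref{sde:sigma}. Then $\tilde P=\bar P$ on $\cF^X_T$ by weak uniqueness, and $p$ solves a linear-type SDE in $p$ driven by $X$ with pathwise uniqueness, so $p=p^*$. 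Hence $P=P^{\PP^\alpha}$. For (c), since $P\sim\bar P$, \eqref{state:c} holds $\bar P$-a.s. Theorem~\ref{thm:domain}(b), (i)$\Rightarrow$(iv), then gives $(Z^0,Z^1)\in\cV(y_0,y_1)$, so the pair is admissible in \eqref{eq:Vpc_target}. Moreover, by Proposition~\ref{prop:principal_target}, $(\alpha^0,\alpha^1)\in\mathcal A^*(\xi)$. Reversing the identity of expectations from Step 1 gives the inequality $\ge$.

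Taking suprema over $(y_0,y_1)$ under $y_\theta\ge R_\theta$, resp.\ $p_0y_0+(1-p_0)y_1\ge R$, yields \eqref{eq:repvaluestatic1sc}--\eqref{eq:repvaluestaticsc_uc}. The main obstacle is Step 2: identifying an arbitrary weak solution $P$ of the filtered system with the marginal $P^{\PP^\alpha}$, that is, uniqueness for the filter equation when the controls are path-dependent. I would handle it via the explicit Bayes formula $p^*=p_0Z^{\alpha^0,0}/L$ under $\bar P$, which avoids any Lipschitz requirement on the controls.
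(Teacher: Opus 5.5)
Your proposal is correct and follows essentially the same route as the paper. The paper also reduces the theorem to showing that, for $(Z^0,Z^1)\in\cV(X_0)^2$, the state constraint \eqref{state:c} holds if and only if $(Z^0,Z^1)\in\cV(y_0,y_1)$: one direction uses Lemma~\ref{lem:filter} with Theorem~\ref{thm:domain}(b), and the other uses Girsanov plus weak uniqueness of \eqref{sde:sigma} to get $P=P^{\PP^\alpha}$ and hence $P\sim\bar P$. Your Bayes formula $p^*=p_0Z^{\alpha^0,0}/L$, together with pathwise uniqueness of the $p$-equation under $\bar P$, simply writes out the identification the paper asserts in one sentence. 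The one step you leave implicit, which the paper writes out, is that the $Y^\theta$ in \eqref{eq:ystatec1} coincides with $Y^{\theta,0,X_0,y_\theta,Z^\theta}$ from \eqref{fsde}; substituting $\sigma\,dB=dX-\sigma\bar\lambda\,dt$ and using uniqueness of the linear SDE gives this, and you need it before applying Theorem~\ref{thm:domain}(b).
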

\begin{proof}[Proof of Theorem \ref{thm:sc_rep}]
The theorem is a consequence of the following claim. Given $(Z^0,Z^1)\in \mathcal{V}(X_0)^2$, the controlled state
$(X,p,Y^0,Y^1)$ defined in 
    \eqref{eq:xstatec1}--\eqref{eq:ystatec1} satisfies the state constraint \eqref{state:c} if and only if $(Z^0,Z^1)\in \cV (y_0,y_1)$. We prove the two implications separately. Throughout, for $(Z^0,Z^1)\in(\cV(X_0))^2$ we fix, via Remark~\ref{rem:selection} and the Kuratowski--Ryll-Nardzewski theorem, an $\FF^X$-progressively measurable selection $\alpha^\theta_t\in A^*_\theta(t,X_t,Z^\theta_t)$ for $\theta\in\{0,1\}$, and we write $\bar\lambda:=\bar\lambda(t,X_t,p_t,\alpha^0_t,\alpha^1_t)$ and $\Delta\lambda:=\Delta\lambda(t,X_t,\alpha^0_t,\alpha^1_t)$ for brevity.

\smallskip
Let $(Z^0,Z^1) \in \mathcal{V}(y_0,y_1)$. By Lemma \ref{lem:filter}, the controlled state $(X,p,Y^0,Y^1)$ satisfies the dynamics  \eqref{eq:xstatec1}--\eqref{eq:ystatec1}. Moreover, by Theorem \ref{thm:domain}, $(Y^0,Y^1)$ satisfy the state constraint \eqref{state:c}. 

\smallskip
Next, we show the reverse implication. Let $(Z^0,Z^1) \in \mathcal{V}(X_0)^2$, and suppose that $(X,p,Y^0,Y^1)$ satisfy \eqref{eq:xstatec1}--\eqref{eq:ystatec1} $P$-a.s., together with the state constraint \eqref{state:c} $P$-a.s.. Let $(\alpha^0,\alpha^1)$ be the selection, $\PP^\alpha$ the associated joint measure, and set $P^*:=P^{\PP^\alpha}$ both defined as in \eqref{eq:disintegration}, and  \eqref{eq:defpp}. Under $P^*$, using Lemma \ref{lem:filter}, we obtain that $(X,p)$ solves the coupled SDE
\[
dX_t=\sigma(t,X_t)\bigl(\bar\lambda\,dt+dB^*_t\bigr),\qquad dp_t=p_t(1-p_t)\,\Delta\lambda^\top dB^*_t,
\]
with initial condition $(X_0,p_0)$, for some $(P^*,\FF^X)$-Brownian motion $B^*$. The same coupled system holds under $P$ by hypothesis. Since $\bar\lambda$ is bounded, and $\sigma$ is non-degenerate and Lipschitz, Girsanov's Theorem together with the weak uniqueness for the equation \eqref{sde:sigma} forces the law of $(X,p)$ under $P$ to coincide with its law under $P^*$ on $\cF_T^X$. Hence $P=P^*$ on $\cF_T^X$, and in particular $P\sim\bar P$.

Now define $\widetilde Y^\theta:=Y^{\theta,0,X_0,y_\theta,Z^\theta}$ as the $\bar P$-solution of \eqref{fsde}. The Girsanov substitution $dB^{\bar P}_t=\bar\lambda\,dt+dB_t$ shows that $(\widetilde Y^0,\widetilde Y^1)$ satisfies \eqref{eq:ystatec1} under $P$ with initial conditions $(y_0,y_1)$. The SDE \eqref{eq:ystatec1} is linear in $Y^\theta$ with bounded coefficients, so its solution is unique in $\bS_2$. Therefore $\widetilde Y^\theta=Y^\theta$ both $P$- and $\bar P$-a.s. In particular,
\[
\underline W(t,X_t)\le \beta_1 Y^{0,0,X_0,y_0,Z^0}_t- \beta_0 Y^{1,0,X_0,y_1,Z^1}_t\le\overline W(t,X_t),\qquad t\in[0,T],\ \bar P\text{-a.s.,}
\]
which is condition (i) of Theorem~\ref{thm:domain}(b). Finally, the equivalence (i) and (iv) in Theorem \ref{thm:domain}(b) yields $(Z^0,Z^1)\in\cV (y_0,y_1)$, as required. \qed

\end{proof}
\subsection{Hamilton--Jacobi--Bellman equation for the state-constrained problem}\label{subsec:HJB}

The viscosity characterization of state-constrained optimal control originates with Soner~\cite{soner1986optimalI,soner1986optimalII} for first-order Hamilton--Jacobi equations and was developed for the second-order stochastic setting by Lasry--Lions~\cite{lasrylions1989}, Capuzzo-Dolcetta--Lions~\cite{capuzzodolcettalions1990}, Ishii--Loretti~\cite{ishii2002class} and Katsoulakis~\cite{katsoulakis1994representation}; see also Fleming--Soner~\cite[Sec.~IV.5]{flemingsoner2006}. As established in these works, the value function of a supremum-type problem with state constraint is characterized as the unique \emph{constrained viscosity solution} of the associated HJB equation, in the sense that the subsolution inequality is required on the closure of the constraint set $\mathrm{cl}_y(\cD):=\cD\cup\cD_d\cup\cD_u$, whereas the supersolution inequality is required only on its (relative) interior $\cD$. The asymmetry reflects the geometry of supremum problems at the lateral boundary, on which trajectories may be absorbed but cannot be pushed outward. We now derive this HJB equation.

\paragraph{State, controls and diffusion matrix.}
 By Theorem~\ref{thm:domain}, the set $\cV (y_0,y_1)$ defined in \eqref{eq:contz} admits the pathwise representation
\[
\cV (y_0,y_1)=\bigl\{(Z^0,Z^1)\in\cV^2: \underline W(s,X_s)\le \beta_1 Y^{0,0,X_0,y_0,Z^0}_s- \beta_0 Y^{1,0,X_0,y_1,Z^1}_s\le\overline W(s,X_s),\,\forall s\in[0,T]\bigr\}.
\]
The state of the principal's problem is $(t,x,y_0,y_1,p)\in\mathrm{cl}_y(\cD)$, and the controls are $(z^0,z^1,\alpha^0,\alpha^1)\in(\RR^d)^2\times A\times A$ subject to the incentive-compatibility constraint
\begin{equation}\label{eq:adm_set}\notag
\alpha^\theta\in A^*_\theta(t,x,z^\theta),\qquad \theta\in\{0,1\}.
\end{equation}
Recalling
\begin{align*}
\bar\lambda(t,x,p,\alpha^0,\alpha^1)&:=p\,\lambda(t,x,0,\alpha^0)+(1-p)\,\lambda(t,x,1,\alpha^1),\\
\Delta\lambda(t,x,\alpha^0,\alpha^1)&:=\lambda(t,x,0,\alpha^0)-\lambda(t,x,1,\alpha^1),
\end{align*}
the dynamics \eqref{eq:xstatec1}--\eqref{eq:ystatec1} are driven by the $d$-dimensional Brownian motion $B$ through the $(d+3)\times d$ diffusion matrix
\begin{equation}\label{eq:Sigma_full}\notag
\Sigma(t,x,p,z^0,z^1,\alpha^0,\alpha^1):=
\begin{pmatrix}
\sigma(t,x)\\[0.3em]
(z^0)^\top\sigma(t,x)\\[0.3em]
(z^1)^\top\sigma(t,x)\\[0.3em]
p(1-p)\,\Delta\lambda(t,x,\alpha^0,\alpha^1)^\top
\end{pmatrix},
\end{equation}
acting on the augmented state $(X,y^0,y^1,p)\in\RR^d\times\RR\times\RR\times[0,1]$, and by the drift vector
\begin{equation}\label{eq:drift_full}\notag
b(t,x,y_0,y_1,p,z^0,z^1,\alpha^0,\alpha^1):=
\begin{pmatrix}
\sigma(t,x)\bar\lambda(t,x,p,\alpha^0,\alpha^1)\\[0.3em]
-H^0(t,x,z^0)+\kappa(t,x)\,y_0+(z^0)^\top\sigma(t,x)\bar\lambda(t,x,p,\alpha^0,\alpha^1)\\[0.3em]
-H^1(t,x,z^1)+\kappa(t,x)\,y_1+(z^1)^\top\sigma(t,x)\bar\lambda(t,x,p,\alpha^0,\alpha^1)\\[0.3em]
0
\end{pmatrix}.
\end{equation}

\paragraph{Generator and Hamiltonian.}
For a smooth test function $\varphi:\mathrm{cl}_y(\cD)\to\RR$, with gradient $$\nabla\varphi=(\nabla_x\varphi,\partial_{y_0}\varphi,\partial_{y_1}\varphi,\partial_p\varphi)\in\RR^{d+3}$$ and Hessian $\nabla^2\varphi\in\bS_{d+3}$, define the controlled generator
\begin{equation}\label{eq:generator_full}
L^{z^0,z^1,\alpha^0,\alpha^1}\varphi
:=b\cdot\nabla\varphi+\tfrac12\,\mathrm{tr}\bigl(\Sigma\Sigma^\top\nabla^2\varphi\bigr),
\end{equation}
where $b$ and $\Sigma$ are evaluated at $(t,x,y_0,y_1,p,z^0,z^1,\alpha^0,\alpha^1)$. The associated Hamiltonian is
\begin{equation}\label{eq:H_full}\notag
H(t,x,y_0,y_1,p,q,M)
:=\inf_{(z^0,z^1)\in(\RR^d)^2}\ \inf_{(\alpha^0,\alpha^1)\in A^*_0(t,x,z^0)\times A^*_1(t,x,z^1)}\bigl(-L^{z^0,z^1,\alpha^0,\alpha^1}(t,x,y_0,y_1,p;q,M)\bigr),
\end{equation}
where on the right $L^{z^0,z^1,\alpha^0,\alpha^1}(t,x,y_0,y_1,p;q,M)$ denotes the right-hand side of \eqref{eq:generator_full} with $\nabla\varphi$ replaced by $q$ and $\nabla^2\varphi$ by $M$. The HJB equation associated to $V_{sc}$ is
\begin{equation}\label{eq:hjb_full}
-\partial_t V_{sc}(t,x,y_0,y_1,p)+H\bigl(t,x,y_0,y_1,p,\nabla V_{sc}(t,x,y_0,y_1,p),\nabla^2 V_{sc}(t,x,y_0,y_1,p)\bigr)=0.
\end{equation}
for $(t,x,y_0,y_1,p)\in \cD$ whereas we require 
\begin{equation}\label{eq:hjb_fullsub}
-\partial_t V_{sc}(t,x,y_0,y_1,p)+H\bigl(t,x,y_0,y_1,p,\nabla V_{sc}(t,x,y_0,y_1,p),\nabla^2 V_{sc}(t,x,y_0,y_1,p)\bigr)\leq0.
\end{equation}
for $(t,x,y_0,y_1,p)\in \mathrm{cl}_y(\cD)-\{t=T\}$.

 Since the slice of $\mathrm{cl}_y(\cD)$ at $t=T$ is $\{(T,x,y,y,p):x\in\RR^d,y\in\RR,p\in[0,1]\}$, on which the principal's terminal payoff is $U_p(x-y)$, the natural terminal condition is
\begin{equation}\label{eq:terminal_full}\notag
V_{sc}(T,x,y,\frac{\beta_1}{\beta_0}y,p)=U_p(\Gamma (x) - \beta^{-1}_0y),\quad x\in\RR^d,\ y\in\RR,\ p\in[0,1].
\end{equation}

Note that we do not specify the value of $V_{sc}$ at $p=0$ or $p=1$. This comes from the fact that the diffusion of $p$ degenerates to $0$ at $p=0$ and $p=1$ and $p$ in fact never reaches this boundary. Thus, we do not need a boundary value to have the wellposedness of $V_{sc}$.

Broadly, the literature offers two approaches to the well-posedness of the
associated Hamilton--Jacobi--Bellman equations. The first requires that at
every boundary point there exist controls pushing the state back into the
interior of the domain~\cite{soner1986optimalII,ishii2002class}; this
inward-pointing condition fails in our setting, since on the boundaries
$\cD_d$ and $\cD_u$ the admissible controls are determined by
Theorem~\ref{thm:domain}~(b) and cannot be chosen to drive
$(X_t,\beta_1 Y^0_t- \beta_0 Y^1_t)$ strictly inside $\cD$. The second approach, developed
in~\cite{capuzzodolcettalions1990} for problems of the form
\eqref{eq:Vsc}, circumvents the inward-pointing condition but requires a
priori continuity of the value function up to the lateral boundary. This
last property is far from automatic: as already observed in
\cite[p.~502]{katsoulakis1994representation}, the value function of a
state-constrained stochastic control problem need not be continuous in
general, and additional structural assumptions on the dynamics, the
running cost, or the geometry of the constraint set are needed to rule
out boundary discontinuities. In our setting, where the controls on
$\cD_d$ and $\cD_u$ are pinned down by the credible-set
characterisation of Theorem~\ref{thm:domain}~(b), we are not aware of a
direct argument that delivers such regularity, and we cannot readily
verify the continuity hypothesis required by
\cite{capuzzodolcettalions1990}.

To circumvent these difficulties, we follow a third route. Adapting the
arguments of~\cite{bouchard2010optimal}, we show that, under suitable
structural conditions on the problem, the state-constrained control
problem can be recast as a Dirichlet boundary value problem --- i.e.,
the principal receives a boundary utility when the state exits the
domain. Specifically, we assume that the data are independent of $X$,
the effort set is bounded, and the principal is risk-neutral. Under
these assumptions, the lateral boundaries $\cD_d$ and $\cD_u$ become
absorbing for the gap process $\beta_1 Y^0- \beta_0 Y^1$ in the sense of
Theorem~\ref{thm:domain}~(b), and the principal's continuation value
upon hitting either boundary is a function of $(t,x,p)$ alone. These
structural conditions also play the role of the additional regularity
hypotheses alluded to above: they are precisely what allows us to
establish, rather than postulate, the continuity of the value function
at the lateral boundary. The Dirichlet reformulation then falls within
the standard viscosity-solution framework, for which existence,
uniqueness, and stability follow from the comparison principles
of~\cite{flemingsoner2006,crandall1992user}. We carry out this reduction
in Section~\ref{sec:dirichlet}.
\section{Connection to screening contracts}\label{s:screening}

The main contribution of this section is to show that our 
stochastic-target/state-constrained-control framework provides a 
natural language for screening contracts, recasting the credible-set 
characterization of \cite{cvitanic2013dynamics} in the finite-horizon 
two-type setting.

Results of this type are known in the literature in various forms. 
Credible-set-type objects --- feasible sets of jointly implementable 
type-indexed continuation utilities --- appear in the persistent-shocks 
literature \cite{williams2011persistent,zhang2009dynamic,bloedel2025insurance,santibanez2020bank}, 
and the screening counterpart, formulated in terms of 
continuation/temptation value pairs, appears in 
\cite{cvitanic2013dynamics} for the infinite-horizon discounted 
problem. Our contribution is to give a rigorous derivation of 
the credible-set characterization in the finite-horizon screening 
setting via Theorem~\ref{thm:domain}, yielding the explicit strip 
representation 
$\underline W(0,X_0)\le \beta_1 y_0-\beta_0 y_1\le\overline W(0,X_0)$ in terms of 
the gap PDEs \eqref{eq:pdeuwk}--\eqref{eq:pdeowk}. This places the 
screening value $V_{p,s}$ in the same analytical framework as the 
single-contract values $V_{p,c}$ and $V_{p,uc}$ of earlier sections, 
supporting the comparison carried out in Section~\ref{sec:benchmark}.

Following \cite{cvitanic2013dynamics}, we introduce \emph{temptation 
values}: for a given pair of contracts 
$(\xi_\theta)_{\theta\in\{0,1\}}\in\mathcal{C}_a^2$, define, for each 
$\theta\in\{0,1\}$,
\[
y_\theta^c:=\sup_{\alpha\in\mathcal{A}^X} J_a(\alpha;\xi_{1-\theta},\theta),
\qquad
y_\theta:=\sup_{\alpha\in\mathcal{A}^X} J_a(\alpha;\xi_{\theta},\theta).
\]
Here $y_\theta^c$ is the value a type-$\theta$ agent obtains by 
deviating to the contract $\xi_{1-\theta}$ designed for the other 
type, while $y_\theta$ is the value she derives from her own 
contract $\xi_\theta$. The principal designs the menu so that each 
type's truthful value weakly exceeds her temptation value, ensuring 
that each agent chooses the contract intended for her 
(self-selection).

For notational convenience, we use the superscript ``$c$'' on a control process to indicate that the corresponding type's forward equation \eqref{fsde} is initialized at her \emph{temptation} value rather than her truthful value. Thus $Z^{1,c}$ governs the type-$1$ forward process initialized at $y^c_1$ (i.e.\ the path of type $1$'s continuation value when she deviates to $\xi_0$), and $Z^{0,c}$ governs the type-$0$ forward process initialized at $y^c_0$ (the path under deviation to $\xi_1$).

We denote by $\cD_4(X_0)$ the set of all quadruples $(y_0,y_1^c,y_0^c,y_1)\in\RR^4$ arising from some $(\xi_\theta)_{\theta\in\{0,1\}}\in\mathcal{C}_a^2$, that is,
\begin{align*}
\cD_4(X_0):=\Bigl\{(y_0,y_1^c,y_0^c,y_1)\in\RR^4:\ \exists (\xi_\theta)_{\theta\in\{0,1\}}\in\mathcal{C}_a^2\text{ such that, for all }\theta\in\{0,1\},\\
y_\theta^c=\sup_{\alpha\in\mathcal{A}^X} J_a(\alpha;\xi_{1-\theta},\theta),\quad y_\theta=\sup_{\alpha\in\mathcal{A}^X} J_a(\alpha;\xi_{\theta},\theta)\Bigr\}.
\end{align*}

For notational convenience we set
\[
y^{(0)}:=(y_0,y_1^c)\qquad\text{and}\qquad y^{(1)}:=(y_0^c,y_1),
\]
so that $y^{(0)}$ collects the values generated by $\xi_0$ for the two types (with type $0$ first, type $1$ second), and $y^{(1)}$ does the same for $\xi_1$. Theorem~\ref{thm:domain} immediately yields
\[
\cD_4(X_0)=\cE(X_0)\times\cE(X_0).
\]
The incentive-compatibility condition of \cite{cvitanic2013dynamics} (Definition~\ref{def:screening}) requires in addition that $(\xi_\theta)_{\theta\in\{0,1\}}\in\cS$, i.e.\ each type weakly prefers her own contract: $y_\theta\ge y_\theta^c$ for $\theta\in\{0,1\}$. Accordingly, define
\begin{align*}
\cD_\cS(X_0):=\Bigl\{(y_0,y_1^c,y_0^c,y_1)\in\RR^4:\ \exists (\xi_\theta)_{\theta\in\{0,1\}}\in\cS\text{ such that, for all }\theta\in\{0,1\},\\
y_\theta^c=\sup_{\alpha\in\mathcal{A}^X} J_a(\alpha;\xi_{1-\theta},\theta),\quad y_\theta=\sup_{\alpha\in\mathcal{A}^X} J_a(\alpha;\xi_{\theta},\theta)\Bigr\}.
\end{align*}
Then
\[
\cD_\cS(X_0)=\bigl(\cE(X_0)\times\cE(X_0)\bigr)\cap\bigl\{(y_0,y_1^c,y_0^c,y_1)\in\RR^4:\ y_\theta\ge y_\theta^c,\ \theta\in\{0,1\}\bigr\}.
\]

Consequently, Theorem~\ref{thm:domain} implies that, instead of optimizing over incentive-compatible screening contracts $(\xi_\theta)_{\theta\in\{0,1\}}\in\cS\subset\mathcal{C}_a^2$, the principal may optimize over $(y_0,y_1^c,y_0^c,y_1)\in\cE(X_0)\times\cE(X_0)$ subject to
\[
y_\theta\ge y_\theta^c,\quad
(Z^0,Z^{1,c})\in\cV(y_0,y_1^c),\quad
(Z^{0,c},Z^1)\in\cV(y_0^c,y_1).
\]
The contracts are then recovered by
\[
\xi_0=\frac{Y^{0,0,X_0,y_0,Z^0}_T}{\beta_0}=\frac{Y^{1,0,X_0,y_1^c,Z^{1,c}}_T}{\beta_1}
\quad\text{and}\quad
\xi_1=\frac{Y^{0,0,X_0,y_0^c,Z^{0,c}}_T}{\beta_0}=\frac{Y^{1,0,X_0,y_1,Z^1}_T}{\beta_1}.
\]

Under the screening contract $\xi_\theta$, the principal observes the type $\theta$ at time $t=0+$, so the subsequent contracting problem is type-by-type without any remaining hidden parameter. In particular, there is no filtering component, and the principal's expectation under $\xi_\theta$ is taken under $P^{\alpha^\theta,\theta}$ with $\alpha^\theta_t\in A^*_\theta(t,X_t,Z^\theta_t)$. Hence the principal's value under screening contracts \eqref{eq:Vp_screening} reads
\begin{align}\label{principal_val_screen}
V_{p,s}=\sup\Bigl\{\,
&p_0\,\EE^{P^{\alpha^0,0}}\!\bigl[U_p( \Gamma (X_T)-\beta^{-1}_0 Y^{0,0,X_0,y_0,Z^0}_T)\bigr]\notag\\
&+(1-p_0)\,\EE^{P^{\alpha^1,1}}\!\bigl[U_p( \Gamma (X_T)-\beta^{-1}_1 Y^{1,0,X_0,y_1,Z^1}_T)\bigr]\;:\notag\\
&(Z^0,Z^{1,c})\in\cV(y_0,y_1^c),\ \ (Z^{0,c},Z^1)\in\cV(y_0^c,y_1),\notag\\
&\alpha^\theta_t\in A^*_\theta(t,X_t,Z^\theta_t),\ \ \theta\in\{0,1\},\notag\\
&\underline W(0,X_0)\le \beta_1 y_0- \beta_0 y_1^c\le\overline W(0,X_0),\notag\\
&\underline W(0,X_0)\le \beta_1 y_0^c-\beta_0 y_1\le\overline W(0,X_0),\notag\\
&y_\theta\ge\max\{y_\theta^c,R_\theta\},\ \ \theta\in\{0,1\}\,\Bigr\}.
\end{align}

Owing to the similarity between the two terms, we present the methodology through the principal's control problem associated with $\xi_0$, which depends only on $y^{(0)}=(y_0,y_1^c)$ and leads to the value function $V_0$ we describe below.

Unlike in Section \ref{sec:dirichlet} where we do not assume the state constraint control problem to have a smooth value, in this section, we aim to only provide verification theorem therefore we directly assume that the value function is smooth. Given that the Theorem \ref{thm:domain}, if $\beta_1 Y^0_t- \beta_0 Y^{1,c}_t$ and $\beta_1 Y^{0,c}_t- \beta_0 Y^{c}_t$ touch their respective boundaries, they stay on this boundary and the state constraint problem at the boundary is a Dirichlet boundary value problem. We first characterize the boundary of this problem. 

\paragraph{Boundary value functions.} For each $(t,x,y)\in[0,T]\times\RR^d\times\RR$, we define the value functions on the boundaries of the credible region: the upper-boundary value
\begin{align*}
\overline V_0\!\left(t,x,\tfrac{\overline W(t,x)+\beta_0 y}{\beta_1},y\right)&:=\sup\Bigl\{\EE^{P^{\alpha^0,0}}\!\bigl[U_p( \Gamma (X_T)-\beta^{-1}_0Y^{0,t,x,(\overline W(t,x)+\beta_0 y)/\beta_1,(\beta_0 Z^{1,c}+\overline Z)/\beta_1}_T)\mid X_t=x\bigr]:\notag\\
&\qquad Z^{1,c}_\cdot\in\overline\cV(\cdot,X_\cdot,\overline Z_\cdot),\ \alpha^0_t\in A^*_0\!\left(t,X_t,\tfrac{\beta_0 Z^{1,c}_t+\overline Z_t}{\beta_1}\right),\\
&\qquad \alpha^1_t\in A^*_1(t,X_t,Z^{1,c}_t)\Bigr\},\notag
\end{align*}
and the lower-boundary value
\begin{align*}
\underline V_0\!\left(t,x,\tfrac{\underline W(t,x)+\beta_0 y}{\beta_1},y\right)&:=\sup\Bigl\{\EE^{P^{\alpha^0,0}}\!\bigl[U_p( \Gamma (X_T)-\beta^{-1}_0 Y^{0,t,x,(\underline W(t,x)+\beta_0 y)/\beta_1,(\beta_0 Z^{1,c}+\underline Z)/\beta_1}_T)\mid X_t=x\bigr]:\notag\\
&\qquad Z^{1,c}_\cdot\in\underline\cV(\cdot,X_\cdot,\underline Z_\cdot),\ \alpha^0_t\in A^*_0\!\left(t,X_t,\tfrac{\beta_0 Z^{1,c}_t+\underline Z_t}{\beta_1}\right),\\
&\qquad \alpha^1_t\in A^*_1(t,X_t,Z^{1,c}_t)\Bigr\}.\notag
\end{align*}
These represent the principal's continuation values once the gap $\beta_1 Y^0-\beta_0Y^1$ reaches one of the boundaries of the credible region, evaluated under the type-$0$ measure since type $0$ is the type who actually accepts $\xi_0$ at equilibrium.

\paragraph{Interior exit-time problem.} In the interior of the spacetime strip $\mathrm{cl}_y(\cD)$, define
\begin{align*}
\underline\tau^0&:=\inf\bigl\{s\ge t:\beta_1 Y^{0,t,x,y_0,Z^0}_s-\beta_0 Y^{1,t,x,y_1^c,Z^{1,c}}_s=\underline W(s,X_s)\bigr\},\\
\overline\tau^0&:=\inf\bigl\{s\ge t:\beta_1 Y^{0,t,x,y_0,Z^0}_s-\beta_0 Y^{1,t,x,y_1^c,Z^{1,c}}_s=\overline W(s,X_s)\bigr\},
\end{align*}
and $\tau^0:=\underline\tau^0\wedge\overline\tau^0\le T$, with $\tau^0=T$ on the event that the trajectory remains strictly inside the strip on $[t,T)$ (using $\underline W(T,\cdot)=\overline W(T,\cdot)=0$). The interior control problem is the exit-time problem
\begin{align*}
V_0(t,x,y_0,y_1^c)&:=\sup_{(Z^0,Z^{1,c})}\sup_{\alpha^\theta_\cdot\in A^*_\theta(\cdot,X_\cdot,Z^\theta_\cdot)}\Bigl(\EE^{P^{\alpha^0,0}}\!\bigl[\overline V_0(\tau^0,X_{\tau^0},Y^0_{\tau^0},Y^1_{\tau^0})\mathbf{1}_{\{\overline\tau^0\le\underline\tau^0\}}\mid X_t=x\bigr]\notag\\
&\qquad+\EE^{P^{\alpha^0,0}}\!\bigl[\underline V_0(\tau^0,X_{\tau^0},Y^0_{\tau^0},Y^1_{\tau^0})\mathbf{1}_{\{\overline\tau^0>\underline\tau^0\}}\mid X_t=x\bigr]\Bigr),
\end{align*}
where by convention $Z^0$ is the control attached to type $0$ under the truthful contract $\xi_0$ and $Z^{1,c}$ is the control attached to the temptation contract of type $1$ deviating to $\xi_0$. This is a standard stochastic optimal control problem for the state process $(X_s,Y^0_s,Y^1_s)$.
\paragraph{HJB formulation.} Since the principal learns the type at $t=0+$ under a screening contract, the dynamics under $(\FF^X,P^{\alpha^0,0})$ (i.e.\ conditional on $\{\Theta=0\}$) are, by \eqref{eq:X_dynamics_small} and \eqref{fsde},
\begin{align*}
dX_t&=\sigma(t,X_t)\bigl(\lambda(t,X_t,0,\alpha^0_t)\,dt+dB^{P^{\alpha^0,0}}_t\bigr),\\
dY^0_t&=\Bigl(-H^0(t,X_t,Z^0_t)+\kappa(t,X_t)Y^0_t+(Z^0_t)^\top\sigma(t,X_t)\lambda(t,X_t,0,\alpha^0_t)\Bigr)dt\\
&\quad+(Z^0_t)^\top\sigma(t,X_t)\,dB^{P^{\alpha^0,0}}_t,\\
dY^1_t&=\Bigl(-H^1(t,X_t,Z^{1,c}_t)+\kappa(t,X_t)Y^1_t+(Z^{1,c}_t)^\top\sigma(t,X_t)\lambda(t,X_t,0,\alpha^0_t)\Bigr)dt\\
&\quad+(Z^{1,c}_t)^\top\sigma(t,X_t)\,dB^{P^{\alpha^0,0}}_t.
\end{align*}

For a smooth test function $\varphi$ on $[0,T]\times\RR^d\times\RR^2$, write its gradient as
\[
D\varphi=(D_x\varphi,\pa_{y_0}\varphi,\pa_{y_1}\varphi),
\]
with Hessian $D^2\varphi$ a $(d+2)\times(d+2)$ block matrix whose blocks are $\pa_{xx}\varphi\in\RR^{d\times d}$, $\pa_{xy_i}\varphi\in\RR^d$, $\pa_{y_iy_j}\varphi\in\RR$. Writing $\lambda^0:=\lambda(t,x,0,\alpha^0)$, define the generator
\begin{align*}
\cL_{s_0}^{z_0,z_1,\alpha^0,\alpha^1}\varphi&:=\langle\sigma(t,x)\lambda^0,D_x\varphi\rangle+\sum_{\theta\in\{0,1\}}\Bigl(-H^\theta(t,x,z_\theta)+\kappa(t,x)y_\theta+z_\theta^\top\sigma(t,x)\lambda^0\Bigr)\pa_{y_\theta}\varphi\\
&\quad+\tfrac12\|\sigma(t,x)^\top z_0\|^2\pa_{y_0y_0}\varphi+\tfrac12\|\sigma(t,x)^\top z_1\|^2\pa_{y_1y_1}\varphi+z_0^\top\sigma(t,x)\sigma(t,x)^\top z_1\,\pa_{y_0y_1}\varphi\\
&\quad+\tfrac12\Tr\bigl(\sigma(t,x)\sigma(t,x)^\top\pa_{xx}\varphi\bigr)+z_0^\top\sigma(t,x)\sigma(t,x)^\top\pa_{xy_0}\varphi+z_1^\top\sigma(t,x)\sigma(t,x)^\top\pa_{xy_1}\varphi,
\end{align*}
and the Hamiltonian
\[
\cH_{s_0}(t,x,y_0,y_1;D\varphi,D^2\varphi):=\sup_{z_0,z_1\in\RR^d}\ \sup_{\substack{\alpha^0\in A^*_0(t,x,z_0)\\ \alpha^1\in A^*_1(t,x,z_1)}}\cL_{s_0}^{z_0,z_1,\alpha^0,\alpha^1}\varphi(t,x,y_0,y_1).
\]

\begin{theorem}[HJB equation for screening problems and comparison with single contract case]\label{thm:HJB_principal_screens}
Assume Assumption~\ref{ass:boundary} holds. Then,
\begin{align}\label{eq:comparesc}
    V_{p,s}\geq V_{p,c}.
\end{align}

For each $\theta\in\{0,1\}$, denote by $\cD^\theta$, $\cD_d^\theta$, $\cD_u^\theta$ the analogues of $\cD$, $\cD_d$, $\cD_u$ for the coordinates $y^{(\theta)}$. Assume also that the HJB equation
\begin{equation*}\label{eq:HJB_principal_screen_n}
-\pa_t V_\theta(t,x,y^{(\theta)})-\cH_{s_\theta}\!\bigl(t,x,y^{(\theta)};DV_\theta(t,x,y^{(\theta)}),D^2V_\theta(t,x,y^{(\theta)})\bigr)=0\qquad\text{on }\cD^\theta,
\end{equation*}
with boundary conditions
\begin{align*}
V_\theta(t,x,y^{(\theta)})&=\underline V_\theta(t,x,y^{(\theta)}),&&\text{on }\cD_d^\theta,\\
V_\theta(t,x,y^{(\theta)})&=\overline V_\theta(t,x,y^{(\theta)}),&&\text{on }\cD_u^\theta,
\end{align*}
admits a classical solution $V_\theta\in C^{1,2}(\cD^\theta)\cap C(\overline{\cD^\theta})$. Then the principal's screening value \eqref{principal_val_screen} satisfies
\begin{align*}
V_{p,s}
=
\sup\Bigg\{
&\,p_0 V_0(0,X_0,y_0,y_1^c)
 +(1-p_0)V_1(0,X_0,y_0^c,y_1) :
 (y_0,y_1^c,y_0^c,y_1)\in\RR^4, \\
&\, y_0\geq \max\{y_0^c,R_0\},\;
   y_1\geq \max\{y_1^c,R_1\},\\
&\, \underline W(0,X_0)
   \leq \beta_1 y_0-\beta_0 y_1^c
   \leq \overline W(0,X_0),\\
&\, \underline W(0,X_0)
   \leq \beta_1 y_0^c-\beta_0 y_1
   \leq \overline W(0,X_0)
\Bigg\}.
\end{align*}

\end{theorem}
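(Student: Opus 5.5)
The proof has two independent parts. For the comparison \eqref{eq:comparesc}, I would embed every conditionally rational single contract into the screening class by offering the same contract twice. Fix $\xi\in\mathcal{CR}$ and set $\xi_0=\xi_1=\xi$. Then $\sup_\alpha J_a(\alpha;\xi_\theta,\theta)=\sup_\alpha J_a(\alpha;\xi_{1-\theta},\theta)$, and conditional rationality gives $\sup_\alpha J_a(\alpha;\xi,\theta)\ge R_\theta$. So Definition~\ref{def:screening} holds with equality in the incentive constraint, and $(\xi,\xi)\in\cS$.

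Next, for any $(\alpha^0,\alpha^1)\in\mathcal{A}^\ast(\xi)$, the pair also lies in $\mathcal{A}^\ast(\xi_0)$ and $\mathcal{A}^\ast(\xi_1)$. Hence the screening objective in \eqref{eq:Vp_screening} evaluated at $(\xi,\xi)$ dominates $p_0\EE^{P^{\alpha^0,0}}[U_p(\Gamma(X_T)-\xi)]+(1-p_0)\EE^{P^{\alpha^1,1}}[U_p(\Gamma(X_T)-\xi)]$, which is exactly the integrand of \eqref{eq:Vp_conditional2}. Taking suprema over $(\alpha^0,\alpha^1)$ and then over $\xi\in\mathcal{CR}$ yields $V_{p,s}\ge V_{p,c}$. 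Assumption~\ref{ass:boundary} is not really needed here, beyond guaranteeing that both sides are expressed in the same form. In the $(y,Z)$ language the embedding reads $y^{(0)}=y^{(1)}=(y_0,y_1)$ and $Z^{1,c}=Z^1$, $Z^{0,c}=Z^0$.

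For the HJB representation, I would start from \eqref{principal_val_screen}. That formula was obtained from Theorem~\ref{thm:domain} by parametrizing $\xi_0$ through $(y_0,y_1^c,Z^0,Z^{1,c})$ and $\xi_1$ through $(y_0^c,y_1,Z^{0,c},Z^1)$. The two summands are decoupled: the $\xi_0$ term depends only on $(y^{(0)},Z^0,Z^{1,c})$ and the $\xi_1$ term only on $(y^{(1)},Z^{0,c},Z^1)$, while the coupling constraints act only on initial values. Therefore
\[
V_{p,s}=\sup_{\text{initial constraints}}\bigl(p_0\,\tilde V_0(y^{(0)})+(1-p_0)\tilde V_1(y^{(1)})\bigr),
\]
where $\tilde V_\theta$ is the inner supremum over $(Z,\alpha)$ with the target constraint $(Z^0,Z^{1,c})\in\cV(y_0,y_1^c)$. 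It then remains to show $\tilde V_0(y^{(0)})=V_0(0,X_0,y^{(0)})$ (and symmetrically for $\tilde V_1$), which I would do by a verification argument.

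By Theorem~\ref{thm:domain}(b), (i)$\Leftrightarrow$(iv)$\Leftrightarrow$(ii), admissible controls keep the gap in the strip, and after the exit time $\tau^0$ they are pinned to the boundary optimizers. Conditioning at $\tau^0$ and using the definitions of $\overline V_0$ and $\underline V_0$, together with the strong Markov property and a measurable-selection argument for $\varepsilon$-optimal boundary controls, identifies $\tilde V_0$ with the exit-time problem defining $V_0$.

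The verification step then applies Itô's formula to $V_0(s,X_s,Y^0_s,Y^1_s)$ under $P^{\alpha^0,0}$ up to $\tau^0\wedge\tau_n$, where $\tau_n$ is a localizing sequence. The PDE makes the drift nonpositive for every admissible control. Letting $n\to\infty$ and using continuity of $V_0$ on the closure together with the boundary conditions gives $V_0\ge$ the payoff. For the reverse inequality I would use a measurable $\varepsilon$-maximizer of $\cH_{s_0}$ as a feedback control.

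I expect the main obstacle to be this last step. It has two parts. First, one needs integrability and uniform integrability to pass to the limit in the localization, which requires $Z\in\HH_2$ and some growth control on $V_0$. Second, the feedback maximizer produces a controlled SDE whose well-posedness is not automatic, since $z$ is unbounded. I would handle this by approximating with piecewise-constant $\varepsilon$-optimal controls, and I would rely on the assumed classical regularity rather than try to prove it.
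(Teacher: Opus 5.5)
Your proposal is correct and follows the paper's route. The inequality $V_{p,s}\ge V_{p,c}$ comes from offering the conditionally rational contract twice: $(\xi,\xi)\in\cS$, the incentive constraints hold with equality, and the principal's payoff is unchanged. The HJB representation is what the paper calls a standard verification theorem (it writes out no details), and your sketch fills it in. Working directly with $\xi\in\mathcal{CR}$ rather than through $(y_0,y_1,Z^0,Z^1)$ and Proposition~\ref{prop:principal_target} is only cosmetic, and you are right that Assumption~\ref{ass:boundary} plays no role in the inequality itself.
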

\begin{remark}
     Similarly to the Section \ref{sec:dirichlet}, in particular cases, $V_0,V_1$ can be characterized as value functions of optimal control problems with Dirichlet boundary value conditions. 
\end{remark}

\begin{proof}
    Given the smoothness asssumption on $V_\theta$, the proof of the representation of $V_{p,s}$ is the standard verification theorem of stochastic optimal control. Thus, we only prove \eqref{eq:comparesc}. 
Fix $y_\theta \geq R_\theta$ for $\theta \in \{0,1\}$ with $(y_0, y_1) \in \mathcal{E}(X_0)$ and $(Z^0, Z^1) \in \mathcal{V}(y_0, y_1)$, and consider the contract
\[
\xi := \frac{Y^{0,0,X_0,y_0,Z^0}_T}{\beta_0} = \frac{Y^{1,0,X_0,y_1,Z^1}_T}{\beta_1},\; \bar{P}\text{-a.s.}
\]
By Proposition~\ref{prop:principal_target}, $\sup_{\alpha\in\mathcal{A}^X} J_a(\alpha;\xi,\theta) = y_\theta \ge R_\theta$ for $\theta\in\{0,1\}$. Thus the pair $(\xi_0,\xi_1):=(\xi,\xi)\in\mathcal{S}$, since the incentive compatibility constraint $y_\theta \ge y_\theta^c$ holds trivially with equality (both contracts are identical). Moreover, the principal's payoff under the menu $(\xi,\xi)$ coincides with her payoff under the single contract $\xi$. Taking the supremum over admissible $(y_0,y_1,Z^0,Z^1)$ yields
\[
V_{p,s} \geq V_{p,c}. \qed
\]
\end{proof}

\section{A third reduction in particular cases}\label{sec:dirichlet}

Unlike the Section \ref{s:screening}, in this section, we do not assume that the value function of the state constraint problem is continuous but prove this point. Indeed, we show that by adapting the proofs in \cite{bouchard2010optimal}, the state-constrained control problem in Theorem \ref{thm:sc_rep} can be recast as a Dirichlet boundary value problem --- i.e., the principal receives a boundary utility when the state exits the domain --- provided the data of the problem are independent of $X$, the effort set is bounded and the principal is risk-neutral.

\begin{ass}\label{assumption:cd}
The volatility is normalized to $
    \sigma(t,x)=1,$ $U_p(x)=\Gamma(x)=x$, $\beta_0=\beta_1=1$
and the output drift satisfies\[
    \lambda(s,x,\theta,\alpha)=\alpha,
    \qquad \alpha\in A:=[a_{\min},a_{\max}].\]
Moreover, for each
$\theta\in\{0,1\}$, the cost function satisfies
$
    c(\theta,\cdot)\in C^2(A),
$
and there exists $\rho>0$ such that
\[
    \partial_{\alpha\alpha}c(\theta,\alpha)\geq \rho,
    \qquad (\theta,\alpha)\in\{0,1\}\times A.
\]
Finally, the function
$c(0,\cdot)-c(1,\cdot)$
is non-constant on $A$.
\end{ass}

Throughout this section the simplification $\beta_0=\beta_1=1$ is in force; the general case extends with only notational adjustments to the coefficients in the change of variables below.

Under these hypotheses, \eqref{def:hk} reduces to
\begin{align*}
    H^\theta(z):=\sup_{\a\in A}\{\a z  -c(\theta,\a) \},
\end{align*}
and for each $z\in\RR$ by strong convexity of $c$ we have a Lipschitz continuous maximizer \[A^\theta(z)=\argmax_{\alpha\in A}\{\alpha z-c(\theta,\alpha)\}.\]
Setting
\begin{align}\label{eq:n0}
   N_0 := \max\bigl\{|c'(\theta,\alpha)|+|\a|:\theta\in\{0,1\},\,\alpha\in A\bigr\},
 \end{align}

the map $z_1\mapsto H^0(z_1+z)-H^1(z_1)$ is piecewise affine outside a bounded region, constant on each half-line $(-\infty,-N_0-|z|]$ and $[N_0+|z|,+\infty)$. Consequently, for every $z\in\RR$,
\[
\inf_{z_1\in\RR}\bigl[H^0(z_1+z)-H^1(z_1)\bigr]=\inf_{z_1\in K(z)}\bigl[H^0(z_1+z)-H^1(z_1)\bigr]
\]
with $K(z):=[-N_0-|z|,N_0+|z|]$ compact, and similarly for the supremum. In particular, both are finite.

Moreover, for each fixed $z_1\in\RR$, the map $z\mapsto H^0(z_1+z)-H^1(z_1)$ is Lipschitz with constant $N_0$, since $H^0$ is a supremum of affine functions with slopes in $A$. As the infimum (resp.\ supremum) of a family of $L$-Lipschitz functions is $L$-Lipschitz whenever finite,
\[
z\mapsto\inf_{z_1\in\RR}\bigl[H^0(z_1+z)-H^1(z_1)\bigr]
\qquad\text{and}\qquad
z\mapsto\sup_{z_1\in\RR}\bigl[H^0(z_1+z)-H^1(z_1)\bigr]
\]
are $N_0$-Lipschitz on $\RR$. Hence $\underline H$ and $\overline H$ are Lipschitz in $(y,z)$ with constants depending only on $\|\kappa\|_\infty$ and $N_0$, and the PDEs \eqref{eq:pdeuwk}--\eqref{eq:pdeowk} admit unique solutions, which are independent of $x$ by the $x$-independence of the data.

Setting
\begin{equation}\label{eq:infw}
\underline a:=\inf_{z\in\RR}\bigl[H^0(z)-H^1(z)\bigr]=\underline H(0,0)<
\overline a:=\sup_{z\in\RR}\bigl[H^0(z)-H^1(z)\bigr]=\overline H(0,0),
\end{equation}
the solutions $(\underline W,\overline W)$ of \eqref{eq:pdeuwk}--\eqref{eq:pdeowk} are $X$ independent and given by the ODE system
\begin{equation}\label{def:w}
\begin{cases}
-\underline W'(t)+\kappa\underline W(t)=\underline a,& \underline W(T)=0,\\[2pt]
-\overline W'(t)+\kappa\overline W(t)=\overline a,& \overline W(T)=0,
\end{cases}
\end{equation}
so that $\underline W(t)= \frac{\underline a}{\kappa} \left( 1 - e^{-\kappa(T-t)} \right)<\overline W(t)=\frac{\overline a}{\kappa} \left( 1 - e^{-\kappa(T-t)} \right)$ for all $t\in[0,T)$.

We further introduce the drift and diffusion of the filter--promise-gap system which is $x$ independent
\begin{align}
\bar\lambda(p,z^0,z^1)&:=p\,\alpha^*_0(z^0)+(1-p)\,\alpha^*_1(z^1),\notag\\
\Sigma(p,z^0,z^1)&:=\bigl(z^0-z^1,\;p(1-p)(\alpha^*_0(z^0)-\alpha^*_1(z^1))\bigr)^\top,.\label{eq:defSigma}
\end{align}
Since $\argmax_{\alpha\in A}\{\alpha z-c(\theta,\alpha)\}$ is single valued, the generator only depends on $z^0,z^1$ and is
\begin{align}\label{eq:defLinside}
L^{z^0,z^1}(t,y,p;q,N)
&:=\Bigl[-H^0(z^0)+H^1(z^1)+\kappa y+\bar\lambda(p,z^0,z^1)(z^0-z^1)\Bigr]q\notag\\
&\quad+\tfrac12\,\mathrm{tr}\bigl(\Sigma\Sigma^\top N\bigr)+\ell(t,z^0,z^1,p),
\end{align}
and the running reward is
\begin{equation}\label{eq:effective_running_reward}
\ell(s,z^0,z^1,p):=\bar\lambda(p,z^0,z^1)+\frac{e^{\kappa(T-s)}}{2}\Bigl[H^0(z^0)+H^1(z^1)-\bar\lambda(p,z^0,z^1)(z^0+z^1)\Bigr].
\end{equation}
In this simple setting the state dynamics \eqref{eq:xstatec}-\eqref{eq:ystatec} is
\begin{align*}
dX_t&=\bar\lambda(p_t,Z^0_t,Z^1_t)dt+dB^{\mathbb{P}}_t\\
dp_t &= p_t(1-p_t)\bigl(A^0(Z^0_t)-A^1(Z^1_t)\bigr)\,dB^{\mathbb{P}}_t,\\
dY^0_t &= \Bigl(-H^0(Z^0_t)+\kappa Y^0_t+
  \bar\lambda(p_t,Z^0_t,Z^1_t) Z^0_t\Bigr)dt 
  + Z^0_t\,dB^{\mathbb{P}}_t,\\
dY^1_t &= \Bigl(-H^1(Z^1_t)+\kappa Y^1_t+
 \bar\lambda(p_t,Z^0_t,Z^1_t)Z^1_t\Bigr)dt 
  + Z^1_t\,dB^{\mathbb{P}}_t
\end{align*}
and the state constraint optimal control problem \eqref{eq:Vsc} is
\begin{align}
    &V_{sc}(t,x,y^0,y^1,p):=x+\sup_{(Z^0,Z^1)\in\cV (t,y^0-y^1)} \left\{ \EE^{\PP}\left[\int_t^T\bar\lambda(p_s,Z^0_s,Z^1_s)ds - Y^0_T\right]\right\}\notag
\end{align}
where     
$$\cV(t,y^0-y^1):=\{(Z^0,Z^1)\in \cV^2:\underline W(s)\leq Y^0_s-Y^1_s\leq \overline W(s),\,\forall s\in[t,T]\}$$
which depends only on $y^0-y^1$ due to the linearity of the dynamics in $Y^0,Y^1$. Then the value of the principal can be computed using Theorem \ref{thm:sc_rep}.

\textbf{Change of Variables} We now consider the change of variable $(Y,S)=(Y^0-Y^1,Y^0+Y^1)$ and note that the dynamics of $Y=Y^0-Y^1$ does not depend on $Y^0+Y^1$. Thus, we make the following ansatz on the value function
\begin{equation*}
V_{sc}(t,x,y^0,y^1,p) = x-\frac{e^{\kappa(T-t)}}{2}\,(y^0+y^1) + w(t,y^0-y^1,p).
\end{equation*}
The $Y^0+Y^1$-dependent terms cancel identically, and in the sense of \eqref{eq:hjb_full}-\eqref{eq:hjb_fullsub}, $w$ satisfies
\begin{equation}\label{eq:hjb}
-\partial_t w + H(t,y,p,\nabla w,\nabla^2 w)=0,
\end{equation}
where for $(t,y,p, q ,A)\in \mathrm{cl}(\mathcal{D})\times \RR\times \cS_2$ define
\begin{equation}\label{eq:defHinside}
H(t,y,p, q ,A) 
:= \inf_{(z^0,z^1)\in\RR^2}\bigl\{-L^{z^0,z^1}(t,y,p, q ,A)\bigr\}.
\end{equation}

\begin{lemma}\label{lem:verify_cd}
Let Assumption \ref{assumption:cd} holds, then the following properties hold.
\begin{enumerate}[leftmargin=*]
    \item The functions $H^0,H^1,A^0,A^1:\mathbb{R}\to\mathbb{R}$ are
    Lipschitz continuous. Moreover, $A^0$ and $A^1$ are bounded, and
    \[
        \|A^0\|_\infty\vee\|A^1\|_\infty\leq N_0.
    \]
    In addition, $H^0$ and $H^1$ satisfy \eqref{eq:infw}.

    \item There exists a constant $C\geq 0$ such that, for every
    $(p,Z^0,Z^1)\in[0,1]\times\mathbb{R}^2$,
    \begin{align*}
        &\bigl|H^1(Z^1)-H^0(Z^0)\bigr|
        +\Bigl|
            H^0(Z^0)+H^1(Z^1)
            -\bar\lambda(p,Z^0,Z^1)(Z^0+Z^1)
        \Bigr|  \\
        &\hspace{4cm}
        \leq C\bigl(1+|Z^0-Z^1|\bigr).
    \end{align*}
    If
    $
        |Z^0+Z^1|\geq C\bigl(1+|Z^0-Z^1|\bigr),
    $
    then $\mathcal{L}^{Z^0,Z^1}(t,0,p;q,N)$ is independent of $p$.

    \item Let $c\in\{\underline a,\overline a\}$ and let
    $(z^n)_{n\geq 1}\subset\mathbb{R}$ be such that
    \[
        H^0(z^n)-H^1(z^n)\longrightarrow c.
    \]
    Then there exist a subsequence $(z^{n_k})_{k\geq 1}$ and a sequence
    $(\tilde z^k)_{k\geq 1}\subset\mathbb{R}$ such that
    \[
        H^0(\tilde z^k)-H^1(\tilde z^k)=c,
        \qquad
        z^{n_k}-\tilde z^k\longrightarrow 0.
    \]
\end{enumerate}
\end{lemma}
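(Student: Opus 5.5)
Item~1 is a statement about Legendre transforms on a compact interval. Under Assumption~\ref{assumption:cd}, $H^\theta(z)=\sup_{\alpha\in A}\{\alpha z-c(\theta,\alpha)\}$ is a supremum of affine functions whose slopes lie in $A$. Hence $H^\theta$ is convex and Lipschitz with constant $\max_A|\alpha|\le N_0$. The maximizer $A^\theta(z)$ is unique by strong convexity and takes values in $A$, so $\|A^\theta\|_\infty\le N_0$. For Lipschitz continuity of $A^\theta$ we use that, in the interior case, $A^\theta=(c'(\theta,\cdot))^{-1}(z)$. The condition $\partial_{\alpha\alpha}c\ge\rho$ then gives a Lipschitz constant $1/\rho$, and clipping to $[a_{\min},a_{\max}]$ preserves it. That $\underline a$ and $\overline a$ are finite follows from the observation made just above the lemma. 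For $|z|\ge N_0$ the maximizer is a corner ($a_{\max}$ for $z\ge N_0$, $a_{\min}$ for $z\le -N_0$), since $z-c'(\theta,\alpha)$ has constant sign. Therefore $H^0(z)-H^1(z)=c(1,a_{\max})-c(0,a_{\max})$ on $[N_0,\infty)$, and analogously on $(-\infty,-N_0]$. Finally, $\underline a<\overline a$ follows from non-constancy of $c(0,\cdot)-c(1,\cdot)$. Indeed, if $H^0-H^1$ were constant, then its derivative $A^0-A^1$ would vanish a.e. Then $c'(0,\cdot)=c'(1,\cdot)$ on the range of $A^\theta$, which is all of $A$, contradicting non-constancy.

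For item~2, write $z^0=z^1+u$ with $u=Z^0-Z^1$. Then $|H^1(Z^1)-H^0(Z^0)|\le|H^1(Z^1)-H^0(Z^1)|+N_0|u|$. The first term is bounded by $\sup_z|H^0-H^1|<\infty$, by item~1. For the second expression we use $H^\theta(z)=A^\theta(z)z-c(\theta,A^\theta(z))$. This gives
\begin{equation*}
H^0(Z^0)+H^1(Z^1)-\bar\lambda(Z^0+Z^1)=A^0Z^0+A^1Z^1-\bigl(pA^0+(1-p)A^1\bigr)(Z^0+Z^1)-c(0,A^0)-c(1,A^1),
\end{equation*}
and the $Z$-terms equal $(1-p)(A^0-A^1)Z^0+p(A^1-A^0)Z^1$. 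When $|Z^0|$ and $|Z^1|$ are both at least $N_0$ with the same sign, both maximizers sit at the same corner, so $A^0=A^1$. These terms then vanish, and $\bar\lambda$, $\Sigma$ and $\ell$ do not depend on $p$. Otherwise, $Z^0$ and $Z^1$ are both within $N_0+|u|$ of zero, and all terms are $O(1+|u|)$ because $A^\theta$ and $c$ are bounded. Choosing $C$ larger than $2N_0$ plus these bounds, the condition $|Z^0+Z^1|\ge C(1+|u|)$ forces $|Z^0|$ and $|Z^1|$ to be at least $N_0$ with the same sign. This yields the $p$-independence of $\mathcal L^{Z^0,Z^1}(t,0,p;q,N)$: the drift, the diffusion $\Sigma$ (whose $p$-component contains $A^0-A^1=0$), and $\ell$ are all free of $p$.

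Item~3 is the step needing care, because the infimum or supremum $c$ might be approached only at infinity. By item~1, $f:=H^0-H^1$ is continuous and constant on each of the half-lines $[N_0,\infty)$ and $(-\infty,-N_0]$. We split into two cases. If $(z^n)$ has a bounded subsequence, extract a convergent one $z^{n_k}\to z^*$. Continuity gives $f(z^*)=c$, and we take $\tilde z^k=z^*$. If instead $|z^n|\to\infty$, then along a subsequence $z^{n_k}\ge N_0$ (or $z^{n_k}\le -N_0$). On that range $f(z^{n_k})$ equals the constant value $f(N_0)$, so $f(z^{n_k})=c$ exactly, and we take $\tilde z^k=z^{n_k}$. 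In both cases $z^{n_k}-\tilde z^k\to0$. The only genuine obstacle is verifying the corner claim, namely that $A^\theta(z)\in\{a_{\min},a_{\max}\}$ for $|z|\ge N_0$. This follows from $|c'|\le N_0$, by the definition \eqref{eq:n0}.
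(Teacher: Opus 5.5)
Your proposal is correct and follows essentially the same route as the paper: a $1/\rho$-Lipschitz bound for $A^\theta$ from strong convexity, a contradiction via $A^0\equiv A^1$ (whose range is all of $A$) for $\underline a<\overline a$, the identity $\bigl((1-p)Z^0-pZ^1\bigr)\bigl(A^0-A^1\bigr)-c(0,A^0)-c(1,A^1)$ combined with corner saturation of the maximizers, and the bounded/unbounded subsequence split for item~3. The differences are cosmetic: you use the threshold $N_0$ where the paper uses $C_0$, and you describe $A^\theta$ as an inverse derivative with clipping where the paper uses the first-order variational inequality.

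One bookkeeping point in item~2: you also need $C\ge 2$, or $C\ge 1$ if you use the sharper same-sign computation $\min(|Z^0|,|Z^1|)=\tfrac12\bigl(|Z^0+Z^1|-|Z^0-Z^1|\bigr)$. Otherwise the opposite-sign case, where $|Z^0+Z^1|\le|Z^0-Z^1|$, is not excluded by the condition $|Z^0+Z^1|\ge C(1+|Z^0-Z^1|)$; this is why the paper takes $C\ge\max(1,2C_0)$.
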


\subsection{Viscosity characterization of the value function}

In the $x-$independent setting of this section, we drop the dependence of the domains in $x$ and denote
\begin{align*}
    \mathcal{D}&:=\{(t,y,p)\in [0,T)\times\mathbb{R}\times (0,1):\underline{W}(t)<y<\overline{W}(t)\},\\
    \mathcal{D}_d&:=\{(t,y,p)\in [0,T)\times\mathbb{R}\times (0,1):y=\underline{W}(t)\},\\
    \mathcal{D}_u&:=\{(t,y,p)\in [0,T)\times\mathbb{R}\times (0,1):y=\overline{W}(t)\},
\end{align*}
and $\mathrm{cl}_y(\mathcal{D})=\mathcal{D}\cup\mathcal{D}_d\cup\mathcal{D}_u\cup (\{T\}\times \{0\}\times (0,1))$.
For a function $u:\mathcal{D}\to\mathbb{R}$, its upper and lower semicontinuous
envelopes on $\mathrm{cl}(\mathcal{D})$ are defined as limits from the interior:
\begin{align*}
    u^*(t,y,p)&:=\limsup_{\substack{(s,\gamma,q)\to(t,y,p)\\(s,\gamma,q)\in\mathcal{D}}} u(s,\gamma,q),\\[4pt]
    u_*(t,y,p)&:=\liminf_{\substack{(s,\gamma,q)\to(t,y,p)\\(s,\gamma,q)\in\mathcal{D}}} u(s,\gamma,q),
\end{align*}
for $(t,y,p)\in\mathrm{cl}(\mathcal{D})$, so that $u^*:\mathrm{cl}(\mathcal{D})\to\mathbb{R}$
is upper semicontinuous, $u_*:\mathrm{cl}(\mathcal{D})\to\mathbb{R}$ is lower semicontinuous,
and $u_*\leq u\leq u^*$ on $\mathcal{D}$.

We have the following a priori bound on the value function which is a consequence of a verification theorem. 
\begin{proposition}\label{prop:supersol}
Define
\begin{align*}
\overline C&=\bigl((C+N_0)^2-2\kappa\bigr)_+(\overline a+|\underline a|)^2T^2+\Bigl[2C+\tfrac{(C+N_0)C}{2}e^{\kappa T}\Bigr](\overline a+|\underline a|)T\notag\\
&\quad+\tfrac{C}{2}e^{\kappa T}+\tfrac{C^2}{16}e^{2\kappa T}+N_0+1\\
\underline C&=-\Bigl[2\kappa(\overline a+|\underline a|)^2T^2+2C(\overline a+|\underline a|)T+\tfrac{e^{\kappa T}C}{2}+N_0+1\Bigr]\notag
\end{align*}
where $C,N_0$ are the constants in Lemma \ref{lem:verify_cd}  and \eqref{eq:n0} and define $\phi,\psi:[0,T]\times\RR\to\RR$ by
\[
\phi(t,y):=\overline C(T-t)-y^2,\qquad\psi(t,y):=\underline C(T-t)-y^2.
\]
Then,
\begin{align}\label{eq:supersol-ineq}
-\partial_t\phi(t,y)+H\bigl(t,y,p,\partial_y\phi(t,y),\partial_{yy}\phi(t,y)\bigr)&\;\ge\;0\\
-\partial_t\psi(t,y)+H\bigl(t,y,p,\partial_y\psi(t,y),\partial_{yy}\psi(t,y)\bigr)&\;\le\;0\notag
\end{align}
and
\begin{equation}\label{eq:prioribound}
\underline C(T-t)-y^2\;\le\;w(t,y,p)\;\le\;\overline C(T-t)-y^2
\end{equation}
for every $(t,y,p)\in\mathrm{cl}_y(\mathcal{D})$.
\end{proposition}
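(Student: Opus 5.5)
The plan has two parts. First, verify the differential inequalities \eqref{eq:supersol-ineq} by direct computation. Second, deduce \eqref{eq:prioribound} from a verification argument: apply It\^o's formula to $\phi(s,Y_s)$ and $\psi(s,Y_s)$ along admissible controls, and use the terminal data.

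\emph{The differential inequalities.} For $\phi=\overline C(T-t)-y^2$ we have $-\partial_t\phi=\overline C$, $q=\partial_y\phi=-2y$, and a Hessian whose only nonzero entry is $\partial_{yy}\phi=-2$. Hence, by \eqref{eq:defLinside},
\[
-L^{z^0,z^1}=2y\bigl[-H^0(z^0)+H^1(z^1)+\kappa y+\bar\lambda(z^0-z^1)\bigr]+(z^0-z^1)^2-\ell(t,z^0,z^1,p).
\]
We need $\overline C+\inf_{z^0,z^1}(-L^{z^0,z^1})\ge 0$. By Lemma~\ref{lem:verify_cd}(2) and the bound $|\bar\lambda|\le N_0$, the first bracket is at most $(C+N_0)(1+|z^0-z^1|)$ in absolute value. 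The running reward satisfies $|\ell|\le N_0+\tfrac{C}{2}e^{\kappa T}(1+|z^0-z^1|)$. Writing $r=|z^0-z^1|$ and using $|y|\le \overline a+|\underline a|$ on $\mathrm{cl}_y(\cD)$, we obtain
\[
-L\ge r^2-\bigl[2(C+N_0)|y|+\tfrac C2 e^{\kappa T}\bigr]r-2(C+N_0)|y|+2\kappa y^2-N_0-\tfrac C2e^{\kappa T}.
\]
This bound uses $|\underline W|,|\overline W|\le (|\underline a|+\overline a)T$, since $(1-e^{-\kappa(T-t)})/\kappa\le T$. Minimizing the quadratic in $r$ by completing the square gives a lower bound of order $-\tfrac14[2(C+N_0)|y|+\tfrac C2e^{\kappa T}]^2$. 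Expanding this produces exactly the type of terms in $\overline C$: a $(C+N_0)^2y^2$ term balanced against $2\kappa y^2$ (hence the positive part), a cross term $\tfrac{(C+N_0)C}{2}e^{\kappa T}|y|$, and $\tfrac{C^2}{16}e^{2\kappa T}$. The constant $\overline C$ is then large enough.

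For $\psi$ we need the reverse inequality, so it suffices to exhibit a single control. Take $z^0=z^1=z$, which kills the $r$-terms. Then
\[
-L=2y[-H^0(z)+H^1(z)+\kappa y]-\ell,
\]
which is bounded above by $2\kappa y^2+2C|y|+N_0+\tfrac C2e^{\kappa T}$. Choosing $\underline C$ as in the statement gives $-\partial_t\psi+H\le \underline C+(-L)\le 0$.

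\emph{The a priori bound.} For any admissible $(Z^0,Z^1)\in\cV(t,y)$, the gap $Y_s=Y^0_s-Y^1_s$ stays in $[\underline W(s),\overline W(s)]$, so all states remain in $\mathrm{cl}_y(\cD)$ where the pointwise inequalities hold. The quantity $w(t,y,p)$ is, by construction of the ansatz, the supremum of $\EE[\int_t^T\ell\,ds]$ with terminal value $0$ at $Y_T=0$. Apply It\^o to $\phi(s,Y_s)$ and add $\int\ell$; the supersolution inequality makes $\phi(s,Y_s)+\int_t^s\ell$ a supermartingale, up to a local martingale. Since $\phi(T,0)=0$, this yields $\EE\int_t^T\ell\le\phi(t,y)$, and taking the supremum gives the upper bound. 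For the lower bound, use the admissible control $Z^0=Z^1=z$, with constant gap dynamics $dY=(-H^0(z)+H^1(z)+\kappa Y)ds$. Choose $z$ so that this deterministic path stays in the strip. A natural choice is one attaining a value in $[\underline a,\overline a]$ and, if needed, switching to a boundary-preserving optimizer (Lemma~\ref{lem:verify_cd}(3)) when hitting $\underline W$ or $\overline W$. The subsolution inequality for $\psi$ then gives $\psi(t,y)\le\EE\int_t^T\ell\le w$.

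\emph{Main obstacle.} I expect the main obstacle to be the lower-bound construction: producing a bounded, constraint-respecting control path with $Z^0=Z^1$. The admissible controls at the boundary must attain $\underline a$ or $\overline a$ exactly, which is precisely why Lemma~\ref{lem:verify_cd}(3) is needed. A second, milder obstacle is justifying the martingale property in the upper bound, since $Z$ is only in $\HH_2$. Here the integrands $2Y_s(Z^0-Z^1)$ are $\HH_2$ because $Y$ is bounded, so the stochastic integral is a true martingale.
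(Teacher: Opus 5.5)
Your proposal follows the paper's route. You compute $-L^{z^0,z^1}$ on $\phi$ and $\psi$, complete the square in $r=|z^0-z^1|$ for $\phi$, and test controls with $z^0=z^1$ for $\psi$. You then run a verification argument to get \eqref{eq:prioribound}.

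\textbf{The lower bound.} This is where you differ from the paper, and your version is the more careful one. The paper inserts the constant control $z^0=z^1=0$. Its gap path $dY_s=(\kappa Y_s-(H^0(0)-H^1(0)))\,ds$ is deterministic and in general neither stays in $[\underline W(s),\overline W(s)]$ nor satisfies $Y_T=0$. For instance, in Example~\ref{ex:dom} one has $H^0(0)=H^1(0)=0$, so $Y_s=e^{\kappa(s-t)}y$ exits through $\overline W$ whenever $y>0$. That control is therefore not in $\cV(t,y)$ and cannot bound $w$ from below. Your switching construction repairs this.

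An even simpler admissible choice, for $t<T$, is a constant control $z^0=z^1=z^\ast$ with $H^0(z^\ast)-H^1(z^\ast)=c^\ast:=\kappa y/(1-e^{-\kappa(T-t)})$:
\begin{itemize}
\item The constraint $\underline W(t)\le y\le\overline W(t)$ is equivalent to $c^\ast\in[\underline a,\overline a]$.
\item Every such value is attained, because $H^0-H^1$ is continuous and constant outside a compact set.
\item The resulting path $Y_s=c^\ast(1-e^{-\kappa(T-s)})/\kappa$ stays in the strip and vanishes at $T$.
\end{itemize}
In either construction, what you actually use is that your bound on $-L^{z,z}\psi$ holds for every $z$. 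The inequality $-\partial_t\psi+H\le 0$ on its own concerns only the infimum over controls and would not suffice, so phrase that step accordingly.

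\textbf{Two bookkeeping points.} First, the relevant bracket is bounded by $C+(C+N_0)r$, not merely by $(C+N_0)(1+r)$. With your cruder constant term $-2(C+N_0)|y|$ you need an extra $2N_0(\overline a+|\underline a|)T$, which the slack $+1$ in $\overline C$ does not absorb in general. The finer bound reproduces the stated $\overline C$ exactly. Second, on $\mathrm{cl}_y(\cD)$ the bound is $|y|\le(\overline a+|\underline a|)(T-t)\le(\overline a+|\underline a|)T$, as you say a line later, not $|y|\le\overline a+|\underline a|$.
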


Since \eqref{eq:prioribound} shows that all viscosity solution properties below has terminal condition $0$ at $t=T$.




\subsubsection{Viscosity solution properties}

Denote the lower and upper semicontinuous envelopes:
\begin{align*}
H_*(t,y,p, q ,A)
&= \liminf_{\substack{(\tilde t,\tilde y ,\tilde p,
  \tilde  q ,\tilde A)\\
  \to\,(t, y ,p, q ,A)}}
  H(\tilde t,\tilde y ,\tilde p,\tilde  q ,\tilde A),\\
H^*(t, y ,p, q ,A)
&= \limsup_{\substack{(\tilde t,\tilde y ,\tilde p,
  \tilde  q ,\tilde A)\\
  \to\,(t, y ,p, q ,A)}}
  H(\tilde t,\tilde y ,\tilde p,\tilde  q ,\tilde A).
\end{align*}

\begin{definition}[State-constraint viscosity solution]\label{eq:defstatc}
\noindent A locally bounded function $v:\mathcal{D}\to\mathbb{R}$ is a 
\textbf{viscosity supersolution} of \eqref{eq:hjb} if, for every test function 
$\phi\in C^{1,2}(\mathrm{cl}_y(\mathcal{D}))$ and every local minimum point 
$(t_0,y_0,p_0)\in\mathcal{D}$ of $v_*-\phi$:
\begin{equation}\label{eq:super_int}\notag
-\partial_t\phi(t_0,y_0,p_0)
+H^*\!\left(t_0,y_0,p_0,\nabla\phi,\nabla^2\phi\right)
\;\ge\; 0.
\end{equation}
\noindent A locally bounded function $u:\mathcal{D}\to\mathbb{R}$ is a 
\textbf{viscosity subsolution} of \eqref{eq:hjb} if, for every test function 
$\phi\in C^{1,2}(\mathrm{cl}_y(\mathcal{D}))$ and every local maximum point 
$(t_0,y_0,p_0)\in\mathrm{cl}_y(\mathcal{D})\cap\{t_0<T\}$ of $u^*-\phi$ (relative to $\mathrm{cl}_y(\mathcal{D})$):
\begin{equation}\label{eq:sub}\notag
-\partial_t\phi(t_0,y_0,p_0)
+H_*\!\left(t_0,y_0,p_0,\nabla\phi,\nabla^2\phi\right)
\;\le\; 0.
\end{equation}
\noindent A locally bounded function $w:\mathcal{D}\to\mathbb{R}$ is a 
\textbf{viscosity solution} if it is simultaneously a viscosity supersolution 
and a viscosity subsolution.
\end{definition}

The proof of the following results can be proven as in \cite{bouchard2010optimal}.  
\begin{theorem}[Viscosity property]\label{thm:viscosity}
Let Assumption~\ref{assumption:cd} hold. Then:
\begin{enumerate}[leftmargin=*]
  \item \textbf{Supersolution property.}
    The function $w_*$ is a viscosity supersolution of \eqref{eq:hjb} in $\mathcal{D}$.

  \item \textbf{Subsolution property.}
    The function $w^*$ is a viscosity subsolution of \eqref{eq:hjb} on
    $\mathrm{cl}_y(\mathcal{D})\cap\{t<T\}$.
\end{enumerate}
\end{theorem}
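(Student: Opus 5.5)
The approach follows the weak dynamic programming method of Bouchard--Touzi, in the form adapted to target and state constraints in \cite{bouchard2010optimal}. The first step is to record the reduced problem for $w$. By the ansatz, $w(t,y,p)$ is the supremum over $(Z^0,Z^1)\in\cV(t,y)$ of
\[
\EE^{\PP}\Bigl[\int_t^T \ell(s,Z^0_s,Z^1_s,p_s)\,ds\Bigr],
\]
where the state $(Y_s,p_s)=(Y^0_s-Y^1_s,p_s)$ has drift $-H^0(Z^0)+H^1(Z^1)+\kappa Y+\bar\lambda(Z^0-Z^1)$ and diffusion vector $\Sigma$ from \eqref{eq:defSigma}. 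This follows from It\^o's formula applied to $e^{\kappa(T-s)}(Y^0_s+Y^1_s)/2$, using $Y^0_T=Y^1_T$. Proposition~\ref{prop:supersol} gives local boundedness, so the envelopes $w_*$ and $w^*$ are finite.

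Next I will state a weak dynamic programming principle. For any stopping time $\tau\in[t,T]$ and upper semicontinuous $\varphi\le w$ (respectively lower semicontinuous $\varphi\ge w$) on $\mathrm{cl}_y(\cD)$, the claim is
\[
w(t,y,p)\ \ge\ \sup_{Z}\ \EE\Bigl[\int_t^\tau \ell\,ds+\varphi(\tau,Y_\tau,p_\tau)\Bigr],
\qquad
w(t,y,p)\ \le\ \sup_{Z}\ \EE\Bigl[\int_t^\tau \ell\,ds+w^*(\tau,Y_\tau,p_\tau)\Bigr].
\]
The "$\le$" direction follows from conditioning: the controls are $\FF^X$-progressive, and the constraint set $\cV(\tau,Y_\tau)$ is exactly what the restricted control must satisfy, by the flow property of the state constraint in Theorem~\ref{thm:domain}(b).

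The "$\ge$" direction needs a concatenation argument. I will take a countable partition of $\mathrm{cl}_y(\cD)$ into small balls with $\varepsilon$-optimal controls, and paste them after $\tau$. The essential point is that pasting preserves the state constraint. Here Assumption~\ref{assumption:cd} is used. Since the boundaries $\underline W,\overline W$ are deterministic and $x$-independent, and by Lemma~\ref{lem:verify_cd}(3) the boundary controls (those with $H^0(z)-H^1(z)=\underline a$ or $\overline a$ and $z^0=z^1$) exist, a control that is admissible from a nearby point can be adjusted so that it stays in the strip. The adjustment is to stop the perturbed trajectory at its first boundary hitting time and switch to the boundary-holding control. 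This costs only $O(\varepsilon)$ in the reward, because $\ell$ is Lipschitz-controlled, which Lemma~\ref{lem:verify_cd}(2) provides.

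The supersolution property is then standard. Take $(t_0,y_0,p_0)\in\cD$ to be a strict local minimum of $w_*-\phi$ with $H^*<$ some negative bound, argue by contradiction, and fix constant controls $(z^0,z^1)$. Use the exit time from a small ball contained in $\cD$, so the constraint is automatically satisfied before exit; then $\phi$ becomes a strict subsolution along the chosen control. The "$\ge$" weak DPP then yields the contradiction.

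The subsolution property must be proved on all of $\mathrm{cl}_y(\cD)\cap\{t<T\}$, including $\cD_d\cup\cD_u$. Suppose it fails at a maximum point $(t_0,y_0,p_0)$ of $w^*-\phi$, so that $-\partial_t\phi+H_*>0$ near that point. Then for every admissible control, the generator satisfies $L^{z^0,z^1}\phi+\partial_t\phi<-\eta$ locally. On the boundary, admissible controls are a subset of all controls, so the inequality still holds. The "$\le$" DPP with the exit time from a small neighbourhood intersected with $\mathrm{cl}_y(\cD)$, together with a strict-maximum perturbation $\phi+|\cdot-(t_0,y_0,p_0)|^4$, yields the usual contradiction. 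I expect two obstacles. The first is handling the unboundedness of $(z^0,z^1)$ in $H_*$. Lemma~\ref{lem:verify_cd}(2) shows that large $|z^0+z^1|$ make $L$ independent of $p$ and bounded relative to $|z^0-z^1|$, which allows restriction to a compact control set up to $\varepsilon$, so the envelopes are controlled. The second, and the main obstacle, is the concatenation and state-constraint preservation in the "$\ge$" DPP at boundary points, where Lemma~\ref{lem:verify_cd}(3) and the absorbing structure of Theorem~\ref{thm:domain}(b) are essential.
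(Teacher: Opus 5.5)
Your architecture is the one the paper intends. The paper gives no proof of its own here: it refers to \cite{bouchard2010optimal} for this theorem and to \cite{bouchard2012weak} for the weak DPP. Concretely, the supersolution property in $\cD$ comes from the lower weak DPP with a constant control up to the exit of a small interior ball. The subsolution property on $\mathrm{cl}_y(\cD)\cap\{t<T\}$ comes from the upper weak DPP, and boundary points cost nothing because $H$ is an infimum over all of $\RR^2$. Both test-function arguments are fine. Your first ``obstacle'' does not actually arise. The subsolution contradiction only needs $-\partial_t\phi-L^{z^0,z^1}\phi>\eta$ for every control near the point, which follows from lower semicontinuity of $H_*$. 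The supersolution argument needs a single control and continuity of $L^{z^0,z^1}$ in $(t,y,p)$. No compactification of the control set is needed.

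\textbf{The gap: the pasting step of the lower DPP.} You single this step out correctly, but your justification does not hold. Switching the displaced trajectory permanently to a boundary-holding control at its first boundary hitting time $\tau_\delta$ does not cost $O(\varepsilon)$ uniformly.
\begin{itemize}
\item Lemma~\ref{lem:verify_cd}(2) gives only a growth bound $|\ell|\le C(1+|z^0-z^1|)$, not a Lipschitz estimate.
\item At $\tau_\delta$ the undisplaced trajectory is typically still in $\cD$, within $O(\delta)$ of the boundary. It may go on collecting interior rewards with large $|Z^0-Z^1|$.
\item A loss estimate uniform over controls amounts to continuity of $w$ up to $\cD_d\cup\cD_u$. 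The paper derives that only afterwards, from this very theorem, Proposition~\ref{prop:boundary} and the comparison results. As stated, your argument is therefore circular.
\end{itemize}

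\textbf{How to repair it.} The covering argument only needs lower semicontinuity of the reward for each fixed control, and the structure gives you this.
\begin{itemize}
\item With the same control, the gap started at $y+\delta$ is $Y_s+\delta e^{\kappa(s-t)}$. So $\tau_\delta$ increases, as $\delta\downarrow0$, to the first time $\tau_0$ at which the undisplaced gap itself reaches $\overline W$ (set $\tau_0=T$ if it never does).
\item After $\tau_0$ the original control is diagonal and $\overline{\mathcal V}$-valued by Theorem~\ref{thm:domain}(b). Any such control makes $Y-\overline W$ solve $d(Y-\overline W)=\kappa(Y-\overline W)\,ds$.
\item So use a fixed $\bar z\in\overline{\mathcal V}$ only on $[\tau_\delta,\tau_0)$, then rejoin the original control. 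This keeps the displaced gap on $\overline W$.
\item The loss is at most $\EE\int_{\tau_\delta}^{\tau_0}C(1+|Z^0_s-Z^1_s|)\,ds$, which tends to $0$ by dominated convergence.
\item The mismatch in $p$ is irrelevant after $\tau_0$. On $\overline{\mathcal V}$ one has $A^0=A^1$, by the first-order condition at a maximum of $H^0-H^1$. Hence neither $\ell(s,z,z,\cdot)$ nor the $p$-diffusion depends on $p$ there.
\end{itemize}
An analogous construction handles downward and time displacements. The strip pinches to $\{0\}$ at $T$, so a displaced gap must reach some boundary before $T$; you switch there and rejoin once the undisplaced path is absorbed. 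With this per-control semicontinuity in place of the $O(\varepsilon)$ claim, your outline goes through.
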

Thanks to Proposition \ref{prop:supersol} and the fact that $\overline W(T)=\underline W(T)=0$, the terminal condtition for $w$ is $w^*(T,0,p)=w_*(T,0,p)=0.$

We now provide the following DPP that can be proven as in \cite{bouchard2012weak}. 
\begin{proposition}[Weak Dynamic Programming Principle for $w$]\label{prop:DPP_w}
Let $(t,y,p) \in \mathcal{D}$. For any stopping time $\theta$ taking values in $[t,T]$, the value function $w(t,y,p)$ satisfies:
\begin{enumerate}[label=\roman*), leftmargin=*]
    \item \textbf{Upper bound:}
    \begin{equation}\label{eq:DPP_w_upper}\notag
        w(t,y,p) \leq \sup_{(Z^0,Z^1)\in\cV (t,y)} \EE^{\PP}\left[ \int_t^\theta \ell(s, Z^0_s, Z^1_s, p_s)\,ds + w^*(\theta,Y_\theta,p_\theta) \right].
    \end{equation}
    
    \item \textbf{Lower bound:}
    \begin{equation}\label{eq:DPP_w_lower}
        w(t,y,p) \geq \sup_{(Z^0,Z^1)\in\cV (t,y)} \EE^{\PP}\left[ \int_t^\theta \ell(s, Z^0_s, Z^1_s, p_s)\,ds + w_*(\theta,Y_\theta,p_\theta) \right].
    \end{equation}
\end{enumerate}
where $\ell$ is defined at \eqref{eq:effective_running_reward}.
\end{proposition}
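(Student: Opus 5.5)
The plan is to follow the weak dynamic programming approach of \cite{bouchard2012weak}, adapted to the state-constrained setting of Section~\ref{sec:dirichlet}. First I would fix notation. For $(t,y,p)\in\mathcal D$ and $(Z^0,Z^1)\in\cV(t,y)$, let $(Y,p)$ denote the controlled gap and filter, driven by $dY_s=[-H^0(Z^0_s)+H^1(Z^1_s)+\kappa Y_s+\bar\lambda(Z^0_s-Z^1_s)]ds+(Z^0_s-Z^1_s)dB_s$ and by \eqref{eq:pstatec}. Let $J(t,y,p;Z^0,Z^1):=\EE^{\PP}[\int_t^T\ell(s,Z^0_s,Z^1_s,p_s)ds]$, so that $w=\sup J$. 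I would check this by applying It\^o's formula to $e^{\kappa(T-s)}(Y^0_s+Y^1_s)/2$ and using the ansatz. Two structural facts will be used repeatedly. First, $\ell$ has linear growth in $|Z^0-Z^1|$, by Lemma~\ref{lem:verify_cd}(2). Second, $\ell$ is bounded once $Z^0-Z^1$ is bounded, so $J$ is well defined and \eqref{eq:prioribound} holds.

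For the upper bound (i), I would condition at $\theta$. Fix an admissible control and use a regular conditional probability given $\cF^X_\theta$, on the canonical space $\Omega_X$. The shifted control $(Z^0,Z^1)(\omega\otimes_\theta\cdot)$ restricted to $[\theta,T]$ is then admissible from $(\theta,Y_\theta,p_\theta)(\omega)$ for a.e. $\omega$. Two things make this work. The state constraint is pathwise, so its restriction to $[\theta,T]$ still holds a.s. under the conditional law. The filter is Markovian in $(p,Z)$, since the joint law of $(X,p,Y)$ is determined by the controls through Girsanov, as in the proof of Theorem~\ref{thm:sc_rep}. Hence
\[
\EE\Big[\int_\theta^T\ell\,ds\,\Big|\,\cF^X_\theta\Big]\le w(\theta,Y_\theta,p_\theta)\le w^*(\theta,Y_\theta,p_\theta).
\]
Taking expectations and then the supremum gives (i). The measurability issues are circumvented by using $w^*$, which is upper semicontinuous and hence Borel; this is exactly the point of the weak formulation. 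On the boundary pieces $\mathcal D_d\cup\mathcal D_u$ I use $w^*$ as defined on $\mathrm{cl}_y(\mathcal D)$. That use is consistent because, by Theorem~\ref{thm:domain}(b), trajectories touching the boundary stay there.

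For the lower bound (ii), I would use the usual covering and pasting argument. Take a lower semicontinuous $\varphi\le w_*$; in the standard version of \cite{bouchard2012weak}, $\varphi$ is continuous, and one then takes a supremum over such $\varphi$. Fix $\varepsilon>0$. For each point $\pi=(s,y',p')\in\mathrm{cl}_y(\mathcal D)$, choose an $\varepsilon$-optimal control $\nu^\pi$. The key step is to show that the map $\pi'\mapsto J(\pi';\nu^\pi)$ is lower semicontinuous at $\pi$ in a suitable sense, for nearby $\pi'$. This yields a ball $B_\pi$ on which $J(\cdot;\nu^\pi)\ge\varphi-2\varepsilon$. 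I then extract a countable subcover $(B_{\pi_i})$ and form a disjoint partition $(A_i)$. The pasted control is defined to equal the given $(Z^0,Z^1)$ on $[t,\theta)$ and $\nu^{\pi_i}$ after $\theta$ on $\{(\theta,Y_\theta,p_\theta)\in A_i\}$. This pasted control is admissible: it is progressively measurable, and the state constraint holds piece by piece. Conditioning then gives $w(t,y,p)\ge\EE[\int_t^\theta\ell\,ds+\varphi(\theta,Y_\theta,p_\theta)]-3\varepsilon$. A truncation to finitely many $A_i$, together with the integrability from \eqref{eq:prioribound}, handles the tail of the countable sum.

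The main obstacle is the semicontinuity of $J(\cdot;\nu^\pi)$ near the lateral boundary. A control that is admissible from $y'$ need not be admissible from a nearby $y''$, because the gap may then exit the strip. Within the interior I would handle this by a translation. The gap dynamics is affine in $Y$ with coefficient $\kappa$, so starting from $y''$ with the same control shifts the path by $(y''-y')e^{\kappa(s-t)}$. Near the boundary, I would instead stop at the exit time and switch to the boundary-preserving controls $Z^1\in\underline\cV$ or $\overline\cV$, with $Z^0=Z^1+\underline Z$. Lemma~\ref{lem:verify_cd}(3) provides these switched controls and shows that the resulting perturbation of the payoff is small. The perturbation is small because $\ell$ is bounded on bounded $|Z^0-Z^1|$, and because the exit time tends to the original horizon as $y''\to y'$. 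In the $p$-variable, continuity follows from Lipschitz dependence of the Wonham filter on its initial condition. Since the drift coefficient $p(1-p)\Delta\lambda$ is bounded, this holds uniformly in the control.
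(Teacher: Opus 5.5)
\paragraph{A gap in the key stability step.}
Your overall architecture matches what the paper intends: conditioning at $\theta$ for (i), and covering and pasting for (ii), as in \cite{bouchard2012weak}. The paper gives no proof beyond that citation. The problem is the step you yourself single out as the main obstacle, stability of admissibility under a perturbation of the initial point. There the argument fails as written, for two reasons.

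First, translation is never admissible, even from interior points. The strip closes at $T$ ($\underline W(T)=\overline W(T)=0$), so every admissible gap has $Y_T=0$. The translated path instead ends at $(y''-y')e^{\kappa(T-t)}\neq 0$. It must therefore always be stopped at its exit time $\tau''<T$ and switched, not only near the boundary.

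Second, $\tau''$ does not tend to the horizon in general. Suppose the unperturbed gap $Y'$ reaches, say, the upper boundary at some $\tau'<T$, and $y''>y'$. Then $\tau''<\tau'$ and $\tau''\to\tau'$. On $[\tau',T]$, Theorem~\ref{thm:domain}(b) shows that $\nu^\pi$ is already boundary-preserving, with $Z^0=Z^1\in\overline{\mathcal V}$. In general, however, it is not the boundary control you switch to. Both $\ell(s,z,z,p)$ and the filter dynamics in general vary with $z\in\overline{\mathcal V}$; this is exactly what the infimum in \eqref{eq:defuplowpde} optimizes. So the loss on $[\tau',T]$ does not vanish as $y''\to y'$. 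The lower semicontinuity you need therefore fails for your modified controls. Lemma~\ref{lem:verify_cd}(3) does not supply it either: it only replaces approximate level points by exact ones and contains no payoff estimate.

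\paragraph{How to repair it, and what else is missing.}
Use a boundary control only on $[\tau'',\tau')$. On the event where $Y'$ reaches that boundary, resume $\nu^\pi$ on $[\tau',T]$; this keeps the perturbed gap on the boundary. Keep the switched control up to $T$ only on the complementary event, where indeed $\tau''\to T$. You must then show that changing the control on the shrinking interval $[\tau'',\tau')$ only slightly perturbs the law of $X$ and the posterior $p$ after $\tau'$. This follows from Girsanov (bounded drift difference) together with uniform integrability of $\int_t^T|\ell|\,ds$, via $|\ell|\le C(1+|Z^0-Z^1|)$.

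Several further points are missing.
\begin{itemize}
\item Perturbations of the initial time are not treated. They matter because the strip depends on $t$.
\item In the $p$-variable, the innovation-driven gap equation you wrote contains the drift term $\bar\lambda(p_s,Z^0_s,Z^1_s)(Z^0_s-Z^1_s)$. So changing $p$ with the same control is not a translation and again breaks $Y_T=0$. Lipschitz dependence of the filter says nothing about admissibility. It is simpler to work under $\bar P$ with $\FF^X$-adapted controls: there $\cV(t,y)$ does not involve $p$, and $J(t,y,\cdot\,;\nu)$ is affine in $p$.
\item The paper defines $w^*$ and $w_*$ as limits from the interior of $\mathcal D$. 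With that convention, $w\le w^*$ on $\mathcal D_d\cup\mathcal D_u$ is not automatic, yet (i) uses it when $Y_\theta$ lies on the boundary. Likewise $w_*\le w$ there is not automatic, yet (ii) uses it at boundary centres of the covering. The fact that boundary trajectories stay on the boundary does not give either inequality.
\item The test function $\varphi\le w_*$ in (ii) must be upper, not lower, semicontinuous for the covering step to work.
\end{itemize}
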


\subsubsection{Value at the boundary}
Following the methodology of \cite{bouchard2010optimal}, we define the functions
\begin{align*}
    \overline w^*(t,p):=w^*(t,\overline W_t,p),\,\overline w_*(t,p):=w_*(t,\overline W_t,p)\\
    \underline w^*(t,p):=w^*(t,\underline W_t,p),\, \underline w_*(t,p):=w_*(t,\underline W_t,p)    
\end{align*}
and the optimal controls in \eqref{eq:defcv} which do not depend on $(t,x,p)$ and are give by
{\begin{align}\label{eq:defv}
     \overline \cV:=\{\overline z\in \RR: H_0(\overline z)-H_1(\overline z)=\overline a\},\,\underline \cV:=\{\underline z\in \RR: H_0(\underline z)-H_1(\underline z)=\underline a\}.
\end{align}
In order to characterize $\overline w^*$ and  $\underline w^*$, we define the operators
\begin{align}\label{eq:defuplowpde}
     \overline H(t,p,\gamma)&=\inf_{z\in \overline \cV}\, -L^{z,z}(t,0,\,p, \gamma)\\
      \underline H(t,p,\gamma)&=\inf_{z\in \underline \cV}\, -L^{z,z}(t,0,\,p, \gamma)\notag
\end{align}
and recall the standard viscosity property for a PDE of the form 
\begin{equation}\label{eq:test}
-\partial_t\phi(t,p)
+H\!\left(t,\,p,\,\phi_{pp}(t,p)\right)
=0.
\end{equation}

\begin{definition}[Bounded standard viscosity solution]\label{visco:classic}
\noindent A bounded function $u:[0,T]\times (0,1)\to\mathbb{R}$ is a 
\textbf{standard viscosity supersolution} of \eqref{eq:test} if, for every test function 
$\phi\in C^{1,2}([0,T]\times (0,1))$ and every local minimum point 
$(t_0,p_0)\in [0,T)\times (0,1)$ of $u_*-\phi$:
\begin{equation}\label{eq:super}\notag
-\partial_t\phi(t_0,p_0)
+{H}^*\!\left(t_0,\,p_0,\,\phi_{pp}(t_0,p_0)\right)
\;\ge\; 0.
\end{equation}

\noindent A bounded function $u:[0,T]\times (0,1)\to\mathbb{R}$ is a 
\textbf{standard viscosity subsolution} of \eqref{eq:test} if, for every test function 
$\phi\in C^{1,2}([0,T]\times (0,1))$ and every local maximum point 
$(t_0,p_0)\in [0,T)\times (0,1)$ of $u^*-\phi$:
\begin{equation}\label{eq:subs}\notag
-\partial_t\phi(t_0,p_0)
+{H}_*\!\left(t_0,\,p_0,\,\phi_{pp}(t_0,p_0)\right)
\;\le\; 0.
\end{equation}

\noindent Here ${H}^*$ and ${H}_*$ denote the upper and lower semicontinuous envelopes of ${H}$. A bounded function $u$ is a \textbf{standard viscosity solution} of \eqref{eq:hjb} if it is simultaneously a standard viscosity supersolution and a standard viscosity subsolution.
\end{definition}

\begin{proposition}\label{prop:boundary}
    Under Assumption \ref{assumption:cd}, $\overline w^*$ is a bounded standard viscosity subsolution and $\overline w_*$ is a bounded standard viscosity supersolution to 
        \begin{equation}\label{eq:pdeup}
-\partial_t u + \overline H (t,\,p, u_{pp})=0,
    \end{equation}
and $\underline w^*$ is a bounded viscosity subsolution and $\underline w_*$ is a bounded viscosity supersolution to         
\begin{equation}\label{eq:pdedown}
-\partial_t u + \underline H(t,\,p,u_{pp})=0,
    \end{equation}
    with terminal condition $u(T)=0$.
\end{proposition}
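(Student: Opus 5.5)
The plan is to follow \cite{bouchard2010optimal} and derive the boundary PDEs from the interior viscosity properties of Theorem~\ref{thm:viscosity} together with the boundary characterization of Theorem~\ref{thm:domain}(b). Boundedness of $\overline w^*,\overline w_*,\underline w^*,\underline w_*$ is immediate from \eqref{eq:prioribound}, since $y=\overline W_t$ or $\underline W_t$ is bounded on $[0,T]$. The terminal condition $u(T)=0$ follows from Proposition~\ref{prop:supersol} and $\overline W(T)=\underline W(T)=0$. I describe the upper boundary; the lower one is symmetric, with $\underline\cV,\underline a$ in place of $\overline\cV,\overline a$.

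\emph{Subsolution property of $\overline w^*$.} Let $\varphi(t,p)$ be smooth and let $(t_0,p_0)$ be a strict local maximum of $\overline w^*-\varphi$ with value $0$. I lift it to a test function on $\mathrm{cl}_y(\cD)$ that penalizes the gap direction:
\[
\Phi_\varepsilon(t,y,p):=\varphi(t,p)+\frac{(\overline W(t)-y)^2}{\varepsilon}\cdot 0+\frac{\overline W(t)-y}{\varepsilon}.
\]
Since $y\le\overline W(t)$ on $\mathrm{cl}_y(\cD)$, the last term is nonnegative. Localizing by a standard argument, local maxima $(t_\varepsilon,y_\varepsilon,p_\varepsilon)$ of $w^*-\Phi_\varepsilon$ converge to $(t_0,\overline W(t_0),p_0)$ and satisfy $(\overline W(t_\varepsilon)-y_\varepsilon)/\varepsilon\to0$.

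By Theorem~\ref{thm:viscosity}(2), the subsolution inequality holds at these points, now with a large gradient $q=-1/\varepsilon$ in $y$. Dividing the inequality $-\partial_t\Phi_\varepsilon+H_*\le0$ by $1/\varepsilon$ and using the explicit form \eqref{eq:defLinside}, the dominant term is
\[
\bigl[H^0(z^0)-H^1(z^1)-\kappa y-\bar\lambda(z^0-z^1)\bigr]-\partial_t\overline W .
\]
This term is nonpositive for every control, because $-\overline W'+\kappa\overline W=\overline a\ge H^0(z)-H^1(z)$. Near-optimal controls must therefore asymptotically satisfy $z^0-z^1\to0$ and $H^0(z^0)-H^1(z^0)\to\overline a$. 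For this step I will use Lemma~\ref{lem:verify_cd}(2)--(3): part (2) keeps controls with $|z^0+z^1|$ large from mattering, so that by $p$-independence they can be handled separately, and part (3) replaces near-maximizers by exact elements $\tilde z\in\overline\cV$. Passing to the limit then gives $-\partial_t\varphi+\overline H(t_0,p_0,\varphi_{pp})\le0$, using the lower semicontinuity of $\overline H$ in the lifted limit.

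\emph{Supersolution property of $\overline w_*$.} Here I would not go through the interior PDE. Instead I use the weak DPP, Proposition~\ref{prop:DPP_w}(ii), started at interior points $(t,y,p)\in\cD$ near the boundary. Fix $\bar z\in\overline\cV$ and use the constant control $Z^0=Z^1=\bar z$. Then $Y=Y^0-Y^1$ has zero volatility and satisfies $dY=(\overline a-\kappa\cdot{})\,dt$ type dynamics, i.e. the same ODE as $\overline W$. Hence, starting strictly below $\overline W$, the trajectory stays at a deterministic distance from it that shrinks with the initial distance, and it remains in $\mathrm{cl}_y(\cD)$ because $\underline W<\overline W$. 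Meanwhile $p$ diffuses with volatility $p(1-p)(A^0(\bar z)-A^1(\bar z))$.

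Plugging this control into the DPP with a short stopping time $\theta$ against a smooth test function $\varphi$ touching $\overline w_*$ from below at $(t_0,p_0)$, and applying It\^o's formula, gives $-\partial_t\varphi-L^{\bar z,\bar z}(t_0,0,p_0,\varphi_{pp})\ge0$ for each $\bar z\in\overline\cV$. Taking the infimum over $\bar z$ yields the supersolution inequality with $\overline H$; note that $\overline H^*\ge\overline H$.

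\emph{Main obstacle.} The main difficulty is the subsolution step: the blow-up argument must rigorously extract the restriction to $\overline\cV$ from the $1/\varepsilon$ term while controls remain unbounded. Unboundedness in the $z^0+z^1$ direction is exactly what Lemma~\ref{lem:verify_cd}(2) is designed to neutralize, and I expect most of the technical work to be there, as in \cite{bouchard2010optimal}.
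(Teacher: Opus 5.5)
Your supersolution half is fine, but the subsolution step does not work with your test function. With $\Phi_\varepsilon=\varphi+(\overline W-y)/\varepsilon$ you have $\partial_{yy}\Phi_\varepsilon=\partial_{yp}\Phi_\varepsilon=0$. Hence $-L^{z^0,z^1}$, evaluated at the jet of $\Phi_\varepsilon$, contains no term quadratic in $d:=z^0-z^1$.

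With the signs corrected, the $1/\varepsilon$ coefficient of $-\partial_t\Phi_\varepsilon-L^{z^0,z^1}$ is $\overline a-\bigl(H^0(z^0)-H^1(z^1)\bigr)+\bar\lambda(p,z^0,z^1)\,d-\kappa(\overline W-y)$. The bound $\overline a\ge H^0(z)-H^1(z)$ signs this only on the diagonal $z^0=z^1$. Off the diagonal it is unbounded below. Fix $z^1$ with $A^1(z^1)=a_{\min}$ and let $z^0\to+\infty$. Then $H^0(z^0)-H^1(z^1)-\bar\lambda d=(1-p)(a_{\max}-a_{\min})z^0+O(1)$, and by \eqref{eq:identity_step3} also $\ell\sim\tfrac{e^{\kappa(T-t)}}{2}(1-p)(a_{\max}-a_{\min})z^0$. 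So $-L^{z^0,z^1}\to-\infty$.

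Consequently $H=H_*=-\infty$ at this jet. The subsolution inequality of Theorem~\ref{thm:viscosity} then holds trivially and yields nothing. In particular it gives neither $z^0-z^1\to0$ nor the restriction to $\overline\cV$; your argument has no mechanism producing either. What is being missed is the volatility of the gap process $Y^0-Y^1$, which a first-order penalty cannot detect. For the same reason, the obstacle you name is misplaced. The dangerous direction is $z^0-z^1$, not $z^0+z^1$. Lemma~\ref{lem:verify_cd}(2) only gives bounds linear in $|z^0-z^1|$, uniformly in $z^0+z^1$; it does not control $z^0-z^1$.

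The missing idea is a penalty that is concave in $y$ near the boundary. The paper uses $\varphi_n=\phi+2\varepsilon n\delta-\varepsilon n^2\delta^2$ on $\{0\le\delta:=\overline W-y\le n^{-1}\}$.
\begin{itemize}
\item It still dominates $\phi$, so maxima localize at the boundary with $n\delta_n\to0$.
\item Its $y$-gradient $-2\varepsilon n(1-n\delta)$ keeps the sign needed for the first-order term.
\item Its curvature $\partial_{yy}\varphi_n=-2\varepsilon n^2$ adds the coercive term $\varepsilon n^2d^2$ to $-L$.
\end{itemize}
Combined with Lemma~\ref{lem:verify_cd}(2), this makes $H$ finite and forces $|d_n|\le C/\sqrt n$ for near-optimizers. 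Only then does the order-$n$ first-order term force $\overline a-H^0(Z^0_n)+H^1(Z^0_n)\to0$. Lemma~\ref{lem:verify_cd}(3) then produces $\tilde Z_n\in\overline\cV$ with $\tilde Z_n-Z^0_n\to0$, and $\overline H_*$ appears in the limit.

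Your supersolution argument is slightly more direct than the paper's. You skip the penalization and the interior/boundary case split by passing to the limit in the DPP lower bound from interior points, using lower semicontinuity of $w_*$ and Fatou. One correction is needed: the constant control $Z^0=Z^1=\bar z\in\overline\cV$ is not itself in $\cV(t,y)$. Since $\overline W(s)-Y_s=e^{\kappa(s-t)}(\overline W(t)-y)>0$ while $\overline W(T)=\underline W(T)$, it drives $Y$ below $\underline W$ before $T$. So use it only up to $\theta$ and then concatenate with an admissible continuation, which exists by Theorem~\ref{thm:domain}(a).
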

Clearly, if we have a comparison for \eqref{eq:pdedown} and \eqref{eq:pdeup}, we have that $\overline w^*=\overline w_*$ and 
$\underline w^*=\underline w_*.$
This implies that for all $t\in[0,T]$ we have 
\begin{align}\label{eq:bd}
    w^*=w_* \mbox{ on }\cD_d\cup \cD_u.
\end{align}

This equality finally allows us to obtain the continuity of $w$ at the lateral boundary, which is the crucial step to characterize it as the unique viscosity solution to a Dirichlet boundary value problem. The common values $\overline w^*=\overline w_*$ on $\cD_u$ and $\underline w^*=\underline w_*$ on $ \cD_d$ are the utility that the principal gets when he reaches the boundary of the domain $\cD$. 
\begin{theorem}\label{thm:compinside}
Assume Assumptions \eqref{assumption:cd} and that \eqref{eq:pdeup} and \eqref{eq:pdedown} admits a comparison and unique solution $\overline w(t,p)$ and $\underline w(t,p)$. 
Then, $w$ is the unique continuous bounded viscosity solution to \eqref{eq:hjb} in $\cD$ and 
$w(t,\overline W(t),p)=\overline w(t,p)$
on $\cD_u$ and 
    $w(t,\underline W(t),p)=\underline w(t,p)$
on $\cD_d$.
with terminal condition $w(T,0,p)=0$.
\end{theorem}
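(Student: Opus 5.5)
The argument combines three ingredients. The first is the interior viscosity properties of Theorem~\ref{thm:viscosity}. The second is the boundary identification (\ref{eq:bd}) coming from Proposition~\ref{prop:boundary} together with the assumed comparison for \eqref{eq:pdeup}--\eqref{eq:pdedown}. The third is a comparison principle for the Dirichlet problem on the strip $\cD$, proved in the spirit of \cite{bouchard2010optimal}.

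\textbf{Step 1: boundary values.} By Proposition~\ref{prop:boundary}, $\overline w^*$ is a bounded subsolution and $\overline w_*$ a bounded supersolution of \eqref{eq:pdeup}, both with terminal value $0$. The assumed comparison then gives $\overline w_*\ge \overline w\ge \overline w^*$. Since $\overline w_*\le\overline w^*$ trivially, we get $\overline w^*=\overline w_*=\overline w$, and in the same way $\underline w^*=\underline w_*=\underline w$. Thus $w^*=w_*$ on $\cD_u\cup\cD_d$, with the prescribed values there. At $t=T$, the a priori bound \eqref{eq:prioribound} of Proposition~\ref{prop:supersol} forces $w^*(T,0,p)=w_*(T,0,p)=0$.

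\textbf{Step 2: Dirichlet comparison.} Next I would prove that any bounded upper semicontinuous subsolution $u$ and lower semicontinuous supersolution $v$ of \eqref{eq:hjb} in $\cD$ with $u\le v$ on the parabolic boundary $\cD_d\cup\cD_u\cup\{t=T\}$ satisfy $u\le v$ in $\cD$. The construction has four parts.
\begin{itemize}
\item[(a)] I use a strict perturbation $u_\varepsilon=u-\varepsilon(T-t)-\varepsilon\phi$, with $\phi$ a Lyapunov-type function. A natural choice is $\phi=-\log p-\log(1-p)$, penalizing $p\to0,1$; its generator is controlled because the $p$-diffusion is $p(1-p)\times$bounded.
\item[(b)] I double variables only in $(y,p)$ with the penalty $\frac{n}{2}(|y-y'|^2+|p-p'|^2)$ and apply the Crandall--Ishii--Lions lemma.
\item[(c)] The key difficulty is that $H$ is an infimum over the unbounded set $(z^0,z^1)\in\RR^2$, so it is not continuous in the usual uniform sense. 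Here I would use Lemma~\ref{lem:verify_cd}(ii). Along $|z^0+z^1|\to\infty$ with $z^0-z^1$ bounded, the generator no longer depends on $p$, and the running reward is controlled by $C(1+|z^0-z^1|)$. Moreover, the $y$-diffusion $z^0-z^1$ enters quadratically against $N$. Using these facts, I would show that on the relevant set of test data the infimum in \eqref{eq:defHinside} can be restricted to a compact set of controls depending only on $(q,N)$ bounds. With that restriction, the standard estimate $H(t,y',p',q,N')-H(t,y,p,q,N)\le \omega(n|y-y'|^2+n|p-p'|^2+|y-y'|+|p-p'|)$ follows from Lipschitz continuity of $H^\theta$ and $A^\theta$ (Lemma~\ref{lem:verify_cd}(i)).
\item[(d)] If the maximum point approaches the lateral boundary, continuity of $u\le v$ there (Step~1) yields the contradiction, exactly as in a Dirichlet problem.
\end{itemize}

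\textbf{Step 3: conclusion.} I apply Step~2 with $u=w^*$ and $v=w_*$; these are admissible by Theorem~\ref{thm:viscosity} and Step~1. This gives $w^*\le w_*$ in $\cD$, hence $w$ is continuous on $\mathrm{cl}_y(\cD)$ and is a viscosity solution with the stated boundary and terminal data. Uniqueness follows by applying the same comparison to any other bounded continuous solution $\tilde w$ carrying the same Dirichlet data, in both directions.

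The main obstacle is Step~2(c): the unbounded control set, together with the degeneracy of the $p$-diffusion at $p\in\{0,1\}$ and of the $y$-diffusion when $z^0=z^1$, which is exactly the situation on the boundary. If the compactification argument fails, my fallback is to prove comparison only among functions satisfying the quadratic bound \eqref{eq:prioribound}, using $\phi(t,y)$ and $\psi(t,y)$ from Proposition~\ref{prop:supersol} as barriers to localize.
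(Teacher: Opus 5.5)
Your overall architecture matches the paper's: identify $w^*=w_*$ on $\cD_u\cup\cD_d$ from Proposition~\ref{prop:boundary} and the assumed comparison, prove a Dirichlet comparison in $\cD$ with barriers in $t$ and $p$, and apply it to $(w^*,w_*)$. Your observation that the $z^0+z^1$ direction can be compactified for fixed $z^0-z^1$ is also correct. The gap is in Step~2(c).

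\emph{Why Step~2(c) fails.} The only coercivity in $d:=z^0-z^1$ comes from $-\tfrac12 N_{yy}d^2$. With a $y$-independent perturbation and the penalty $\tfrac n2(|y-y'|^2+|p-p'|^2)$, Ishii's matrix inequality gives only $(X_n)_{yy}\le (Y_n)_{yy}$. The supersolution inequality forces $(Y_n)_{yy}\le 0$, since otherwise $H(\cdot,Y_n)=-\infty$. So $(X_n)_{yy}$ may equal $0$.

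At such a jet $H_*=-\infty$, because nearby matrices with positive $yy$-entry make the infimum $-\infty$. Your subsolution inequality then carries no information. Even if $(X_n)_{yy}<0$ but tends to $0$, near-optimisers on the subsolution side can have $|d|\to\infty$. Then the $p$-mismatch terms $(\bar p-p)\,(A^0-A^1)\,d\,q$ and $\ell(\bar p)-\ell(p)=O\bigl((1+|d|)|\bar p-p|\bigr)$ are not controlled.

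Moreover, $q_n=n(\bar y_n-y_n)$ and $X_n,Y_n$ are unbounded in $n$. A compact control set ``depending on $(q,N)$ bounds'' therefore yields no modulus uniform in $n$. Your fallback does not help: $y$ is already confined to the strip, and the missing ingredient is coercivity, not localisation.

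\emph{The missing idea.} This is what the paper does: build strict concavity in $y$ into the subsolution's test function. Maximise
\[
\tilde u_\varepsilon-\tilde v-\tfrac n2(\bar y-y)^2-\tfrac{n^2}{2}(\bar p-p)^2+\varepsilon(\bar y-y_0)^2 .
\]
Testing Ishii's inequality with the vector $(1,0,1,0)$ gives $(X_n)_{yy}\le (Y_n)_{yy}-\varepsilon\le-\varepsilon$. Hence $H^\lambda$ is finite and continuous at $X_n$, so $(H^\lambda)_*=H^\lambda$ there. The term $-\varepsilon d^2$ then absorbs every linear-in-$|d|$ error by Young's inequality (Lemma~\ref{lem:Hcomp}).

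The price is a gradient mismatch $|q_n^u-q_n^v|\le C\varepsilon$, hence an $O(\varepsilon)$ Hamiltonian error. The paper beats this with the zeroth-order term coming from $\tilde v=e^{\lambda t}v$, which yields $\lambda\sup(u-v)\le C\varepsilon$. Alternatively, you could take the concavity parameter much smaller than your strictness margin.

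\emph{A smaller point.} Your perturbation $-\varepsilon(T-t)-\varepsilon\phi(p)$ has strictness margin $\varepsilon(1-K)$, where $K=\tfrac12\sup p^2(1-p)^2|A^0(z^0)-A^1(z^1)|^2\phi''(p)$. This margin need not be positive. Scale the two terms differently, as the paper does with $\varepsilon/(T-t)$ and $\varepsilon^2 f(p)$.
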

Given the singularities of the Hamiltonian the viscosity property is to be understood in the sense of Definition \ref{eq:defstatc} but both sub and supersolution properties are required only on $\cD$. Thus, the viscosity property of $w$ is a direct consequence of the state constraint viscosity property where we do not use the subsolution property on the boundary.
The boundary values are also a direct consequence of Proposition \ref{prop:boundary}. Thus, the theorem only requires us to prove a comparison result for this PDE with given upper and lower semi continuous limits on the boundary, \eqref{eq:bd}. 

 The theorem states that the value of the principal can be described by an optimal control problem where the principal receives the utility $\overline w$ or $\underline w$ when the state reaches one of the boundaries, see Figure \ref{fig:trajectory}.

\begin{figure}[ht]
\centering
\begin{tikzpicture}[
  scale=0.78,
  >=Latex,
  declare function={
    Wup(\t)  = (1.15/0.5) * (1 - exp(-0.5*(8-\t)));
    Wlow(\t) = (-0.85/0.5) * (1 - exp(-0.5*(8-\t)));
  }
]
  \draw[->] (-0.15,0)--(9,0) node[right]{$t$};
  \draw[->] (0,-2.1)--(0,3.6) node[above]{$Y^0_t-Y^1_t$};
  \draw[dashed,gray!50] (8,-2.1)--(8,3.4);
  \node[below=1pt,font=\small] at (8,0) {$T$};
  \node[below left=-1pt,font=\small] at (0,0) {$0$};

  \draw[very thick,domain=0:8,samples=160,smooth]
    plot (\x,{Wup(\x)});
  \node[right,font=\small,fill=white,inner sep=1pt] at (6.6,1.25)
    {$\oW(t)=\tfrac{\bar a}{\kappa}(1{-}e^{-\kappa(T-t)})$};

  \draw[very thick,dashed,domain=0:8,samples=160,smooth]
    plot (\x,{Wlow(\x)});
  \node[right,font=\small,fill=white,inner sep=1pt] at (5.8,-1.25)
    {$\uW(t)=\tfrac{\underline a}{\kappa}(1{-}e^{-\kappa(T-t)})$};

  \draw[thick]
    (0.000,0.218) -- (0.036,0.245) -- (0.073,0.312) -- (0.109,0.277) --
    (0.145,0.189) -- (0.182,0.186) -- (0.218,0.220) -- (0.255,0.294) --
    (0.291,0.213) -- (0.327,0.128) -- (0.364,0.178) -- (0.400,0.211) --
    (0.436,0.315) -- (0.473,0.381) -- (0.509,0.374) -- (0.545,0.355) --
    (0.582,0.527) -- (0.618,0.505) -- (0.655,0.451) -- (0.691,0.374) --
    (0.727,0.380) -- (0.764,0.390) -- (0.800,0.367) -- (0.836,0.363) --
    (0.873,0.382) -- (0.909,0.428) -- (0.945,0.501) -- (0.982,0.514) --
    (1.018,0.393) -- (1.055,0.441) -- (1.091,0.346) -- (1.127,0.336) --
    (1.164,0.239) -- (1.200,0.242) -- (1.236,0.193) -- (1.273,0.210) --
    (1.309,0.434) -- (1.345,0.432) -- (1.382,0.489) -- (1.418,0.401) --
    (1.455,0.383) -- (1.491,0.429) -- (1.527,0.424) -- (1.564,0.450) --
    (1.600,0.459) -- (1.636,0.393) -- (1.673,0.433) -- (1.709,0.380) --
    (1.745,0.508) -- (1.782,0.475) -- (1.818,0.523) -- (1.855,0.524) --
    (1.891,0.589) -- (1.927,0.550) -- (1.964,0.618) -- (2.000,0.610) --
    (2.036,0.695) -- (2.073,0.736) -- (2.109,0.745) -- (2.145,0.693) --
    (2.182,0.744) -- (2.218,0.775) -- (2.255,0.884) -- (2.291,0.857) --
    (2.327,0.893) -- (2.364,0.925) -- (2.400,0.943) -- (2.436,0.955) --
    (2.473,0.965) -- (2.509,0.923) -- (2.545,0.854) -- (2.582,0.828) --
    (2.618,0.867) -- (2.655,0.968) -- (2.691,1.032) -- (2.727,1.101) --
    (2.764,1.150) -- (2.800,1.194) -- (2.836,1.203) -- (2.873,1.251) --
    (2.909,1.366) -- (2.945,1.339) -- (2.982,1.288) -- (3.018,1.415) --
    (3.055,1.426) -- (3.091,1.486) -- (3.127,1.538) -- (3.164,1.456) --
    (3.200,1.614) -- (3.236,1.725) -- (3.273,1.729) -- (3.309,1.772) --
    (3.345,1.772) -- (3.382,1.705) -- (3.418,1.690) -- (3.455,1.708) --
    (3.491,1.776) -- (3.527,1.816) -- (3.564,1.812) -- (3.600,1.895) --
    (3.636,1.836) -- (3.673,1.882) -- (3.709,1.948) -- (3.745,1.920) --
    (3.782,1.793) -- (3.818,1.726) -- (3.855,1.737) -- (3.891,1.767) --
    (3.927,1.865) -- (3.964,1.792) -- (4.000,1.787) -- (4.036,1.826) --
    (4.073,1.810) -- (4.109,1.647) -- (4.145,1.698) -- (4.182,1.850) --
    (4.218,1.912) -- (4.255,1.846) -- (4.291,1.814) -- (4.327,1.826) --
    (4.364,1.917) -- (4.400,1.910) -- (4.436,1.903) -- (4.473,1.896) --
    (4.509,1.887) -- (4.545,1.881) -- (4.582,1.833) -- (4.618,1.799) --
    (4.655,1.846) -- (4.691,1.850) -- (4.727,1.842) -- (4.764,1.834) --
    (4.800,1.712) -- (4.836,1.665) -- (4.873,1.626) -- (4.909,1.810) --
    (4.945,1.801) -- (4.982,1.791) -- (5.018,1.782) -- (5.055,1.773) --
    (5.091,1.763) -- (5.127,1.753) -- (5.164,1.743) -- (5.200,1.733) --
    (5.236,1.722) -- (5.273,1.712) -- (5.309,1.701) -- (5.345,1.690) --
    (5.382,1.679) -- (5.418,1.667) -- (5.455,1.656) -- (5.491,1.644) --
    (5.527,1.632) -- (5.564,1.620) -- (5.600,1.607) -- (5.636,1.595) --
    (5.673,1.582) -- (5.709,1.568) -- (5.745,1.555) -- (5.782,1.541) --
    (5.818,1.527) -- (5.855,1.513) -- (5.891,1.499) -- (5.927,1.484) --
    (5.964,1.469) -- (6.000,1.454) -- (6.036,1.438) -- (6.073,1.423) --
    (6.109,1.406) -- (6.145,1.390) -- (6.182,1.373) -- (6.218,1.356) --
    (6.255,1.339) -- (6.291,1.321) -- (6.327,1.303) -- (6.364,1.285) --
    (6.400,1.267) -- (6.436,1.248) -- (6.473,1.228) -- (6.509,1.209) --
    (6.545,1.189) -- (6.582,1.168) -- (6.618,1.147) -- (6.655,1.126) --
    (6.691,1.105) -- (6.727,1.083) -- (6.764,1.060) -- (6.800,1.038) --
    (6.836,1.015) -- (6.873,0.991) -- (6.909,0.967) -- (6.945,0.943) --
    (6.982,0.918) -- (7.018,0.892) -- (7.055,0.866) -- (7.091,0.840) --
    (7.127,0.813) -- (7.164,0.786) -- (7.200,0.758) -- (7.236,0.730) --
    (7.273,0.701) -- (7.309,0.672) -- (7.345,0.642) -- (7.382,0.612) --
    (7.418,0.581) -- (7.455,0.549) -- (7.491,0.517) -- (7.527,0.484) --
    (7.564,0.451) -- (7.600,0.417) -- (7.636,0.382) -- (7.673,0.347) --
    (7.709,0.311) -- (7.745,0.275) -- (7.782,0.238) -- (7.818,0.200) --
    (7.855,0.161) -- (7.891,0.122) -- (7.927,0.082) -- (7.964,0.041) --
    (8.000,0.000);
  \node[font=\small,anchor=west] at (0.2,1.05) {$Y^0_t-Y^1_t$};
  \draw[->,gray!60,shorten >=2pt] (1.15,1.0) to[out=-10,in=110] (1.5,0.45);

  \fill (4.909,1.810) circle (2pt);
  \draw[->,shorten >=3pt] (3.1,3.1) to[out=-5,in=125] (4.87,1.88);
  \node[font=\small,align=left,anchor=west] at (0.3,3.3)
    {boundary hit at $t=t^*$:};
  \node[font=\small,anchor=west] at (0.3,2.95)
    {$Z^1_{s}\!\in\!\oV,\;\;Z^0_{s}=Z^1_{s},\,\forall s\in[t^*,T]$};
\end{tikzpicture}
\caption{A typical trajectory of the gap process $Y^0_t-Y^1_t$ under
Theorem~\ref{thm:compinside} and Theorem~\ref{thm:domain}(b),
illustrated in the simpler $x$-independent setting of Section~\ref{sec:dirichlet}
(so that $\uW,\oW$ depend only on $t$ and are given by the explicit
formulas shown). The trajectory remains in the strip
$[\uW(t),\oW(t)]$ for all $t\in[0,T]$ and terminates at $0$ at $t=T$
(both boundaries vanish at $T$). At an interior time, $(Z^0_t,Z^1_t)\in\RR^d\times\RR^d$
is unconstrained; at the boundary-hitting time $t^*$ shown, the matching
condition activates: $Z^1_{s}\in\oV(s,X_{s},\oZ_{s})=\oV$ and
$Z^0_{s}=Z^1_{s}$ for all $s\in[t^*,T]$ (the symmetric condition with
$\uV$ holds at lower-boundary hits).}
\label{fig:trajectory}
\end{figure}
In Section \ref{sec:benchmark}, we numerically solve \eqref{eq:hjb} for the following two examples. 

\begin{example}[Dominated case]\label{ex:dom}
In this example, we consider the case where type $0$ is the good agent (with lower cost) and type $1$ is the bad agent (with higher cost).

For $\theta\in\{0,1\}$, let
\[
\lambda(t,x,\theta,\alpha)=\alpha,
\qquad
c(t,x,\theta,\alpha)=\frac{j_\theta}{2}\alpha^2,
\quad
j_\theta=\theta+1.
\]
Let the admissible action set be
\[
A=[0,\sqrt{2\overline a}],
\]
Then, for each $\theta\in\{0,1\}$, the Hamiltonian is
\[
H^\theta(z^\theta)
=
\sup_{\alpha\in A}
\left\{
z^\theta \alpha-\frac{j_\theta}{2}\alpha^2
\right\}.
\]
The corresponding optimizer is given by
\[
A^\theta(z^\theta)=\Pi_A\!\left(\frac{z^\theta}{j_\theta}\right),
\]
where $\Pi_A:\mathbb{R}\to A$ denotes the projection onto $A$, namely
\[
\Pi_A(r):=\min\{\max\{r,0\},\sqrt{2\overline a}\}, \qquad r\in\mathbb{R}.
\]
In other words, the optimal feedback is obtained by truncating the unconstrained maximizer $\frac{z^\theta}{j_\theta}$ to the admissible interval $A$.

Moreover, the boundary functions $\underline W$ and $\overline W$ are independent of $x$. Since
\[
\underline a=\inf_{z\in\mathbb{R}}\bigl(H^0(z)-H^1(z)\bigr)=0,
\qquad
\overline a=\sup_{z\in\mathbb{R}}\bigl(H^0(z)-H^1(z)\bigr),
\]
the ODEs in \eqref{def:w} yield
\[
\underline W(t)=0,
\qquad
\overline W(t)=\frac{\overline a}{\kappa}\bigl(1-e^{-\kappa(T-t)}\bigr).
\]
In particular,
\[
\underline W(t)<\overline W(t),
\qquad t\in[0,T).
\]
For this Example 
we have
\begin{align*}
\overline{H}(t,p,\gamma) &= -\sqrt{2 \bar{a}}+\frac{3}{2} \bar{a} e^{\kappa(T-t)} , \\
\underline{H}(t,p,\gamma) &= 0.
\end{align*}
and \eqref{eq:pdedown} and \eqref{eq:pdeup} admit the following solutions

\begin{align*}
\overline{w}(t,p)=\overline{w}(t) &= \sqrt{2 \bar{a}}(T-t)-\frac{3 \bar{a}}{2 \kappa}\left(e^{\kappa(T-t)}-1\right) ., \\
\underline{w}(t,p)=\underline{w}(t) &= 0.
\end{align*}

\end{example}
\begin{example}[Non-dominated case]\label{ex:ndom}
In this example, we consider the non-dominated case, meaning that neither type uniformly dominates the other. In particular, we have $\underline a<0<\overline a$.

For $\theta\in\{0,1\}$, let
\[
\lambda(t,x,\theta,\alpha)=\alpha,
\qquad
c(t,x,\theta,\alpha)=\frac{1}{2}\alpha^2+(-1)^{\theta+1}\alpha.
\]
Let the admissible action set be
\[
A=\left[\frac{\underline a}{2},\frac{\overline a}{2}\right],
\]
Then, for each $\theta\in\{0,1\}$, the Hamiltonian is given by
\[
H^\theta(z)
=
\sup_{\alpha\in A}
\left\{
z\alpha-\left(\frac{1}{2}\alpha^2+(-1)^{\theta+1}\alpha\right)
\right\}.
\]
The corresponding optimizer is
\[
A^\theta(z)=\Pi_A\!\left(z+(-1)^\theta\right),
\]
where $\Pi_A:\RR\to A$ denotes the projection onto $A$, namely
\[
\Pi_A(r):=\min\left\{\max\left\{r,\frac{\underline a}{2}\right\},\frac{\overline a}{2}\right\},
\qquad r\in\RR.
\]
In other words, the optimal feedback is obtained by truncating the unconstrained maximizer $z+(-1)^\theta$ to the admissible interval $A$.

Moreover, the boundary functions $\underline W$ and $\overline W$ are independent of $x$. Since
\[
\inf_{z\in\RR}\bigl(H^0(z)-H^1(z)\bigr)=\underline a,
\qquad
\sup_{z\in\RR}\bigl(H^0(z)-H^1(z)\bigr)=\overline a,
\]
the ODEs in \eqref{def:w} yield
\[
\underline W(t)=\frac{\underline a}{\kappa}\bigl(1-e^{-\kappa(T-t)}\bigr),
\qquad
\overline W(t)=\frac{\overline a}{\kappa}\bigl(1-e^{-\kappa(T-t)}\bigr).
\]
In particular,
\[
\underline W(t)<\overline W(t),
\qquad t\in[0,T).
\]
For this Example \ref{ex:ndom}
we have
\begin{align*}
\overline{H}(t,p,\gamma) &= -\frac{\bar{a}}{2} + \frac{\bar{a}^2}{8}\,e^{\kappa(T-t)}, \\
\underline{H}(t,p,\gamma) &= -\frac{\underline{a}}{2} + \frac{\underline{a}^2}{8}\,e^{\kappa(T-t)}.
\end{align*}
and \eqref{eq:pdedown} and \eqref{eq:pdeup} admits the following solutions
\begin{align*}
\overline{w}(t,p)=\overline{w}(t) &= \frac{\bar{a}}{2}\,(T-t) - \frac{\bar{a}^2}{8\kappa}\bigl(e^{\kappa(T-t)}-1\bigr), \\
\underline{w}(t,p)=\underline{w}(t) &= \frac{\underline{a}}{2}\,(T-t) - \frac{\underline{a}^2}{8\kappa}\bigl(e^{\kappa(T-t)}-1\bigr).
\end{align*}

\end{example}

\section{Structure of optimal contracts}\label{s:optcontract}
The Theorem \ref{thm:compinside} fully characterizes $V_{sc}(t,y_0,y_1,p)= -\frac{e^{\kappa(T-t)}}{2}\,(y_0+y_1) + w(t,y_0-y_1,p)$ where $w$ is the solution to  
\begin{align}
  -\partial_t w(t,y,p)
    + H\!\bigl(t,y,p,\nabla w(t,y,p),\nabla^2 w(t,y,p)\bigr)
    &\;=\; 0,
    && (t,y,p)\in \cD, \\[0.3em]
  -\partial_t w(t,y,p)
    + \overline H\bigl(t,p,\partial_{pp}w(t,y,p)\bigr)
    &= 0,
    && (t,p)\in [0,T)\times (0,1),\ y=\overline W(t), \\[0.3em]
  -\partial_t w(t,y,p)
    + \underline H\bigl(t,p,\partial_{pp}w(t,y,p)\bigr)
    &= 0,
    && (t,p)\in [0,T)\times (0,1),\ y=\underline W(t).
\end{align}
     with terminal condition $w(T,0,p)=0.$ 

Assume now that this PDE has a smooth solution and for all $(t,y,p)\in \mathrm{cl}_y(\cD)$ choose optimizer $$(Z^*_1(t,y,p),Z^*_2(t,y,p))$$ for $H\!\bigl(t,y,p,\nabla w(t,y,p),\nabla^2 w(t,y,p)\bigr)$ defined in \eqref{eq:defHinside}. Similarly, on the lateral boundaries, given the definition \eqref{eq:defuplowpde}, choose optimizer $Z^*(t,y,p)$ for $y=\overline W(t)$ or $y=\underline W(t)$ and extend the definition of $(Z^*_1,Z^*_2)$ to this lateral boundary by taking 
$(Z^*_1(t,y,p),Z^*_2(t,y,p))=(Z^*(t,y,p),Z^*(t,y,p))$ (the equality of the controls is needed to kill the noise of $Y$). Given the definition of $\overline H,\underline H$ and $\overline \cV,\underline \cV$, the controls at the boundary satisfy the state constraint and we have the following verification result that shows that the optimal contracts are highly non-trivial and in fact depend on the continuation utilities and the belief of the principal on the private information of the agent. 

\begin{theorem}[Verification Theorem] Assume that $w\in C^{1,2}(\mathrm{cl}_y(\cD))$, let $(y_0,y_1)$ be optimizers of \eqref{eq:repvaluestatic1sc} or \eqref{eq:repvaluestaticsc_uc}, and let $(Z^*_0(t,y,p),Z^*_1(t,y,p))$ be defined as above.
Then the contract
\[
\xi^*\;:=\;\frac{Y^{0,0,X_0,y_0,Z^*_0}_T}{\beta_0}\;=\;\frac{Y^{1,0,X_0,y_1,Z^*_1}_T}{\beta_1},
\]
with $Y^{\theta,0,X_0,y_\theta,Z^*_\theta}$ defined in \eqref{fsde}, is an optimal contract for the principal. 
\end{theorem}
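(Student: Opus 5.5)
The plan is a standard verification argument carried out on the reduced problem of Section~\ref{sec:dirichlet}, combined with the representations of Theorem~\ref{thm:sc_rep} and Proposition~\ref{prop:principal_target}. There are three steps: show the feedback controls generate an admissible pair in $\cV(y_0,y_1)$; show they attain $V_{sc}(0,X_0,y_0,y_1,p_0)$; and conclude that $\xi^*$ attains $V_{p,c}$ (resp.\ $V_{p,uc}$), because $(y_0,y_1)$ are optimizers of \eqref{eq:repvaluestatic1sc} (resp.\ \eqref{eq:repvaluestaticsc_uc}).

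\textbf{Step 1 (admissibility).} I would solve the closed-loop system for $(X,p,Y^0,Y^1)$ driven by $Z^*_\theta(t,Y^0_t-Y^1_t,p_t)$, with $\alpha^\theta_t=A^\theta(Z^*_\theta)$ (single-valued by strong convexity). The gap $Y=Y^0-Y^1$ and the filter $p$ form an autonomous two-dimensional system.

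Since the feedback is only measurable, I would construct a weak solution, which is enough because $\PP$ is only needed in law. I would use Krylov-type existence for bounded nondegenerate drift in $p$, and in the $Y$ direction I would truncate $Z^*$ or rely on the smoothness of $w$ to obtain locally bounded optimizers.

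Next I would check the state constraint \eqref{state:c}. On the lateral boundaries the feedback satisfies $Z^*_0=Z^*_1\in\overline\cV$ (resp.\ $\underline\cV$). This kills the diffusion of $Y$. Moreover, the drift of $Y$ equals $-(H^0-H^1)(z)+\kappa Y=-\overline a+\kappa\overline W=\overline W'$, so $Y$ follows the boundary ODE \eqref{def:w}. It follows that the process cannot cross the boundary, and once it hits the boundary it stays there (compare Theorem~\ref{thm:domain}(b)). By the equivalence (i)$\Leftrightarrow$(iv) of Theorem~\ref{thm:domain}(b), $(Z^*_0,Z^*_1)\in\cV(y_0,y_1)$, so $\xi^*$ is well defined, $Y^0_T=Y^1_T$, and $\xi^*\in\cC_a$. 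Proposition~\ref{prop:principal_target} then identifies the agents' optimal responses as the selected $\alpha^\theta$, and Lemma~\ref{lem:filter} shows the closed-loop $p$ is the true filter under $\PP^\alpha$.

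\textbf{Step 2 (optimality).} I would apply It\^o's formula to $V_{sc}(t,X_t,Y^0_t,Y^1_t,p_t)=X_t-\tfrac{e^{\kappa(T-t)}}{2}(Y^0_t+Y^1_t)+w(t,Y_t,p_t)$ along the closed-loop system.

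In the interior $\cD$, the choice of $(Z^*_0,Z^*_1)$ as optimizer of \eqref{eq:defHinside} together with \eqref{eq:hjb} makes the $dt$ term vanish. On the boundary, the definition \eqref{eq:defuplowpde} and the boundary PDEs do the same.

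Localizing the stochastic integrals with stopping times and passing to the limit gives
\[
V_{sc}(0,X_0,y_0,y_1,p_0)=\EE^{\PP}\bigl[X_T-Y^0_T\bigr],
\]
where I use the terminal condition $w(T,0,p)=0$ and the a priori bound \eqref{eq:prioribound}, which supply uniform integrability. Hence the payoff of $\xi^*$ equals $V_{sc}(0,X_0,y_0,y_1,p_0)$. Since $(y_0,y_1)$ optimize the representation in Theorem~\ref{thm:sc_rep}, this equals $V_{p,c}$ (resp.\ $V_{p,uc}$).

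\textbf{Main obstacle.} The hardest point is Step 1: well-posedness of the closed-loop SDE with a merely measurable (possibly unbounded) feedback, in a system whose diffusion degenerates both on the lateral boundary and at $p\in\{0,1\}$. I would first try smooth or locally Lipschitz selections of the optimizer, available from $w\in C^{1,2}(\mathrm{cl}_y(\cD))$ and the strong convexity of $c$. Failing that, I would use an $\varepsilon$-optimal Lipschitz feedback and show that the resulting contracts are $\varepsilon$-optimal. A secondary concern is the integrability $Z^*_\theta\in\HH_2$, which I would derive from the linear growth bound in Lemma~\ref{lem:verify_cd}(2).
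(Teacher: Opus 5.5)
Your Step 2 is the standard verification argument the paper invokes without writing out, so the overall route agrees with the paper. That argument is: apply It\^o to $X_t-\frac{e^{\kappa(T-t)}}{2}(Y^0_t+Y^1_t)+w(t,Y_t,p_t)$, use zero drift from \eqref{eq:hjb} in $\cD$ and from the boundary equations on the lateral boundaries, then apply Theorem~\ref{thm:sc_rep}. The gap is in Step 1, where a weak solution of the closed loop is \emph{not} enough. Through \eqref{fsde} and the class $\cV(X_0)$, the theorem requires $Z^*_\theta(t,Y_t,p_t)$ to be $\FF^X$-adapted: the contract must be $\cF^X_T$-measurable, and Proposition~\ref{prop:principal_target} and Lemma~\ref{lem:filter} are only available for such controls. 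Under $\bar P$ the output $X$ is itself the driving Brownian motion, so you need $(Y,p)$ to be a \emph{strong} solution driven by $X$. On a weak-solution space $Y^0_T$ is a randomized payment, not an element of $\cC_a$. The fact that the principal's payoff depends only on a law does not repair this, since the contract itself, not just its law, enters the agents' problems. The tools you cite do not even give weak existence. The pair $(Y,p)$ is driven by a single Brownian motion, so $\Sigma\Sigma^\top$ from \eqref{eq:defSigma} has rank at most one, and its $p$-component $p(1-p)(A^0(Z^0)-A^1(Z^1))$ vanishes whenever the two best responses coincide. Krylov and Zvonkin--Veretennikov type results, by contrast, need a nondegenerate diffusion. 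Your fallback with $\varepsilon$-optimal Lipschitz feedbacks proves only that $\varepsilon$-optimal contracts exist, not that $\xi^*$ is optimal. There are two honest ways to close this. One is to assume that the closed loop admits an $\FF^X$-adapted solution with $Z^*_\theta\in\HH_2(\bar P)$; a standard verification theorem assumes this, and the paper does so implicitly when it writes $Y^{\theta,0,X_0,y_\theta,Z^*_\theta}$. The other is to prove strong well-posedness from extra regularity of a selected optimizer.

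Second, the claim that the process cannot cross the lateral boundary does not follow from a single feedback equation. The $Y$-diffusion $Z^*_0-Z^*_1$ vanishes only on the boundary, which a path can touch on a Lebesgue-null set of times, so nothing forces it to stick. For example, $dY_t=\mathbf{1}_{\{Y_t\neq 0\}}dB_t$ with $Y_0=0$ is solved both by $Y\equiv 0$ and by $Y=B$. Instead, build the control by switching. Use the interior feedback up to the first time $\tau$ at which $Y$ hits $\underline W$ or $\overline W$, and set $Z^0=Z^1=Z^*(t,\overline W(t),p_t)$ (resp.\ with $\underline W$) on $[\tau,T]$. Your drift computation then shows that $Y$ follows the boundary ODE on $[\tau,T]$, which is exactly the structure of Theorem~\ref{thm:domain}(b)(ii), and admissibility follows. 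Finally, Lemma~\ref{lem:verify_cd}(2) bounds Hamiltonian combinations, not the optimizer, so it does not give $Z^*_\theta\in\HH_2$. You need to select optimizers with $|Z^0+Z^1|\le C(1+|Z^0-Z^1|)$. This is possible because in the complementary region both effort levels saturate at the same endpoint and the generator depends only on $Z^0-Z^1$. You also need a separate bound on $Z^0-Z^1$, which $w\in C^{1,2}$ alone does not supply.
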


The proof of the theorem is the standard verification result and is not provided.

\section{Numerical Results}\label{sec:benchmark}
By Theorem \ref{thm:compinside}, the HJB equation \eqref{eq:hjb} is analytically well posed but remains numerically challenging, since the optimizer $(z_0^*,z_1^*)$ may be unbounded and the effective domain must be tracked dynamically. A natural computational approach is to \emph{truncate} the sensitivity domain to $[-K,K]$ for some large parameter $K>0$. The resulting PDE then has a bounded control set and a classical (non-singular) Hamiltonian, which makes it amenable to standard numerical methods. In particular, we apply the Deep Galerkin Method introduced in \cite{sirignano_spiliopoulos} to analyze both the dominated and non-dominated cases and to explore the associated economic implications.

\subsection{Dominated case}
In this section, we present numerical results that illustrate Example~\ref{ex:dom}. Recall that, in this case, the agent is either good, meaning that he has a low cost of effort, or bad, meaning that he has a high cost of effort. 
Throughout the numerical experiments, we fix the following input values: $\overline{a}=1$, $\kappa=0.1$, $R = R_0 = R_1 = 0$ and $T=2$. 

\begin{figure}[h]
     \centering
     \begin{subfigure}[b]{0.45\textwidth}
         \centering
         \includegraphics[width=\textwidth]{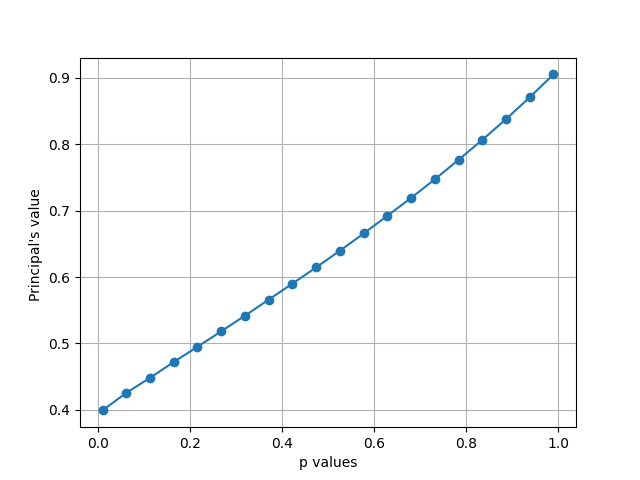}
         \caption{Principal's value}
         \label{fig:pv_bd_uc_d}
     \end{subfigure}
     \quad
     \begin{subfigure}[b]{0.45\textwidth}
         \centering
         \includegraphics[width=\textwidth]{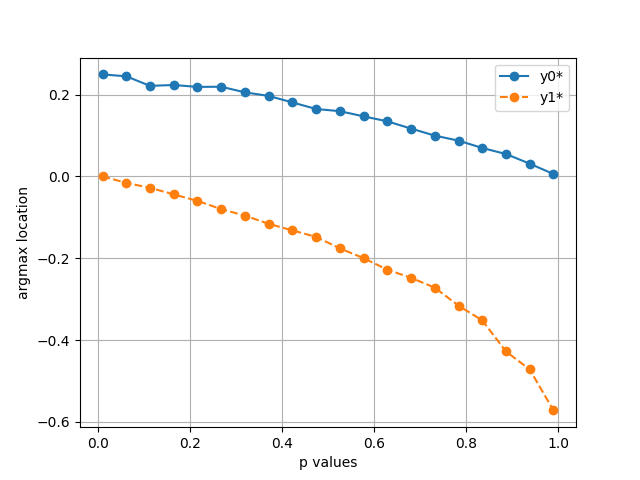}
         \caption{Argmax locations }
         \label{fig:argmax_bd_uc_d}
     \end{subfigure}
        \caption{Unconditionally rational}
    \label{fig:bd_uc_d}
\end{figure}

Figure \ref{fig:pv_bd_uc_d} illustrates the optimal value of the principal in terms of her initial belief that she is facing agent type 0 (the good agent). Figure \ref{fig:argmax_bd_uc_d} reports the optimal promised utilities offered to agent type 0 (blue), and agent type 1 (orange).

We observe that as the initial belief increases, the principal’s value also increases. Moreover, due to the domination relationship $H^0 \ge H^1$, the structure of the optimal contract changes with the belief level. When the initial belief $p_0$ is small, the principal benefits from promising a strictly higher utility to the potentially good agent. In contrast, when the initial belief $p_0$ is close to 1, the principal benefits from binding agent type 0 at the reservation utility while punishing the bad agent. This punishment corresponds to offering the bad agent a negative initial utility.

\begin{figure}[H]
     \centering
     \begin{subfigure}[b]{0.45\textwidth}
         \centering
         \includegraphics[width=\textwidth]{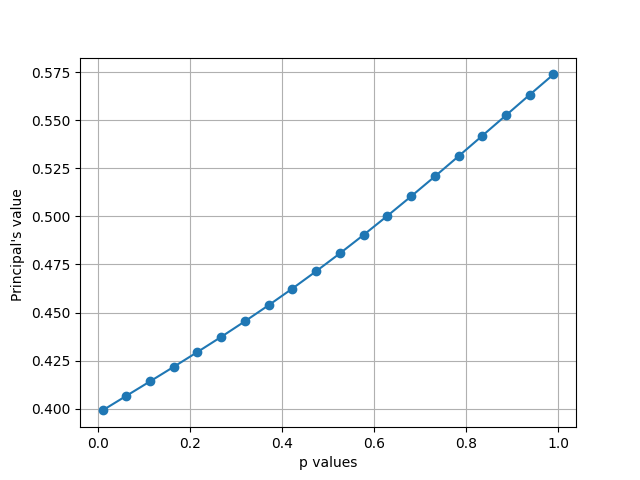}
         \caption{Principal's value}
         \label{fig:pv_bd_c_d}
     \end{subfigure}
     \quad
     \begin{subfigure}[b]{0.45\textwidth}
         \centering
         \includegraphics[width=\textwidth]{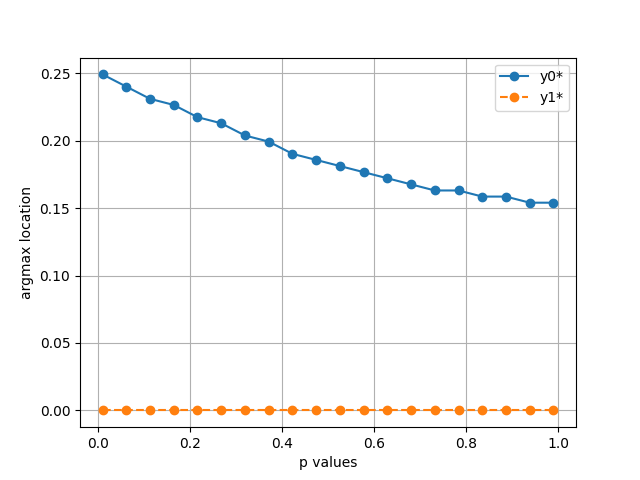}
         \caption{Argmax locations }
         \label{fig:argmax_bd_c_d}
     \end{subfigure}
        \caption{Individually (or conditionally) rational}
\label{fig:bd_c_d}
\end{figure}

Figure~\ref{fig:pv_bd_c_d} illustrates the principal’s optimal value as a function of the initial belief that the agent is of type 0 (the good agent). In this setting, the optimal value is obtained under separate participation constraints, namely $y_0 \geq R,y_1 \geq R$. Figure~\ref{fig:argmax_bd_c_d} reports the corresponding optimal promised utilities offered to agent type 0, and agent type 1.

As shown in Figure~\ref{fig:pv_bd_c_d}, the qualitative behavior is similar to that in Figure~\ref{fig:pv_bd_uc_d}. In particular, the principal’s value is monotone in the initial belief and exhibits convexity with respect to the prior. However, the structure of the optimal promised utilities differs from the unconstrained case displayed in Figure \ref{fig:argmax_bd_uc_d}. Under separate participation constraints, the promised utility to agent type 1 binds at the reservation level 
$R=0$ across the range of beliefs.

Moreover, as the initial belief increases, the principal optimally lowers the promised utility offered to agent type $0$. This is economically intuitive. When the game starts from a prior that places a higher probability on the good agent, the need to provide informational rents is reduced, which allows the principal to extract more surplus from agent type $0$ by lowering the promised utility.

\begin{figure}[H]
     \centering
     \begin{subfigure}[b]{0.47\textwidth}
         \centering
\includegraphics[width=\textwidth]{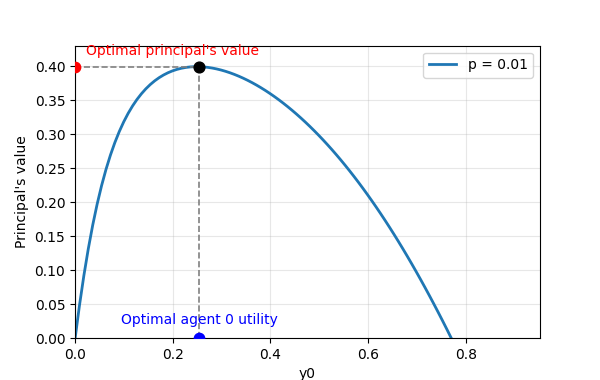}
         \caption{Principal's value  ($p =0.01$)}
\label{fig:slice_pv_fixed_type1_p0}
     \end{subfigure}
     \quad
     \begin{subfigure}[b]{0.47\textwidth}
         \centering
\includegraphics[width=\textwidth]{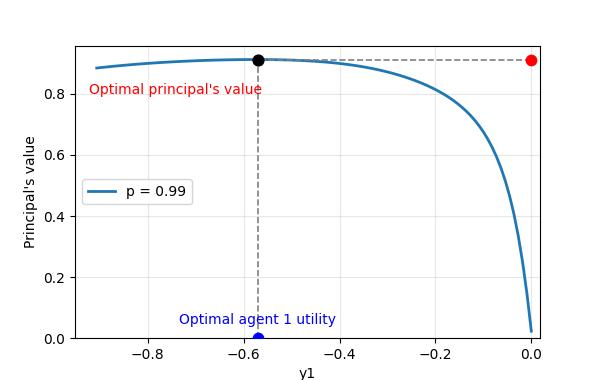}
         \caption{Principal's value ($p =0.99$)}
\label{fig:slice_pv_fixed_type0_p1}
     \end{subfigure}
        \caption{Cross-Sectional Slices of  $V\left(t, y_0, y_1, p\right)$  at Extreme Belief Levels (UR)}
\label{fig:infoy1vy2vs}
\end{figure}

Figure \ref{fig:slice_pv_fixed_type1_p0} shows a slice of the principal's value as a function of the promised utility of agent $0$, with the promised utility of agent $1$ fixed at its reservation level, at $t=0$ and initial belief $p_0=0.01$. In this case, the game starts from a prior that places a very small probability on the good agent (type $0$). The figure shows that increasing the promised utility of this unlikely type can raise the principal's value, which indicates the presence of informational rent.

By contrast, Figure \ref{fig:slice_pv_fixed_type0_p1} shows a slice of the principal's value as a function of the promised utility of agent $1$, with the promised utility of agent $0$ fixed at its reservation level, again at $t=0$, but now with initial belief $p_0=0.99$. In this case, the game starts from a prior that places a very small probability on the bad agent (type $1$). The figure indicates that allowing a strictly negative promised utility for this unlikely type can increase the principal's value, again revealing the presence of informational loss.

\begin{figure}[H]
    \centering
\includegraphics[width=0.5\linewidth]{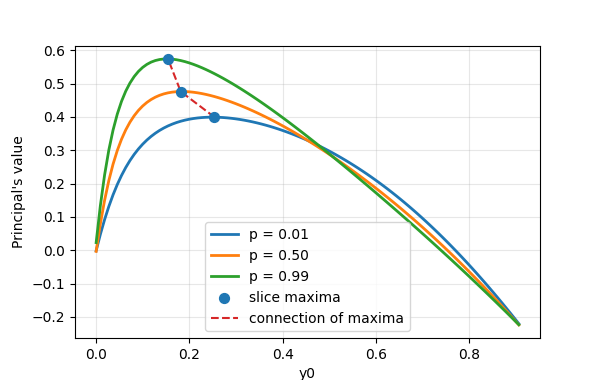}
    \caption{Cross-Sectional Slices of  $V\left(t, y_0, y_1, p\right)$ (CR)}
    \label{fig:slice_c_d}
\end{figure}

Figure \ref{fig:slice_c_d} presents slices of the principal's value in the case of conditional reservation utilities. In this figure, we fix $t=0$ and $y_1=0$, and consider several values of the initial belief $p_0$. The figure shows that, in the dominated case, where type $0$ uniformly generates more favorable incentives than type $1$ through the ordering $H^0\geq H^1$, the numerical results show that the optimal contract may leave a strictly positive rent to type $0$. Instead, the principal benefits from offering a strictly positive promised utility to agent 0. Moreover, as the initial belief $p_0$ increases, the principal benefits from offering a smaller positive promised utility to agent 0 , while the principal's optimal value increases with $p_0$.



\begin{figure}[H]
    \centering
\includegraphics[width=0.45\linewidth]{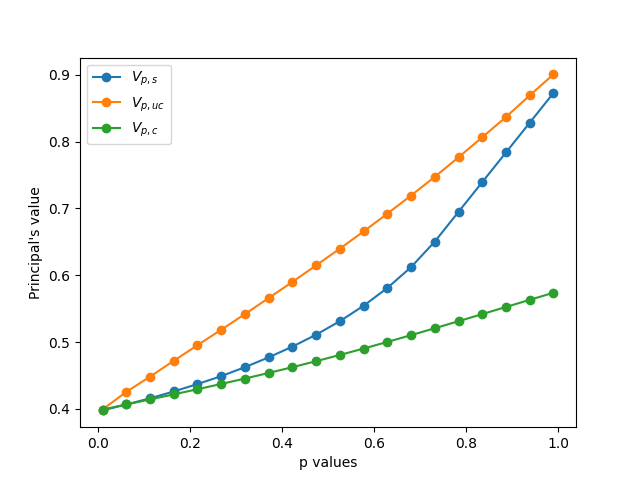}
    \caption{Comparison of the principal's values}
\label{fig:comp_principalvals_newconditiond}
\end{figure}

Figure \ref{fig:comp_principalvals_newconditiond} compares the principal's value as a function of the initial belief $p_0$. The three curves correspond to $V_{p,uc}$, the principal's value under the unconditional reservation utility constraint with a single contract, $V_{p,c}$, the principal's value under the conditional reservation utility constraint with a single contract, and $V_{p,s}$, the principal's value under the screening framework with a menu of contracts.

In Figure \ref{fig:comp_principalvals_newconditiond}, $V_{p,s}$ is obtained from the following static constrained optimization problem
\begin{align*}
V_{p,s}
=
\sup\Bigg\{
&\, p_0 V_{0}(0,y_0,y_1^c) + (1-p_0)V_{1}(0,y_0^c,y_{1}) : 
 (y_0,y_1^c,y_0^c,y_1)\in \RR^4,\\
&\quad\quad\quad \max\{y_0^c,R,y_{1}^c + \underline W (0)\} \leq y_0 \leq y_{1}^c + \overline W (0)\\
&\quad\quad\quad \max\{y_1^c,R,y_{0}^c - \overline W (0)\} \leq y_1 \leq y_{0}^c - \underline W(0)\;\;\;\;\;
\Bigg\}.
\end{align*}

First, $V_{p,uc}$ dominates both $V_{p,c}$ and $V_{p,s}$. Economically, this reflects the greater flexibility afforded by the unconditional reservation utility constraint, especially for extreme beliefs. Because the unconditional reservation utility is averaged across types, it gives the principal more room to adjust promised utilities and thereby improve her objective value.

Second, the screening value $V_{p,s}$ dominates $V_{p,c}$. This reflects the additional flexibility provided by the menu-of-contracts framework. Under $V_{p,c}$, the principal is restricted to offering a single contract, whereas under $V_{p,s}$ she can tailor the menu to the distinct incentive and participation requirements of the two types.

\subsection{Non-dominated case}

This section provides the numerical analysis for the non-dominated case in Example~\ref{ex:ndom}. Unlike the previous example, this environment does not admit a natural ranking of agent's types. The two types differ in their cost functions, but neither cost function uniformly dominates the other. The parameters are fixed as $\overline{a}=1$, $\underline{a}=-1$, $\kappa=0.1$, $R = R_0 = R_1 = 0$ and $T=2$.

\begin{figure}[H]
     \centering
     \begin{subfigure}[b]{0.47\textwidth}
         \centering
         \includegraphics[width=\textwidth]{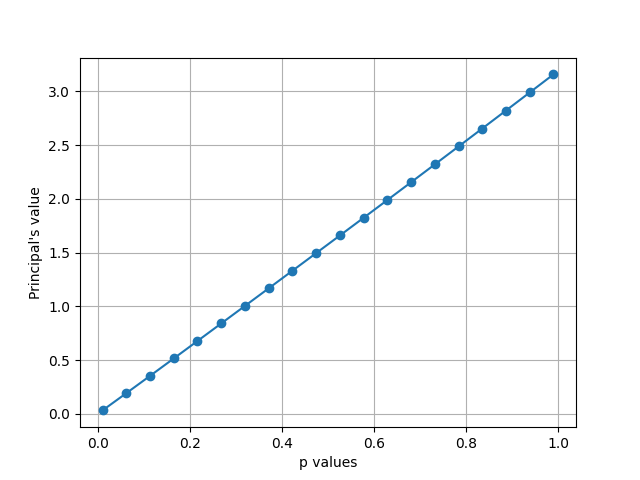}
         \caption{Principal's value}
         \label{fig:bd_uc_nd_principalvalue}
     \end{subfigure}
     \quad
     \begin{subfigure}[b]{0.47\textwidth}
         \centering
         \includegraphics[width=\textwidth]{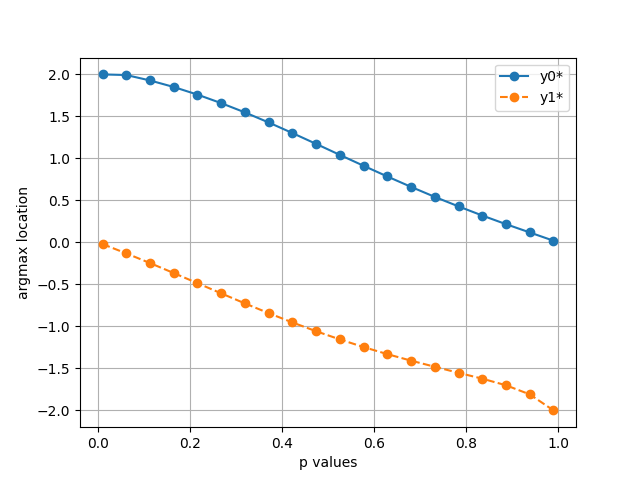}
         \caption{Argmax locations }
         \label{fig:bd_uc_nd_armax}
     \end{subfigure}
        \caption{Unconditional rationality}
    \label{fig:bd_uc_nd}
\end{figure}

As in Figure~\ref{fig:pv_bd_uc_d}, Figure~\ref{fig:bd_uc_nd_principalvalue} illustrates how the principal's optimal value varies with her initial belief that she is facing agent type $0$. Figure~\ref{fig:bd_uc_nd_armax} reports the corresponding optimal promised utilities assigned to type $0$ and type $1$.

We observe that the principal's value increases with the initial belief. Although the agents' cost functions are not ordered in the non-dominated case, the numerical results suggest that type $0$ still induces more favorable effort incentives for the principal in this parameter regime. Consequently, the optimal promised utilities may display a dominance pattern similar to that observed in Figure~\ref{fig:argmax_bd_uc_d}. However, this ordering is not robust in the non-dominated case: the relationship between the initial optimal promised utilities for the two types depends sensitively on the parameter choices, especially on $\overline{a}$ and $\underline{a}$. Therefore, unlike in the dominated case, no general monotonic ordering of the optimal promised utilities should be inferred from this figure.

\begin{figure}[H]
     \centering
     \begin{subfigure}[b]{0.47\textwidth}
         \centering
         \includegraphics[width=\textwidth]{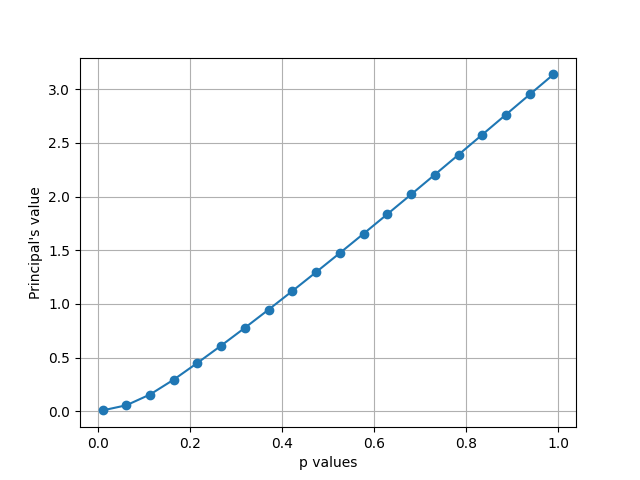}
         \caption{Principal's value}
        \label{fig:bd_c_nd_principalvalue}
     \end{subfigure}
     \quad
     \begin{subfigure}[b]{0.47\textwidth}
         \centering
         \includegraphics[width=\textwidth]{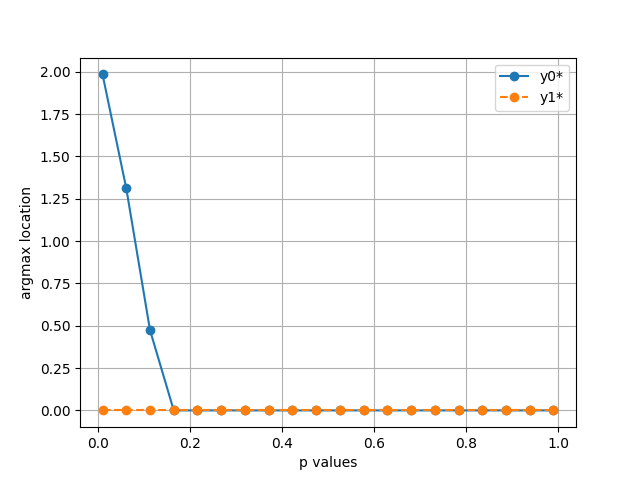}
         \caption{Argmax locations }
         \label{fig:bd_c_nd_armax}
     \end{subfigure}
        \caption{Conditional rationality}
    \label{fig:bd_c_nd}
\end{figure}

Figure~\ref{fig:bd_c_nd_principalvalue} illustrates the principal's optimal value as a function of the initial belief that the agent is of type $0$. In this setting, the optimal value is computed under separate participation constraints, namely $y_0\geq R$ and $y_1\geq R$. Figure~\ref{fig:bd_c_nd_armax} reports the corresponding optimal promised utilities offered to agent type $0$ and agent type $1$. In the non-dominated case, the principal cannot uniformly rank the two types as good or bad, since there is no global dominance relation between their cost functions.

As shown in Figure~\ref{fig:bd_c_nd_principalvalue}, the qualitative behavior of the principal's value is similar to that in Figure~\ref{fig:bd_uc_nd_principalvalue}. In particular, the principal's value is increasing in the initial belief and appears to be convex with respect to the prior. The structure of the optimal promised utilities in Figure~\ref{fig:bd_c_nd_armax} is also broadly similar to that of the unconditional case reported in Figure~\ref{fig:bd_uc_nd_armax}.

However, in the conditional case, the optimal promised utility assigned to the agent type $1$ decreases rapidly as the initial belief increases. For sufficiently large initial beliefs, the participation constraints of both types become binding over part of the belief range. Consequently, on this range, as in the moral-hazard benchmark, both types receive exactly their reservation utilities, and the principal extracts all surplus above the participation levels.



\begin{figure}[H]
    \centering
\includegraphics[width=0.45\linewidth]{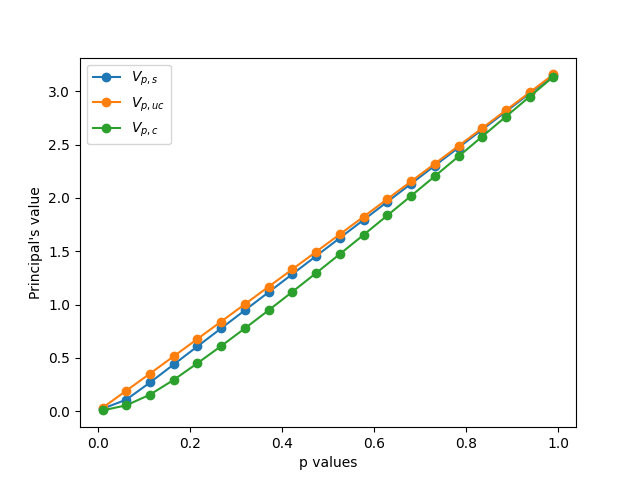}
    \caption{Comparison of the principal's values}
\label{fig:comp_principalvals_nd_bd_Hn}
\end{figure}

As in Figure~\ref{fig:comp_principalvals_newconditiond}, 
Figure~\ref{fig:comp_principalvals_nd_bd_Hn} compares the principal's value as a function of the initial belief $p_0$. The three curves represent $V_{p,uc}$, the value under the unconditional reservation-utility constraint with a single contract; $V_{p,c}$, the value under the conditional reservation-utility constraint with a single contract; and $V_{p,s}$, the value under the screening formulation with a menu of contracts.

The numerical ordering satisfies
\[
    V_{p,uc} \geq V_{p,s} \geq V_{p,c},
\]
which is consistent with Theorem~\ref{thm:HJB_principal_screens}. In contrast to Figure~\ref{fig:comp_principalvals_newconditiond}, the non-dominated case shows that the gap between the single-contract value and the screening value becomes small when the initial belief is close to the extremes. Economically, this means that when the principal is nearly certain about the agent's type, the value of screening is limited. Thus, in these regions, offering a single contract may approximate the screening value well, provided that the associated loss is acceptable.



\appendix

\section{Appendix}

\subsection{Proof of preliminary results}


\begin{proof}[Proof of Lemma \ref{lem:bsde_wp}]
First we verify the well-posedness of the BSDE. Since $\sigma^{-1}$ is bounded, $\FF^X$ coincides with the augmented filtration of $B^{\bar P}$ up to $\bar P$-null sets. For brevity, write $(\cY^\theta,\cZ^\theta):=(\cY^{\theta,0,X_0,\xi},\cZ^{\theta,0,X_0,\xi})$. Setting $\widetilde\cZ^\theta_r:=\sigma(r,X_r)^\top\cZ^\theta_r$, and using $dX_r=\sigma(r,X_r)\,dB^{\bar P}_r$, the BSDE \eqref{eq:bsdek} is equivalent to the Brownian BSDE 
\begin{equation}\label{eq:bsde_brownian_proof}
\cY^\theta_s=\beta(\theta)\xi+\int_s^T \widetilde f^\theta\bigl(r,X_r,\cY^\theta_r,\widetilde\cZ^\theta_r\bigr)\,dr-\int_s^T \bigl(\widetilde\cZ^\theta_r\bigr)^\top dB^{\bar P}_r,
\qquad \bar P\text{-a.s.,}
\end{equation}
with driver
\[
\widetilde f^\theta(r,x,y,\widetilde z):=H^\theta\!\bigl(r,x,(\sigma(r,x)^\top)^{-1}\widetilde z\bigr)-\kappa(r,x)\,y.
\]
The Carath\'eodory structure\footnote{Let $X,A$ be metric spaces. We say that a measurable function $f: X\times A\mapsto \RR$ satisfies the Carath\'eodory structure if $f(x,\cdot)$ is continuous for all $x\in X$.} of $\lambda$ and $c$ together with the compactness of $A$ ensures by the measurable maximum theorem \cite[Lemma~18.3]{guide2006infinite} that $H^\theta$ is jointly Borel measurable in $(r,x,z)$; hence $\widetilde f^\theta$ is progressively measurable in $(r,\omega)$. Since $H^\theta(r,x,\cdot)$ is the supremum of affine functions with slopes bounded by $\|\sigma\|_\infty\|\lambda\|_\infty$, $\widetilde f^\theta$ is Lipschitz in $(y,\widetilde z)$ with constant depending only on $\|\kappa\|_\infty,\|\sigma\|_\infty,$ $\|\sigma^{-1}\|_\infty,\|\lambda\|_\infty$. The free term satisfies $|\widetilde f^\theta(r,X_r,0,0)|=|H^\theta(r,X_r,0)|\le\|c\|_\infty$ by boundedness of $c$, and $\xi\in L^2(\bar P)$ by definition of $\cC_a$. The classical $L^2$-theory for Lipschitz BSDEs \cite[Theorems~2.1--2.2]{el1997backward} yields existence, uniqueness, and Lipschitz stability of \eqref{eq:bsde_brownian_proof} in $\bS_2(\bar P)\times\HH_2(\bar P)$ — with stability constant depending only on $T$ and on the Lipschitz constant of the driver, independently of the free term. Transferring back via $\cZ^\theta=(\sigma^\top)^{-1}\widetilde\cZ^\theta$, and using the boundedness of $\sigma,\sigma^{-1}$ gives existence, uniqueness, and the stated Lipschitz stability for $(\cY^\theta,\cZ^\theta)$ in $\bS_2(\bar P)\times\HH_2(\bar P)$.

\smallskip
Next, we verify the equality \eqref{eq:cE_rep} by showing both inclusions. We start by showing the 
{inclusion $\supseteq$}. Fix $\xi\in\cC_a$ and set
\[
y_\theta:=\cY^{\theta,0,X_0,\xi}_0,\qquad Z^\theta:=\cZ^{\theta,0,X_0,\xi},\qquad \theta\in\{0,1\}.
\]
By the well-posedness above, $Z^\theta\in\HH_2(\bar P)$, so $Z^\theta\in\cV(X_0)$ in the sense of Definition~\ref{def:response}. Rewriting \eqref{eq:bsdek} forward from time $0$, the pair $(\cY^{\theta,0,X_0,\xi},\cZ^{\theta,0,X_0,\xi})$ satisfies the forward equation \eqref{fsde} with initial value $y_\theta$ and control $Z^\theta$. The forward equation \eqref{fsde} is linear in $Y^\theta$ with bounded coefficients, hence pathwise unique; therefore
\[
Y^{\theta,0,X_0,y_\theta,Z^\theta}=\cY^{\theta,0,X_0,\xi}\quad\text{on }[0,T],
\]
and, in particular, $Y^{\theta,0,X_0,y_\theta,Z^\theta}_T=\beta_\theta \xi$ for both $\theta\in\{0,1\}$. Hence, $\beta_0^{-1} Y^{0,0,X_0,y_0,Z^0}_T=\beta_1^{-1} Y^{1,0,X_0,y_1,Z^1}_T$, so $(y_0,y_1)\in\cE(X_0)$. 
\smallskip
Next, we show the inclusion {$\subseteq$ in \eqref{eq:cE_rep}.} Fix $(y_0,y_1)\in\cE(X_0)$, and $(Z^0,Z^1)\in(\cV(X_0))^2$, and set
\[
\xi:=\beta_0^{-1} Y^{0,0,X_0,y_0,Z^0}_T=\beta_1^{-1} Y^{1,0,X_0,y_1,Z^1}_T.
\]
By Remark~\ref{rem:selection}, $Y^{\theta,0,X_0,y_\theta,Z^\theta}\in\bS_2(\bar P)$ for $\theta\in\{0,1\}$, so $\EE^{\bar P}[|\xi|^2]<\infty$ and $\xi\in\cC_a$. Rearranging the forward dynamics \eqref{fsde} backward from terminal value $\beta_\theta \xi$ shows that $(Y^{\theta,0,X_0,y_\theta,Z^\theta},Z^\theta)$ solves the BSDE \eqref{eq:bsdek}. By the uniqueness statement derived previously. Hence,
\[
\bigl(Y^{\theta,0,X_0,y_\theta,Z^\theta},Z^\theta\bigr)=\bigl(\cY^{\theta,0,X_0,\xi},\cZ^{\theta,0,X_0,\xi}\bigr)\quad\text{on }[0,T].
\]
Evaluating at $s=0$ yields $y_\theta=\cY^{\theta,0,X_0,\xi}_0$ for $\theta\in\{0,1\}$, which proves the inclusion.

\smallskip
Finally, we show that $\mathcal{E}(X_0)$ is connected. We note that $\cC_a=L^2(\Omega_X,\cF_T^X,\bar P)$ is convex, hence connected. By the Lipschitz stability shown above, the map
\[
\xi\;\longmapsto\;\bigl(\cY^{0,0,X_0,\xi}_0,\,\cY^{1,0,X_0,\xi}_0\bigr)
\]
is continuous from $L^2(\bar P)$ into $\RR^2$. By \eqref{eq:cE_rep}, $\cE(X_0)$ is the image of this map, and the continuous image of a connected set is connected. \qed 
\end{proof}
\begin{proof}[Proof of Theorem \ref{thm:domain}]
We prove the theorem in multiple steps. First, we show Part $(a)$. 
\smallskip

\emph{Step 1}: \textit{$(y_0, y_1)\in \cE(X_0)$ implies $\underline W(0,X_0)\leq \beta_1 y_0-\beta_0 y_1\leq \overline W(0,X_0)$.}

Let $(y_0,y_1)\in\cE(X_0)$ and $(Z^0,Z^1)\in\cV (y_0,y_1)$, so that $\beta_0^{-1}Y^{0,0,X_0,y_0,Z^0}_T=\beta_1^{-1}Y^{1,0,X_0,y_1,Z^1}_T$, $\bar P$-a.s. Set $Z_s:=\beta_1 Z^0_s- \beta_0 Z^1_s$ and $\Delta_s:= \beta_1 Y^{0,0,X_0,y_0,Z^0}_s-\beta_0 Y^{1,0,X_0,y_1,Z^1}_s$. Subtracting the dynamics \eqref{fsde} for $\theta=0,1$, we obtain
\begin{align}\label{eq:deltap}
    \Delta_s
    =&\ \beta_1 y_0-\beta_0 y_1\\
    &-\int_0^s
    \bigl[
        \beta_1 H^0(r,X_r,Z^0_r)
        -\beta_0 H^1(r,X_r,Z^1_r)
        -\kappa(r,X_r)\Delta_r
    \bigr]\,dr \notag\\
    &+\int_0^s
    Z_r^\top dX_r .
    \notag
\end{align}

Since $\Delta_T=0$, we may write \eqref{eq:deltap} backward:
\begin{align}\label{eq:bsdeproof}
    \Delta_s =&\int_s^T \bigl[\beta_1 H^0(r,X_r,\frac{\beta_0 Z^1_r+Z_r}{\beta_1})- \beta_0 H^1(r,X_r,Z_r^1)-\kappa(r,X_r)\Delta_r\bigr]dr-\int_s^T Z_r^\top dX_r.
\end{align}
By the definitions of $\underline H$ and $\overline H$ in \eqref{low_H}-\eqref{high_H}, we have
\[
\underline H\bigl(r,x,\delta,z\bigr)\leq \beta_1 H^0\bigl(r,x,\frac{\beta_0 z_1+z}{\beta_1}\bigr)-\beta_0 H^1\bigl(r,x,z_1\bigr)-\kappa(r,x)\delta\leq \overline H\bigl(r,x,\delta,z\bigr),
\]
for all $(r,x,\delta,z) \in [0,T]\times \RR^d\times \RR\times \RR^d$. Hence,  $(\Delta,Z)$ is a subsolution of \eqref{eq:bsdeup} and a supersolution of \eqref{eq:bsdedown}. By Assumption~\ref{ass:boundary}(iii), we obtain
\[
\underline W(0,X_0)=\underline Y_0\leq \beta_1 y_0- \beta_0 y_1\leq \overline Y_0=\overline W(0,X_0).
\]

\emph{Step 2:} \textit{$\underline W(0,X_0)\leq  \beta_1 y_0- \beta_0 y_1 \leq \overline W(0,X_0)$ implies $(y_0,y_1)\in\cE(X_0)$.}

By Step~1, $\cE(X_0)\subseteq\{(y_0,y_1)\in\RR^2:\underline W(0,X_0)\leq \beta_1 y_0- \beta_0 y_1\leq \overline W(0,X_0)\}$. By the connectivity of $\cE(X_0)$ (Lemma~\ref{lem:bsde_wp}), it suffices to show that $\mathcal{E}(X_0)$ reaches the two boundary values. Assume $\beta_1 y_0- \beta_0 y_1=\underline W(0,X_0)=\underline Y_0$ (the upper case is analogous). By Assumption~\ref{ass:boundary}(i), the Kuratowski--Ryll-Nardzewski measurable selection theorem provides an $\FF^X$-progressively measurable process $Z^1$ with
\[
Z^1_s\in\underline\cV(s,X_s,\underline Z_s),\qquad dt\times d\bar P\text{-a.s.}
\]
Set $Z^0_s:=\frac{\beta_0 Z^1_s+\underline Z_s}{\beta_1}$. By Assumption~\ref{ass:boundary}(ii), $(Z^0,Z^1)\in(\cV(X_0))^2$. 
With this choice of controls,
\[
\begin{aligned}
&\beta_1 H^0(s,X_s,Z^0_s)-\beta_0 H^1(s,X_s,Z^1_s) \\
&\quad =
\inf_{z_1\in\RR^d}
\left\{
\beta_1 H^0\!\left(s,X_s,\frac{\beta_0 z_1+\underline Z_s}{\beta_1}\right)
-\beta_0 H^1(s,X_s,z_1)
\right\}  \\
&\quad =
\underline H(s,X_s,\Delta_s,\underline Z_s)
+\kappa(s,X_s)\Delta_s .
\end{aligned}
\]
Then, the forward dynamics \eqref{eq:deltap} of $\Delta$ reduces to that of $\underline Y$ driven by $\underline Z$:
\[
d\Delta_s=-\underline H(s,X_s,\Delta_s,\underline Z_s)\,ds+\underline Z_s^\top dX_s,\; \Delta_0=\underline Y_0.
\]
By uniqueness of this linear-in-$y$ SDE, $\Delta_s=\underline Y_s$ for all $s\in[0,T]$, and in particular $\Delta_T=\underline Y_T=0$. Hence, $(Z^0,Z^1)\in\cV (y_0,y_1)$, and $(y_0,y_1)\in\cE(X_0)$.
\\

Steps~1 and 2 show $\cE(X_0) = \{(y_0,y_1)\in \RR^2 : \underline W(0,X_0)\leq  \beta_1 y_0- \beta_0 y_1 \leq \overline W(0,X_0)\} $. 
\\

\emph{Step 3:} \textit{Part~(b): equivalences (i)--(iv).}

\smallskip
\emph{(iii)$\Leftrightarrow$(iv).} By Lemma~\ref{lem:bsde_wp}, the BSDE \eqref{eq:bsdek} is well-posed for any $\xi\in\cC_a$, with $\cY^{\theta,0,X_0,\xi}_T=\beta_\theta \xi$. If (iii) holds with some $\xi\in\cC_a$, then $ \beta_\theta^{-1} Y^{\theta,0,X_0,y_\theta,Z^\theta}_T= \beta_\theta^{-1} \cY^{\theta,0,X_0,\xi}_T=\xi$ for $\theta\in\{0,1\}$, so $\beta_0^{-1} Y^{0,0,X_0,y_0,Z^0}_T=\beta_1^{-1} Y^{1,0,X_0,y_1,Z^1}_T$, which is (iv). Conversely, if (iv) holds, set $\xi:=\beta_0^{-1} Y^{0,0,X_0,y_0,Z^0}_T=\beta_1^{-1} Y^{1,0,X_0,y_1,Z^1}_T$. Since each $Y^{\theta,0,X_0,y_\theta,Z^\theta}\in\bS_2(\bar P)$ (Remark~\ref{rem:selection}), $\xi\in\cC_a$, and for each $\theta$ the pair $(Y^{\theta,0,X_0,y_\theta,Z^\theta},Z^\theta)$ solves the BSDE \eqref{eq:bsdek} with terminal condition $\beta_\theta \xi$. By the uniqueness result in Lemma~\ref{lem:bsde_wp}, this pair must coincide with $(\cY^{\theta,0,X_0,\xi},\cZ^{\theta,0,X_0,\xi})$, which proves (iii).

\smallskip
\emph{(iv)$\Leftrightarrow$(i).} If (iv) holds, applying Part~(a) to the system shifted to $[s,T]$ and started at $$(s,X_s,Y^{0,0,X_0,y_0,Z^0}_s,Y^{1,0,X_0,y_1,Z^1}_s)$$ gives
\[
\underline W(s,X_s)\leq \beta_1 Y^{0,0,X_0,y_0,Z^0}_s-\beta_0 Y^{1,0,X_0,y_1,Z^1}_s\leq \overline W(s,X_s),\qquad s\in[0,T],\ \bar P\text{-a.s.},
\]
which is (i). Conversely, evaluating (i) at $s=T$ and using $\underline W(T,\cdot)=\overline W(T,\cdot)=0$ gives $\Delta_T=0$, i.e.\ (iv).

\smallskip
\emph{(i)$\Rightarrow$(ii).} We treat the lower-boundary case (the upper case is analogous). Fix $s\in[0,T]$, and work on the event $A_s:=\{\Delta_s=\underline W(s,X_s)\}=\{\Delta_s=\underline Y_s\}$. By (i) and Part (a), $\Delta_T=\underline Y_T=0$; on $[s,T]$ the pair $(\Delta,Z)$ satisfies the BSDE \eqref{eq:bsdeproof} restricted to $[s,T]$ and, by the definition of $\underline H$,
\[
\beta_1 H^0\!\left(
r,X_r,\frac{\beta_0 Z^1_r+Z_r}{\beta_1}
\right)
-\beta_0 H^1(r,X_r,Z^1_r)
-\kappa(r,X_r)\Delta_r  \geq
\underline H(r,X_r,\Delta_r,Z_r).
\]
so $(\Delta,Z)|_{[s,T]}$ is a supersolution of \eqref{eq:bsdedown} on $[s,T]$ with $\Delta_s=\underline Y_s$ on $A_s$. The strict comparison principle of Assumption~\ref{ass:boundary}(iii) then forces, on $A_s$,
\[
\Delta_r=\underline Y_r\qquad\text{for all }r\in[s,T],
\]
together with the equality of drivers
\[
\underline H(r,X_r,\underline Y_r,\underline Z_r)
=
\beta_1 H^0\!\left(
r,X_r,\frac{\beta_0 Z^1_r+\underline Z_r}{\beta_1}
\right)
-\beta_0 H^1(r,X_r,Z^1_r)
-\kappa(r,X_r)\underline Y_r,\quad dr\times d\bar P\text{-a.e.\ on }A_s.
\]
The first identity gives $\underline W(r,X_r)=\Delta_r$ for $r\in[s,T]$. Moreover, since $\Delta=\underline Y$ on $[s,T]$ and $\sigma$ is non-degenerate, identifying the diffusion coefficients of $\Delta$ and $\underline Y$ pathwise yields $Z_r=\underline Z_r$ for $dr\times d\bar P$-a.e.\ $r\in[s,T]$, i.e.\ $Z^0_r= \frac{\beta_0 Z^1_r+Z_r}{\beta_1}$. Substituting the definition of $\underline H$ and using $\underline Y_r=\Delta_r$, the equality of drivers rewrites as
\[
Z^1_r\in\argmin_{z_1\in\RR^d}\bigl( \beta_1 H^0\!\left(r,X_s,\frac{\beta_0 z_1+\underline Z_s}{\beta_1}\right)
-\beta_0 H^1(r,X_s,z_1)\bigr)=\underline\cV(r,X_r,\underline Z_r),\qquad r\in[s,T]\text{ on }A_s.
\]
This is precisely $(ii)$ on the lower boundary.

\smallskip
\emph{(ii)$\Rightarrow$(i).} Consider the stopping times 
\begin{align*}
    \tau^{-} &:= \inf \{ t\geq 0 : \underline{W}(t,X_t) = \beta_1 Y^{0,0,x_0,y_0,Z^0}_t-\beta_0 Y^{1,0,x_0,y_1,Z^1}_t\}, \\
    \tau^{+} &:= \inf \{ t\geq 0 : \overline{W}(t,X_t) = \beta_1 Y^{0,0,x_0,y_0,Z^0}_t-\beta_0 Y^{1,0,x_0,y_1,Z^1}_t\}.
\end{align*}
By part a), $\underline{W}(0,x_0)\leq \beta_1 y_0-\beta_0 y_1 \leq \overline{W}(0,x_0)$. Then, by the $\bar{P}-a.s.$ continuity in time of the processes $(X,Y^0,Y^1)$, together with the continuity of the function $\underline{W}$, we see 
that the state constraint is satisfied in the stochastic interval $[0,\tau^{-}\wedge \tau^{+} ]\subset [0,T]$. Next, using $ii)$, $\Delta$ satisfies the following dynamics restricted on the stochastic interval $s\in [\tau^{-}\wedge \tau^{+} ,T]$,
\begin{align*}
    &\Delta_s = \underline{W}(\tau^{-},X_{\tau^{-}})- \int_{\tau^{-}}^s \underline H(r,X_r, \Delta_r,\underline Z_r)dr+\int_{\tau^{-}}^s \underline{Z}_r^\top dX_r , \qquad \text{if } \tau^- \leq \tau^+, \\
    &\Delta_s = \overline{W}(\tau^{+},X_{\tau^{+}})- \int_{\tau^{+}}^s \overline H(r,X_r, \Delta_r,\overline Z_r)dr+\int_{\tau^{+}}^s \overline{Z}_r^\top dX_r ,\qquad  \text{ if } \tau^+ < \tau^-.
\end{align*}
Then, using that $\overline{H}$, and $\underline{H}$ are globally Lischitz in $\Delta$, we obtain using standard existence and uniqueness results of forward SDEs
\begin{align*}
    \Delta_t = \underline{W}(t,X_t) \mathbbm{1}_{\{\tau^{+}>\tau^{-}\} } +  \overline{W}(t,X_t) \mathbbm{1}_{\{\tau^{-}\geq \tau^{+}\} }, \quad \tau^{+}\wedge\tau^{-} \leq  t \leq T.
\end{align*}
Therefore, $\Delta$ satisfies the desired state constraint, which proves $(i)$.
\qed 
\end{proof}

\begin{proof}[Proof of Lemma~\ref{lem:verify_cd}]
Recall $N_0$ defined in \eqref{eq:n0}, and let $C\geq 0$ denote a generic constant whose value may change from line to line, depending only on $N_0$, $\rho$, $\|\partial_\alpha c\|_{L^\infty(\{0,1\}\times A)}$, $\|c\|_{L^\infty(\{0,1\}\times A)}$, and $\kappa$. To avoid clash with the cost function $c$, the level-set constant in item~3 of the lemma will be denoted by $\mathfrak c\in\{\underline a,\overline a\}$.

\medskip
\noindent\emph{Step 1: Lipschitz continuity of $A^\theta$ and $H^\theta$, and \eqref{eq:infw}.}

Fix $\theta\in\{0,1\}$. By Assumption~\ref{assumption:cd}, $\alpha\mapsto z\alpha-c(\theta,\alpha)$ is strictly concave on the compact interval $A$ for every $z\in\RR$, so the supremum defining $H^\theta(z)$ is attained at a unique maximizer $A^\theta(z)\in A$, and $A^\theta(z)\in A$ gives $\|A^0\|_\infty+\|A^1\|_\infty\leq 2N_0<\infty$.

The first-order optimality condition on the convex set $A$ reads
\[
\bigl(z-\partial_\alpha c(\theta,A^\theta(z))\bigr)\bigl(\alpha-A^\theta(z)\bigr)\leq 0,\qquad \alpha\in A.
\]
Setting $\alpha=A^\theta(z')$ in the inequality at $z$, then setting $\alpha=A^\theta(z)$ in the inequality at $z'$, and adding,
\[
(z-z')\bigl(A^\theta(z)-A^\theta(z')\bigr)
\geq
\bigl(\partial_\alpha c(\theta,A^\theta(z))-\partial_\alpha c(\theta,A^\theta(z'))\bigr)\bigl(A^\theta(z)-A^\theta(z')\bigr).
\]
The uniform strong convexity $\partial_{\alpha\alpha}c(\theta,\alpha)\geq\rho$ implies the right-hand side is at least $\rho|A^\theta(z)-A^\theta(z')|^2$, and hence
\[
|A^\theta(z)-A^\theta(z')|\leq \rho^{-1}|z-z'|,\qquad z,z'\in\RR.
\]
Thus $A^\theta$ is Lipschitz. For $H^\theta$, the supremum definition gives
\[
H^\theta(z)-H^\theta(z') = \bigl(zA^\theta(z)-c(\theta,A^\theta(z))\bigr)-H^\theta(z')\leq (z-z')A^\theta(z),
\]
and the symmetric bound shows $|H^\theta(z)-H^\theta(z')|\leq N_0\,|z-z'|$. Hence $H^\theta$ is Lipschitz with constant $N_0$.

It remains to show $-\infty<\underline a<\overline a<+\infty$. Comparing the two suprema using a common test point $\alpha\in A$,
\[
\inf_{\alpha\in A}\bigl[c(1,\alpha)-c(0,\alpha)\bigr]\leq H^0(z)-H^1(z)\leq \sup_{\alpha\in A}\bigl[c(1,\alpha)-c(0,\alpha)\bigr],\qquad z\in\RR,
\]
which by compactness of $A$ and continuity of $c$ gives $-\infty<\underline a$ and $\overline a<+\infty$. For the strict inequality $\underline a<\overline a$, suppose for contradiction that $H^0-H^1$ is constant on $\RR$. The maps $H^\theta$ are convex (as suprema of affine functions), hence differentiable almost everywhere, and the envelope theorem yields $(H^\theta)'(z)=A^\theta(z)$ at points of differentiability. Constancy of $H^0-H^1$ forces $A^0(z)=A^1(z)$ for a.e.\ $z\in\RR$, and by the continuity of $A^0,A^1$ proved above, $A^0\equiv A^1$ on $\RR$. Now both $A^\theta$ are continuous with $A^\theta(z)=a_{\max}$ for all $z$ sufficiently large positive and $A^\theta(z)=a_{\min}$ for all $z$ sufficiently large negative, so by the intermediate value theorem $A^\theta(\RR)=A$. For interior points $\alpha\in(a_{\min},a_{\max})$, choose $z\in\RR$ with $A^0(z)=A^1(z)=\alpha$; the interior first-order condition gives
\[
\partial_\alpha c(0,\alpha)=z=\partial_\alpha c(1,\alpha).
\]
By continuity of $\partial_\alpha c(\theta,\cdot)$, the equality extends to all of $A$, so $c(0,\cdot)-c(1,\cdot)$ is constant on $A$, contradicting Assumption~\ref{assumption:cd}. Hence $\underline a<\overline a$ and \eqref{eq:infw} hold.

\medskip
\noindent\emph{Step 2: First growth estimate.}

For $(Z^0,Z^1)\in\RR^2$,
\[
H^1(Z^1)-H^0(Z^0)=\bigl(H^1(Z^1)-H^1(Z^0)\bigr)+\bigl(H^1(Z^0)-H^0(Z^0)\bigr).
\]
The first term is bounded by $N_0|Z^0-Z^1|$ by Step~1, and the second by $\overline a-\underline a$ (or any uniform bound on $H^0-H^1$ from Step~1). Hence,
\[
\bigl|H^1(Z^1)-H^0(Z^0)\bigr|\leq C\bigl(1+|Z^0-Z^1|\bigr).
\]

\medskip
\noindent\emph{Step 3: The estimate involving $\bar\lambda$.}

Substituting $H^\theta(Z^\theta)=Z^\theta A^\theta(Z^\theta)-c(\theta,A^\theta(Z^\theta))$ and $\bar\lambda(p,Z^0,Z^1)=pA^0(Z^0)+(1-p)A^1(Z^1)$, a direct calculation gives
\begin{align}\label{eq:identity_step3}
H^0(Z^0)+H^1(Z^1)-\bar\lambda(p,Z^0,Z^1)(Z^0+Z^1)
&= \bigl((1-p)Z^0-pZ^1\bigr)\bigl(A^0(Z^0)-A^1(Z^1)\bigr) \notag\\
&\quad - c(0,A^0(Z^0))-c(1,A^1(Z^1)).
\end{align}
The cost terms are uniformly bounded by $2\|c\|_\infty$. For the first term, fix
\[
C_0\;:=\;\max_{\theta\in\{0,1\}}\max\bigl(|\partial_\alpha c(\theta,a_{\min})|,\;|\partial_\alpha c(\theta,a_{\max})|\bigr),
\]
so that, by the first-order condition and strong convexity, $A^\theta(z)=a_{\max}$ for all $z\geq C_0$ and $A^\theta(z)=a_{\min}$ for all $z\leq -C_0$, $\theta\in\{0,1\}$. If $Z^0,Z^1\geq C_0$ or $Z^0,Z^1\leq -C_0$, then $A^0(Z^0)=A^1(Z^1)$ and the first term vanishes. Otherwise, at least one of the following holds:
\begin{itemize}
\item $|Z^0|\leq C_0$, so $|Z^0|+|Z^1|\leq C_0+(|Z^0|+|Z^0-Z^1|)\leq 2C_0+|Z^0-Z^1|$;
\item $|Z^1|\leq C_0$, similarly $|Z^0|+|Z^1|\leq 2C_0+|Z^0-Z^1|$;
\item $Z^0$ and $Z^1$ have opposite signs, so $|Z^0|+|Z^1|=|Z^0-Z^1|$.
\end{itemize}
In all cases, $|Z^0|+|Z^1|\leq 2C_0+|Z^0-Z^1|\leq C(1+|Z^0-Z^1|)$. Combined with $|(1-p)Z^0-pZ^1|\leq |Z^0|+|Z^1|$ for $p\in[0,1]$ and $|A^0-A^1|\leq 2N_0$,
\[
\bigl|((1-p)Z^0-pZ^1)(A^0(Z^0)-A^1(Z^1))\bigr|\leq C\bigl(1+|Z^0-Z^1|\bigr).
\]
Substituting back into \eqref{eq:identity_step3} and adding the bound from Step~2 gives
\[
\bigl|H^1(Z^1)-H^0(Z^0)\bigr|+\bigl|H^0(Z^0)+H^1(Z^1)-\bar\lambda(p,Z^0,Z^1)(Z^0+Z^1)\bigr|\leq C\bigl(1+|Z^0-Z^1|\bigr).
\]

\medskip
\noindent\emph{Step 4: Independence of $p$ in the large-sum region.}

Take $C\geq \max(1,2C_0)$, where $C_0$ is the saturation threshold from Step~3. We show that the condition
\begin{equation}\label{eq:large_sum_cond}
|Z^0+Z^1|\geq C\bigl(1+|Z^0-Z^1|\bigr)
\end{equation}
forces $A^0(Z^0)=A^1(Z^1)$.

If $Z^0,Z^1$ have opposite signs, then $|Z^0+Z^1|\leq\max(|Z^0|,|Z^1|)\leq |Z^0-Z^1|$, contradicting \eqref{eq:large_sum_cond} for $C\geq 1$. So $Z^0,Z^1$ have the same sign, in which case $|Z^0+Z^1|=|Z^0|+|Z^1|$ and $|Z^0-Z^1|=\bigl||Z^0|-|Z^1|\bigr|$, giving
\[
\min(|Z^0|,|Z^1|)=\tfrac12\bigl(|Z^0+Z^1|-|Z^0-Z^1|\bigr)\geq \tfrac12\bigl[C+(C-1)|Z^0-Z^1|\bigr]\geq \tfrac{C}{2}\geq C_0.
\]
Combined with same-sign, both $Z^0,Z^1\geq C_0$ or both $\leq -C_0$, so $A^0(Z^0)=A^1(Z^1)\in\{a_{\min},a_{\max}\}$.

In this region, three simplifications occur:
\begin{itemize}
\item $\bar\lambda(p,Z^0,Z^1)=pA^0(Z^0)+(1-p)A^1(Z^1)=A^0(Z^0)$ — independent of $p$;
\item $\Delta\lambda(Z^0,Z^1)=A^0(Z^0)-A^1(Z^1)=0$, so the $p$-diffusion $p(1-p)\Delta\lambda$ vanishes, and consequently both the $(p,p)$ and $(y,p)$ entries of $\Sigma\Sigma^\top$ are zero;
\item $\ell(t,Z^0,Z^1,p)=\bar\lambda+\frac{e^{\kappa(T-t)}}{2}\bigl[H^0(Z^0)+H^1(Z^1)-\bar\lambda(Z^0+Z^1)\bigr]$ inherits independence of $p$ from $\bar\lambda$.
\end{itemize}
The $p$-dependence of $\mathcal L^{Z^0,Z^1}(t,0,p;q,N)$ enters \emph{only} through these three quantities (recall \eqref{eq:defLinside} with $y=0$, which removes the $\kappa y$ term). Hence $\mathcal L^{Z^0,Z^1}(t,0,p;q,N)$ is independent of $p$ whenever \eqref{eq:large_sum_cond} holds.

\medskip
\noindent\emph{Step 5: Approximation by exact level points.}

Let $\mathfrak c\in\{\underline a,\overline a\}$ and $(z^n)_{n\geq 1}\subset\RR$ with $H^0(z^n)-H^1(z^n)\to\mathfrak c$.

\emph{Case 1: $(z^n)$ has a bounded subsequence $(z^{n_k})$.} Passing to a further subsequence, $z^{n_k}\to z^\ast\in\RR$. By continuity of $H^0-H^1$, $H^0(z^\ast)-H^1(z^\ast)=\mathfrak c$. Set $\tilde z^k:=z^\ast$; then $H^0(\tilde z^k)-H^1(\tilde z^k)=\mathfrak c$ and $z^{n_k}-\tilde z^k\to 0$.

\emph{Case 2: $|z^n|\to\infty$.} Passing to a subsequence, either $z^{n_k}\to+\infty$ or $z^{n_k}\to-\infty$. Suppose $z^{n_k}\to+\infty$ (the other case is symmetric). With $C_0$ from Step~3, for all $k$ sufficiently large, $z^{n_k}\geq C_0$, so $A^0(z^{n_k})=A^1(z^{n_k})=a_{\max}$ and
\[
H^0(z^{n_k})-H^1(z^{n_k})=\bigl[z^{n_k}a_{\max}-c(0,a_{\max})\bigr]-\bigl[z^{n_k}a_{\max}-c(1,a_{\max})\bigr]=c(1,a_{\max})-c(0,a_{\max}),
\]
a constant independent of $k$. By the convergence assumption, this constant equals $\mathfrak c$. Set $\tilde z^k:=z^{n_k}$ for $k$ large; then $H^0(\tilde z^k)-H^1(\tilde z^k)=\mathfrak c$ and $z^{n_k}-\tilde z^k=0$.

\end{proof}

\begin{proof}[Proof of Proposition \ref{prop:supersol}]
We prove both inequalities on $|y|\le(\overline a+|\underline a|)(T-t)$, which contains $\mathrm{cl}_y(\cD)$.

\smallskip
\noindent\emph{Supersolution inequality.} Fix $(t,y,p)$ in this region. Since $\partial_t\phi=-\overline C$, $\partial_y\phi=-2y$, $\partial_{yy}\phi=-2$, expanding $\mathcal L^{z^0,z^1}$ via \eqref{eq:effective_running_reward} and grouping by $1$ and $\tfrac{e^{\kappa(T-t)}}{2}$ yields, for every $(z^0,z^1)\in\RR^2$,
\begin{align}\label{eq:proof-regroup}
-\mathcal L^{z^0,z^1}\bigl(t,y,p;\partial_y\phi,\partial_{yy}\phi\bigr)
&=2\kappa y^2+(z^0-z^1)^2-2y\bigl[H^0(z^0)-H^1(z^1)-\bar\lambda(p,z^0,z^1)(z^0-z^1)\bigr]\notag\\
&\quad-\tfrac{e^{\kappa(T-t)}}{2}\bigl[H^0(z^0)+H^1(z^1)-\bar\lambda(p,z^0,z^1)(z^0+z^1)\bigr]\notag\\
&\quad-\bar\lambda(p,z^0,z^1).
\end{align}
By Lemma \ref{lem:verify_cd}.2, both $|H^0(z^0)-H^1(z^1)|$ and $|H^0(z^0)+H^1(z^1)-\bar\lambda(p,z^0,z^1)(z^0+z^1)|$ are bounded by $C(1+|z^0-z^1|)$, while Lemma \ref{lem:verify_cd}.1 gives $|\bar\lambda(p,z^0,z^1)(z^0-z^1)|\le N_0|z^0-z^1|$ and $|\bar\lambda(p,z^0,z^1)|\le N_0$. Inserting these into \eqref{eq:proof-regroup},
\[
-\mathcal L^{z^0,z^1}\;\ge\;|z^0-z^1|^2-\Bigl[2|y|(C+N_0)+\tfrac{e^{\kappa(T-t)}C}{2}\Bigr]|z^0-z^1|+2\kappa y^2-2|y|C-\tfrac{e^{\kappa(T-t)}C}{2}-N_0.
\]
The right-hand side is a quadratic in $|z^0-z^1|\ge 0$ whose minimum over $\RR_+$ equals $-\tfrac14\bigl[2|y|(C+ N_0)+\tfrac{e^{\kappa(T-t)}C}{2}\bigr]^2$. Expanding the square and infimizing over $(z^0,z^1)\in\RR^2$,
\begin{align*}
\inf_{(z^0,z^1)}\bigl\{-\mathcal L^{z^0,z^1}\bigr\}
&\ge\bigl[2\kappa-(C+ N_0)^2\bigr]y^2-\Bigl[2C+\tfrac{(C+ N_0)e^{\kappa(T-t)}C}{2}\Bigr]|y|\\
&\quad-\tfrac{e^{\kappa(T-t)}C}{2}-\tfrac{e^{2\kappa(T-t)}C^2}{16}- N_0.
\end{align*}
On $|y|\le(\overline a+|\underline a|)(T-t)\le(\overline a+|\underline a|)T$, the $y^2$-term is bounded below by $-\bigl((C+ N_0)^2-2\kappa\bigr)_+(\overline a+|\underline a|)^2T^2$, and using $e^{\kappa(T-t)}\le e^{\kappa T}$ in the remaining terms gives
\begin{align*}
\inf_{(z^0,z^1)}\bigl\{-\mathcal L^{z^0,z^1}\bigr\}
&\ge-\bigl((C+ N_0)^2-2\kappa\bigr)_+(\overline a+|\underline a|)^2T^2-\Bigl[2C+\tfrac{(C+ N_0)C}{2}e^{\kappa T}\Bigr](\overline a+|\underline a|)T\notag\\
&\quad-\tfrac{C}{2}e^{\kappa T}-\tfrac{C^2}{16}e^{2\kappa T}- N_0.
\end{align*}
By \eqref{eq:defHinside} the left-hand side equals $H(t,y,p,\partial_y\phi,\partial_{yy}\phi)$, while by definition of $\overline C$ the right-hand side is $-(\overline C-1)$. Since $-\partial_t\phi=\overline C$,
\[
-\partial_t\phi+H(t,y,p,\partial_y\phi,\partial_{yy}\phi)\;\ge\;\overline C-(\overline C-1)\;=\;1,
\]
which is the first inequality of \eqref{eq:supersol-ineq}.

\smallskip
\noindent\emph{Subsolution inequality.} For $\psi$, $\partial_t\psi=-\underline C$, $\partial_y\psi=-2y$, $\partial_{yy}\psi=-2$. Specializing \eqref{eq:proof-regroup} to $z^0=z^1=0$ and applying Lemma \ref{lem:verify_cd}.2 (which yields $|H^0(0)-H^1(0)|\le C$ and $|H^0(0)+H^1(0)|\le C$, the $\bar\lambda$-term vanishing because $z^0+z^1=0$) together with $|\bar\lambda(p,0,0)|\le N_0$,
\[
-\mathcal L^{0,0}\bigl(t,y,p;\partial_y\psi,\partial_{yy}\psi\bigr)\;\le\;2\kappa y^2+2|y|C+\tfrac{e^{\kappa(T-t)}C}{2}+ N_0.
\]
Hence $H(t,y,p,\partial_y\psi,\partial_{yy}\psi)=\inf_{(z^0,z^1)}\{-\mathcal L^{z^0,z^1}\}\le-\mathcal L^{0,0}$, and on $|y|\le(\overline a+|\underline a|)(T-t)\le(\overline a+|\underline a|)T$ we obtain
\[
H(t,y,p,\partial_y\psi,\partial_{yy}\psi)\;\le\;2\kappa(\overline a+|\underline a|)^2T^2+2C(\overline a+|\underline a|)T+\tfrac{e^{\kappa T}C}{2}+ N_0\;=\;-\underline C-1.
\]
Therefore $-\partial_t\psi+H=\underline C+H\le-1\le 0$, which is the second inequality of \eqref{eq:supersol-ineq}.

\smallskip
\noindent\emph{A priori bound \eqref{eq:prioribound}.} At $t=T$, $\phi(T,y)=\psi(T,y)=-y^2=w(T,y,p)$, so the terminal conditions agree. The right-hand inequality of \eqref{eq:prioribound} then follows from the supersolution property combined with the verification theorem applied to any admissible control satisfying the state constraint. The left-hand inequality follows by inserting the constant control $z^0=z^1=0$ into the verification theorem, which by the bound on $-\mathcal L^{0,0}$ above yields $w(t,y,p)\ge\underline C(T-t)-y^2$ on $\mathrm{cl}_y(\cD)$.
\end{proof}




\medskip
\subsection{Proof of Proposition \ref{prop:boundary}}

We treat the supersolution and subsolution properties separately, as they
require different penalisation strategies.

\subsubsection{Supersolution properties }

We prove that $\underline w_*$ is a viscosity supersolution; the
proof for $\overline w_*$ is symmetric and given at the end of this part.

\medskip
\noindent\textbf{\ Supersolution property of $\underline w_*$.} Let $\phi\in C^{1,2}([0,T)\times(0,1))$ and let $(t_0,p_0)$ be a strict
minimum of $\underline w_*-\phi$ with $(\underline w_*-\phi)(t_0,p_0)=0$.

\smallskip
\noindent\emph{Step 1 (Penalised test function and convergence).}
For $n\ge1$ define on $\mathrm{cl}_y(\cD)$
\[
  \varphi_n(t,y,p):=\phi(t,p)-n\bigl(y-\underline W(t)\bigr).
\]
Since $y\ge\underline W(t)$ on $\mathrm{cl}_y(\cD)$, we have $\varphi_n\le\phi$,
with equality on~$\cD_d$.
Let $(t_n,y_n,p_n)$ be a minimiser of $w_*-\varphi_n$ on $\mathrm{cl}_y(\cD)$
and set $\delta_n:=y_n-\underline W(t_n)\ge0$.
The minimality at $(t_n,y_n,p_n)$ gives
\begin{align*}
    0=&\underline w_*(t_0,p_0)-\phi(t_0,p_0)=( w_*-\varphi_n)(t_0,\underline W(t_0),p_0)\\
    &\geq ( w_*-\varphi_n)(t_n,y_n,p_n)=w_*(t_n,y_n,p_n)-\phi(t_n,p_n)+n(y_n-\underline W(t_n)).
\end{align*}
By compactness, up an extraction, $(t_n,y_n,p_n)\to (t_\infty,\underline W(t_\infty),p_\infty)$ and  we have the chain of inequalities 
\begin{align*}
    0=&\underline w_*(t_0,p_0)-\phi(t_0,p_0)\geq \limsup_n w_*(t_n,y_n,p_n)-\phi(t_n,p_n)+n(y_n-\underline W(t_n))\\
    &\geq \liminf_n w_*(t_n,y_n,p_n)-\phi(t_n,p_n)+\liminf_n n(y_n-\underline W(t_n))\\
    &\geq  w_*(t_\infty,\underline W(t_\infty),p_\infty)-\phi(t_\infty,p_\infty)+\liminf_n n(y_n-\underline W(t_n))\\
    &\geq \underline w_*(t_\infty,p_\infty)-\phi(t_\infty,p_\infty)
\end{align*}
which is larger than $\underline w_*(t_0,p_0)-\phi(t_0,p_0)$ by the strict minimality of $(t_0,p_0)$ for this function. Thus, all these values are equal, and after extraction, 
$$(t_n,p_n)\to(t_0,p_0),\,\delta_n\to0,\,n\delta_n\to0.$$

We now treat the two cases $\delta_n>0$ and $\delta_n=0$ separately;
both lead to the same conclusion.

\smallskip
\noindent\emph{Step 2 (Case $\delta_n>0$: interior point).}
If $y_n>\underline W(t_n)$ then $(t_n,y_n,p_n)\in\cD$, and $w_*$ satisfies
the viscosity supersolution inequality at this interior minimum point:
\[
  -\partial_t\varphi_n(t_n,y_n,p_n)+H^*\!\bigl(t_n,y_n,p_n,\,\partial_y\varphi_n,\,\nabla^2\varphi_n\bigr)\;\ge\;0.
\]
In particular, for any $z^n\in\underline\cV$, since
$\underline\cV\subset\{(z^n,z^n)\}$:
\[
  -\partial_t\varphi_n(t_n,y_n,p_n)-
L^{z^n,z^n}\!\bigl(t_n,y_n,p_n,\,\partial_y\varphi_n,\,\nabla^2\varphi_n\bigr)\;\ge\;0.
\]
Now $\partial_y\varphi_n=-n$, $\partial_{yy}^2\varphi_n=0$,
$\partial_t\varphi_n=\partial_t\phi+n\,\underline W'$,
and evaluating at $Z^0=Z^1=z^n$ cancels the terms involving $Z^0-Z^1$.
The inequality reduces to
\[
  -\partial_t\phi(t_n,p_n)
  \;\ge\;
  -n\bigl[{-H^0(z^n)+H^1(z^n)+\underline a+\kappa\delta_n
   }\bigr]
  +L^{z^n,z^n}(t_n,p_n,0,\partial_{pp}^2\phi(t_n,p_n)).
\]
Thanks $z^n\in\underline\cV$, we obtain ${-H^0(z^n)+H^1(z^n)+\underline a=0}$ and 
\begin{align*}
     &-\partial_t\phi(t_n,p_n)-L^{z^n,z^n}(t_n,p_n,0,\partial_{pp}^2\phi(t_n,p_n))\\
  & \geq-\partial_t\phi(t_n,p_n)+\underline H(t_n,p_n,\partial_{pp}^2\phi(t_n,p_n))
\;\ge\;-n\bigl[\kappa\delta_n\bigr].
\end{align*}
Sending $n\to\infty$ yields
\begin{equation}\label{eq:super-interior}
  -\partial_t\phi(t_0,p_0)+\underline H^*(t_0,p_0,\partial_{pp}^2\phi(t_0,p_0))\;\ge\;0.
\end{equation}

\smallskip
\noindent\emph{Step 3 (Case $\delta_n=0$: boundary point).}
If $y_n=\underline W(t_n)$ then $(t_n,y_n,p_n)\in\cD_d$ and the interior
supersolution property is not available. We appeal instead to the DPP.

Fix $z\in\underline\cV$. Thanks to \eqref{eq:defv}, 
$-H^0(z)+H^1(z)+\kappa\underline W(t_0)
    -\underline W'(t_0)=0$, the constant control
$Z^0_s=Z^1_s=z$ has zero volatility for~$Y$ and satisfies
$d(Y_s-\underline W(s))=\kappa(Y_s-\underline W(s))\,ds$
with $Y_{t_n}=\underline W(t_n)$, giving $Y_s=\underline W(s)$ for
all $s\ge t_n$.

{Take $(t_n^k,y_n^k,p_n^k)\in\cD$ with
$y_n^k\downarrow\underline W(t_n)$ and
$w(t_n^k,y_n^k,p_n^k)\to w_*(t_n,\underline W(t_n),p_n)$.
Applying the DPP lower bound~\eqref{eq:DPP_w_lower} at
$(t_n^k,y_n^k,p_n^k)$ with control $z\in\underline\cV$ and time increment
$\theta=t_n^k+h$ and using the global minimum property
$w_*\ge\varphi_n+(\underline w_*-\phi)(t_n,p_n)$ on $\mathrm{cl}_y(\cD)$,
we have 
\begin{align*}
     &w(t_n^k,y_n^k,p_n^k)\;\ge\;
  \EE\!\left[\int_{t^k_n}^{t^k_n+h}\!\ell(s,z,z,p_s)\,ds
  +w_*(t^k_n+h,Y_{t^k_n+h},p_{t^k_n+h})\right]\\
  \;&\ge\;\EE\!\left[\int_{t^k_n}^{t^k_n+h}\!\ell(s,z,z,p_s)\,ds
  +\varphi_n(t^k_n+h,Y_{t^k_n+h},p_{t^k_n+h})\right]+(\underline w_*-\phi)(t_n,p_n).
\end{align*}
We take the limit $k\to\infty$ to obtain
\[
  \phi(t_n,p_n)\;\ge\;
  \EE\!\left[\int_{t_n}^{t_n+h}\!\ell(s,z,z,p_s)\,ds
  +\varphi_n(t_n+h,Y_{t_n+h},p_{t_n+h})\right],
\]
where recalling \eqref{eq:defSigma} $dp_s=\Sigma_2(p_s,z,z)\,dB_s$ with $p_{t_n}=p_n$.
Due to the choice of control $Y_{t_n+h}=\underline W_{t_n+h}$ and therefore 
$\varphi_n(t_n+h,Y_{t_n+h},p_{t_n+h})=\phi(t_n+h,p_{t_n+h})$ which leads to 
\[
  \phi(t_n,p_n)\;\ge\;
  \EE\!\left[\int_{t_n}^{t_n+h}\!\ell(s,z,z,p_s)\,ds
  +\phi(t_n+h,p_{t_n+h})\right],
\]
}Applying It\^o's formula to $\phi(s,p_s)$ and recalling that
$L^{z,z}(t,p,0,\partial_{pp}^2\phi)
=\tfrac12\Sigma_2(p_s,z,z)^2\partial_{pp}^2\phi+\ell(t,z,z,p)$:
\[
  0\;\ge\;\EE\int_{t_n}^{t_n+h}\!
  \bigl[\partial_t\phi(s,p_s)
  +L^{z,z}(s,p_s,0,\partial_{pp}^2\phi(s,p_s))\bigr]\,ds.
\]
Dividing by~$h$, sending $h\downarrow0$, then $n\to\infty$
gives~\eqref{eq:super-interior} again.
\medskip

\noindent\textbf{ Supersolution property of $\overline w_*$.}
The proof is symmetric. Define
$\varphi_n(t,y,p):=\phi(t,p)-n(\overline W(t)-y)$.
The minimiser of $w_*-\varphi_n$ on $\mathrm{cl}_y(\cD)$ satisfies
$y_n=\overline W(t_n)$.

Either the viscosity supersolution property at $\cD$ or the DPP lower bound at the upper boundary with control $z\in\overline\cV$
then yields, by the same It\^o argument to 
\[
  -\partial_t\phi(t_0,p_0)+\overline H^*(t_0,p_0,\partial_{pp}^2\phi(t_0,p_0))\geq0.
\]

\subsubsection{Subsolution properties}

We prove that $\overline w^*$ is a viscosity subsolution; the proof for
$\underline w^*$ is symmetric and stated at the end.

\medskip
\noindent\textbf{ Subsolution property of $\overline w^*$.} Let $\phi\in C^{1,2}([0,T]\times[0,1])$ and let $(t_0,p_0)$ be a strict
maximum of $\overline w^*-\phi$ with $(\overline w^*-\phi)(t_0,p_0)=0$ with $t_0<T$. 

\smallskip
\noindent\emph{Step 1 (Penalised test function).}
For $n\ge1$ and $\varepsilon>0$, define on
$\mathrm{cl}_y(\cD)\cap\{0\le\overline W(t)-y\le n^{-1}\}$
\[
  \varphi_n(t,y,p):=\phi(t,p)
  +\varepsilon n\bigl(\overline W(t)-y\bigr)
  +\varepsilon n^2\bigl(\overline W(t)-y\bigr)
    \bigl(y-\overline W(t)+n^{-1}\bigr).
\]
Writing $\delta:=\overline W(t)-y\in[0,n^{-1}]$, this simplifies to
$\varphi_n=\phi+2\varepsilon n\delta-\varepsilon n^2\delta^2$.
Its relevant derivatives are
\begin{alignat*}{2}
  \partial_y\varphi_n &= -2\varepsilon n+2\varepsilon n^2\delta
    =:-2\varepsilon n(1-n\delta), &\qquad
  \partial_{yy}^2\varphi_n &= -2\varepsilon n^2,\\
  \partial_p\varphi_n &= \partial_p\phi, &\qquad
  \partial_{pp}^2\varphi_n &= \partial_{pp}^2\phi,\\
  \partial_t\varphi_n &= \partial_t\phi+2\varepsilon n(1-n\delta)\,\overline W'(t), &\qquad
  \partial_{yp}^2\varphi_n &= 0.
\end{alignat*}

\smallskip
\noindent\emph{Step 2 (Maximiser and convergence).}
Let $(t_n,y_n,p_n)$ be a maximiser of $w^*-\varphi_n$ on
$\mathrm{cl}_y(\cD)\cap\{0\le\delta\le n^{-1}\}$.
Set $\delta_n:=\overline W(t_n)-y_n$.
A standard argument (using $w^*-\varphi_n\le w^*-\phi\le0$ near
$(t_0,\overline W(t_0),p_0)$ together with the fact that $\varphi_n=\phi$
at $\delta=0$) yields, after extracting a subsequence,
\begin{equation}\label{eq:sub-conv}
  (t_n,p_n)\to(t_0,p_0),\quad n\delta_n\to0,\quad
  w^*(t_n,y_n,p_n)\to\phi(t_0,p_0).
\end{equation}

\smallskip
\noindent\emph{Step 3 (Subsolution inequality).}
Since $w^*$ is a viscosity subsolution of~\eqref{eq:hjb} on
$\mathrm{cl}_y(\cD)$, at the maximum point of $w^*-\varphi_n$
there exist near-optimisers $(Z^0_n,Z^1_n)\in\RR^2$ such that
\begin{equation}\label{eq:sub-ineq}
  -\partial_t\varphi_n(t_n,y_n,p_n)
  \;\le\;
  L^{Z^0_n,Z^1_n}\!\bigl(t_n,y_n,p_n,\,\partial_y\varphi_n,\,\nabla^2\varphi_n\bigr)
  +n^{-1}.
\end{equation}
Write $d_n:=Z^0_n-Z^1_n$, $\hat a_n:=\bar \lambda(p_n,Z^0_n,Z^1_n)$.
Substituting the derivatives from Step~1 and rearranging, the dominant
contributions to~\eqref{eq:sub-ineq} are
\begin{align}
  \notag-\partial_t\phi(t_n,p_n)
  &\;\le\;  -\;\varepsilon n^2 d_n^2
  \;+\;R_n\;+{\;2n\varepsilon(1-n\delta_n)\kappa \delta_n}+n^{-1}\\
  &
  -2\varepsilon(1-n\delta_n)\,
  \bigl[
    \bigl(-nH^0(Z^0_n)+nH^1(Z^1_n)+n\overline{a}+n\hat a_n\,d_n\bigr)
  \bigr]
  \label{eq:sub-expanded}
\end{align}
where $R_n$ collects the terms
\[
  R_n:=\tfrac12\Sigma_2(p_n,Z^0_n,Z^1_n)^2\partial_{pp}^2\phi(t_n,p_n)
  +\hat a_n
  +\tfrac{e^{\kappa(T-t_n)}}{2}
    \bigl[H^0(Z^0_n)+H^1(Z^1_n)-\hat a_n(Z^0_n+Z^1_n)\bigr]
\]
and the term involving $\kappa\delta_n$
vanishes by~\eqref{eq:sub-conv}.

\smallskip
\noindent\emph{Step 4 ($d_n=Z^0_n-Z^1_n\to0$ via Assumption~\ref{assumption:cd}(ii)).}
Given the Assumption \ref{assumption:cd} (2)-(3), we have that 
\begin{align*}
    \bigl[{
    \bigl(-nH^0(Z^0_n)+nH^1(Z^1_n)+n\overline{a}+n\hat a_n\,d_n\bigr)
  }\bigr]&\geq -C n(1+|d_n|) \\
  |R_n|&\leq C (1+|d_n|)
\end{align*}

Thus, for $n$ large enough 
\eqref{eq:sub-expanded}
gives
\begin{align*}
  -\partial_t\phi(t_n,p_n)
  &\;\le\;
  n\left(2C\varepsilon(1-n\delta_n)(1+|d_n|)-\varepsilon n d_n^2\right)+C,
\end{align*}
and we obtain that 
\begin{align}\label{eq:bz}
    |Z^0_n-Z^1_n|= |d_n|\leq \frac{C}{\sqrt{n}},
\end{align} 
otherwise the right hand side explodes to $-\infty$. 

\noindent\emph{Step 5: Perturbing $Z^0_n$ }
We denote 
{$$\nu_z:=-H^0(z)+H^1(z)+\overline{a}=-H^0(z)+H^1(z)+\sup_{z}H^0(z)-H^1(z)\geq 0$$}
where we used \eqref{def:w} to obtain the sign.
We use Lipschitz continuity (to pass from $H^1(Z^1_n)$ to $H^1(Z^0_n)$), \eqref{eq:sub-expanded}, and \eqref{eq:bz} to obtain that 
\begin{align*}
  -\partial_t\phi(t_n,p_n)
  &\;\le\; C(1+n|d_n|)
  -2\varepsilon(1-n\delta_n)\,
  n\nu_{Z^n_0}
 \\
  &\;\le\; C(1+\sqrt{n})-2\varepsilon(1-n\delta_n)\,
  n\nu_{Z^n_0},
\end{align*}
Thus, $\limsup \nu_{Z^n_0}\leq 0$ otherwise the right hand side explodes to $-\infty$ as $-cn +\sqrt{n}$.
which also implies that 
$$\lim_n -H^0(Z_n^0)+H^1(Z_n^0)+\overline a=0.$$

Using Assumption \ref{assumption:cd}, and taking a subsequence we can find $\tilde Z_n^0\in \cV$ so that 
$|\tilde Z_n^0- Z_n^0|\to 0$. 
Given the definition of $R_n$, since we have 
\begin{align*}
  &L^{\tilde Z^0_n,\tilde Z^0_n}(t_n,p_n,0,\partial_{pp}^2\phi(t_n,p_n))\\
  &=\tfrac12\Sigma_2(p_n,\tilde Z^0_n,\tilde Z^0_n)^2\partial_{pp}^2\phi(t_n,p_n)
  +\bar \lambda(p_n, \tilde Z^0_n,\tilde Z^0_n)
  \\
  &+\tfrac{e^{\kappa(T-t_n)}}{2}
    \bigl[H^0(\tilde Z^0_n)+H^1(\tilde Z^0_n)-\bar \lambda(p_n, \tilde Z^0_n,\tilde Z^0_n)(\tilde Z^0_n+\tilde Z^0_n)\bigr],    
\end{align*}
the convergence $|\tilde Z_n^0- Z_n^0|\to 0$ gives
$R_n-L^{\tilde Z^0_n,\tilde Z^0_n}(t_n,p_n,0,\partial_{pp}^2\phi(t_n,p_n))\to 0$. 

\smallskip
\noindent\emph{Step 6 (Passing to the limit).}
We inject these quantities in \eqref{eq:sub-expanded} to obtain that 
\begin{align*}
  -\partial_t\phi(t_n,p_n)
  &\;\le\;L^{\tilde Z^0_n,\tilde Z^0_n}(t_n,p_n,0,\partial_{pp}^2\phi(t_n,p_n))
  \\
  &-2n\varepsilon(1-n\delta_n)\,
  \bigl[
    \bigl(-H^0(Z^0_n)+H^1(Z^0_n)+\kappa\overline W(t_n)
    -\overline W'(t_n)\bigr)
  \bigr]
  \notag\\
  &
  -\;\varepsilon n^2 d_n^2
  \;+\;R_n-L^{\tilde Z^0_n,\tilde Z^0_n}(t_n,p_n,0,\partial_{pp}^2\phi(t_n,p_n))\;\\
  &+\;2\varepsilon(1-n\delta_n)\kappa \delta_n+n^{-1}\\
  &\;\le\;-\overline H(t_n,p_n,\partial_{pp}^2\phi(t_n,p_n))
  \\
  &-2n\varepsilon(1-n\delta_n)\,
  \bigl[
    \bigl(-H^0(Z^0_n)+H^1(Z^0_n)+\kappa\overline W(t_n)
    -\overline W'(t_n)\bigr)
  \bigr]
  \notag\\
  &
  -\;\varepsilon n^2 d_n^2
  \;+\;R_n-L^{\tilde Z^0_n,\tilde Z^0_n}(t_n,p_n,0,\partial_{pp}^2\phi(t_n,p_n))\;\\
  &+\;2\varepsilon(1-n\delta_n)\kappa \delta_n+n^{-1}.
\end{align*}
We first send $n\to \infty$ and send $\epsilon \to 0$ to obtain
\[
  {-\partial_t\phi(t_0,p_0)+\overline H_* (t_0,p_0,\partial_{pp}^2\phi(t_0,p_0))
  \;\le\;0}.
\]

\bigskip
\noindent\textbf{ Subsolution property of $\underline w^*$.}
The proof is symmetric, with the roles of the boundaries reversed.
Define
$\varphi_n(t,y,p):=\phi(t,p)
+\varepsilon n(y-\underline W(t))
+\varepsilon n^2(y-\underline W(t))(\underline W(t)-y+n^{-1})$
and work on $\{0\le y-\underline W(t)\le n^{-1}\}$.
The derivatives are $\partial_y\varphi_n=2\varepsilon n(1-n\delta)$
(with $\delta=y-\underline W$) and $\partial_{yy}^2\varphi_n=2\varepsilon n^2$.
The near-optimisers satisfy $Z^0_n-Z^1_n\to0$ 
together with the definition of lower bound and
the same argument as in Step~5 above.
Passing to the limit yields
\[
  -\partial_t\phi(t_0,p_0)
  +\underline H_*(t_0,p_0,\partial_{pp}^2\phi(t_0,p_0))\;\le\;0.
\]

\subsection{Proof of the Comparison Theorem}

Following the exponential change of variable we define rescaled generator is defined by 
\begin{align*}
L^{\lambda,z^0,z^1}(t,y,p;q,N)
&:=
\Bigl[
    -H^0(z^0)+H^1(z^1)+\kappa y
    +\bar\lambda(p,z^0,z^1)(z^0-z^1)
\Bigr]q \notag \\
&\quad
+\frac12\,\mathrm{tr}\bigl(\Sigma\Sigma^\top N\bigr)
+e^{\lambda t}\ell(t,z^0,z^1,p).
\end{align*}

Correspondingly, we have 
\begin{equation}\label{eq:defHauxinside}\notag
H^\lambda(t,y,p, q ,A) 
:= \inf_{(z^0,z^1)\in\RR^2}\bigl\{-L^{\lambda,z^0,z^1}(t,y,p, q ,A)\bigr\},
\end{equation}
where for $(t,y,p, q ,A)\in \mathrm{cl}(\mathcal{D})\times \RR\times \cS_2$.

\subsubsection{Preliminary results}

We need the following lemma, where we denote $\cS_2^-$ the set of
symmetric negative semi-definite matrices of dimension 2.

\begin{lemma}[Hamiltonian comparison]\label{lem:Hcomp}
Let Assumption~\ref{assumption:cd} hold.
There exists a constant $C>0$
(depending only on $\kappa$, $T$, $\lambda$, and the
constants in Assumption~\ref{assumption:cd}) such that the following holds.

For any $\varepsilon>0$, $t\in[0,T]$, $y,\bar y\in[\underline W(t),\overline W(t)]$,
$p,\bar p\in[0,1]$, $q,\bar q\in\RR$, $X,Y\in\cS_2^-$ with
$X_{yy}\le -\varepsilon$ satisfying
\begin{equation}\label{eq:trace_hyp}
r:=\sup_{(Z^0,Z^1)\in\RR^2}
\tfrac12\,\Sigma(\bar p,Z^0,Z^1)^\top X\,\Sigma(\bar p,Z^0,Z^1)
-\tfrac12\,\Sigma(p,Z^0,Z^1)^\top Y\,\Sigma(p,Z^0,Z^1)
+\varepsilon(Z^0-Z^1)^2<\infty,
\end{equation}
with $\Sigma$ as in \eqref{eq:defSigma}, we have
\begin{equation}\label{eq:H_comp}
H^\lambda(t,y,p,q,Y)-H^\lambda(t,\bar y,\bar p,\bar q,X)
\;\le\;r+C\,\frac{\delta^2}{\varepsilon}+C\,\delta,
\end{equation}
where $\delta:=|q-\bar q|+(|y-\bar y|+|p-\bar p|)(1+|\bar q|)$.
\end{lemma}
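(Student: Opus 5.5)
The plan is a control-by-control estimate. Because $H^\lambda$ is an infimum, the first step is to take an $\eta$-optimizer $(Z^0,Z^1)$ for $H^\lambda(t,\bar y,\bar p,\bar q,X)$. Using the same control in the infimum defining $H^\lambda(t,y,p,q,Y)$ gives
\[
H^\lambda(t,y,p,q,Y)-H^\lambda(t,\bar y,\bar p,\bar q,X)\le L^{\lambda,Z^0,Z^1}(t,\bar y,\bar p;\bar q,X)-L^{\lambda,Z^0,Z^1}(t,y,p;q,Y)+\eta .
\]
The bound then has to hold uniformly in $(Z^0,Z^1)$, after which we let $\eta\to0$. Write $d:=Z^0-Z^1$. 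The difference of generators splits into three groups.

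\emph{Second-order terms.} These are $\tfrac12\Sigma(\bar p)^\top X\Sigma(\bar p)-\tfrac12\Sigma(p)^\top Y\Sigma(p)$. By \eqref{eq:trace_hyp} this group is at most $r-\varepsilon d^2$. The leftover $-\varepsilon d^2$ is exactly what we will use to absorb the growth coming from the other two groups.

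\emph{First-order terms.} These are $b(\bar y,\bar p)\bar q-b(y,p)q$, with drift $b(y,p)=-H^0(Z^0)+H^1(Z^1)+\kappa y+\bar\lambda(p,Z^0,Z^1)d$. Write the difference as $b(\bar y,\bar p)(\bar q-q)+(b(\bar y,\bar p)-b(y,p))q$, and use the same decomposition with the roles of barred and unbarred variables swapped where convenient. Lemma~\ref{lem:verify_cd}.2 gives $|b|\le C(1+|d|)$, since $|y|\le\overline a+|\underline a|$. The map $p\mapsto\bar\lambda$ is affine with coefficient $A^0(Z^0)-A^1(Z^1)$, bounded by $2N_0$. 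Hence $|b(\bar y,\bar p)-b(y,p)|\le \kappa|y-\bar y|+2N_0|p-\bar p||d|$. Putting these together, this group is bounded by $C\delta(1+|d|)$.

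\emph{Running reward.} This is $e^{\lambda t}\bigl(\ell(t,Z^0,Z^1,\bar p)-\ell(t,Z^0,Z^1,p)\bigr)$. The $p$-dependence enters only through $\bar\lambda$, which appears both alone and multiplied by $(Z^0+Z^1)$. The difference is therefore $(\bar p-p)(A^0-A^1)\bigl(1-\tfrac{e^{\kappa(T-t)}}2(Z^0+Z^1)\bigr)$. This is the \textbf{main obstacle}, because $|Z^0+Z^1|$ is not controlled by $d$. I would resolve it with Lemma~\ref{lem:verify_cd}.2, Step 4: when $|Z^0+Z^1|\ge C(1+|d|)$ we have $A^0(Z^0)=A^1(Z^1)$, so the difference vanishes. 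In the complementary region, $|Z^0+Z^1|\le C(1+|d|)$, and the term is bounded by $C|p-\bar p|(1+|d|)\le C\delta(1+|d|)$. A similar saturation argument keeps the $p$-entries of the second-order group in check, although the hypothesis $r<\infty$ already contains them.

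\emph{Conclusion.} Summing the three groups, the difference of generators is at most $r-\varepsilon d^2+C\delta+C\delta|d|$. Young's inequality gives $C\delta|d|\le\varepsilon d^2+C^2\delta^2/(4\varepsilon)$, which yields \eqref{eq:H_comp}. One technical check remains. The infimum defining $H^\lambda(t,\bar y,\bar p,\bar q,X)$ must be finite, so that an $\eta$-optimizer exists. This follows from $X_{yy}\le-\varepsilon$: the term $-\tfrac12 d^2X_{yy}$ dominates the linear growth in $d$, and on the region where $|Z^0+Z^1|$ is large the generator does not depend on $p$ and grows at most linearly in $d$.
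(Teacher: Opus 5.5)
Your proposal is correct and follows essentially the same route as the paper's proof. The paper likewise takes an $\eta$-optimizer for $H^\lambda(t,\bar y,\bar p,\bar q,X)$, which exists because $X_{yy}\le-\varepsilon$ makes the infimum finite. It then uses the trace hypothesis to obtain $r-\varepsilon d^2$, bounds the drift terms linearly in $d$ via Lemma~\ref{lem:verify_cd}, splits into cases on $|Z^0+Z^1|$ to handle the $p$-dependence of the running reward, and absorbs $C\delta|d|$ with Young's inequality. One cosmetic slip: $|y|\le\overline a+|\underline a|$ is off by a time factor (the paper uses $|y|\le(\overline a+|\underline a|)T$), but your argument only needs $y$ to be bounded on the strip.
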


\begin{proof}
We first note that $H^\lambda(t,\bar y,\bar p,\bar q,X)>-\infty$. Indeed,
since $X\in\cS_2^-$ with $X_{yy}\le -\varepsilon$, the trace contribution
$-\tfrac12\,\Sigma(\bar p,z^0,z^1)^\top X\,\Sigma(\bar p,z^0,z^1)$ to
$-L^{\lambda,z^0,z^1}(t,\bar y,\bar p,\bar q,X)$ is non-negative and
quadratically coercive in $z^0-z^1$ with coefficient $\ge\varepsilon/2$,
while the drift and running-cost components grow at most linearly in
$z^0-z^1$, uniformly in $z^0+z^1$, by
Lemma~\ref{lem:verify_cd}(1)--(2). The infimum defining
$H^\lambda$ is therefore attained on a compact set, in particular finite.

\medskip\noindent\textit{Step 1 (Near-optimiser).}
Let $(\bar Z^0,\bar Z^1)$ be an $\eta$-optimiser for
$H^\lambda(t,\bar y,\bar p,\bar q,X)$ with $\eta>0$ arbitrary:
\[
-L^{\lambda,\bar Z^0,\bar Z^1}(t,\bar y,\bar p,\bar q,X)
\;\le\;H^\lambda(t,\bar y,\bar p,\bar q,X)+\eta.
\]
Set
\[
\bar\Delta:=\bar Z^0-\bar Z^1,\qquad
\bar\alpha:=A^0(\bar Z^0)-A^1(\bar Z^1),
\]
\[
\mu_0(p):=-H^0(\bar Z^0)+H^1(\bar Z^1)
+\bar\lambda(p,\bar Z^0,\bar Z^1)\,\bar\Delta.
\]
Since $H^\lambda\le -L^{\lambda,\bar Z^0,\bar Z^1}$ at every point,
\begin{equation}\label{eq:lem_diff}
H^\lambda(t,y,p,q,Y)-H^\lambda(t,\bar y,\bar p,\bar q,X)
\;\le\;\mathcal{E}+\eta,
\end{equation}
where
$\mathcal{E}:=-L^{\lambda,\bar Z^0,\bar Z^1}(t,y,p,q,Y)
+L^{\lambda,\bar Z^0,\bar Z^1}(t,\bar y,\bar p,\bar q,X)$.
The $y$-dependence of $L^{\lambda,\bar Z^0,\bar Z^1}$ is the additive
contribution $\kappa y\,q$, so
\[
\mathcal{E}=\underbrace{
-L^{\lambda,\bar Z^0,\bar Z^1}(t,0,p,q,Y)
+L^{\lambda,\bar Z^0,\bar Z^1}(t,0,\bar p,\bar q,X)}_{=:\,\mathcal{E}^0}
+\underbrace{\kappa(\bar y\bar q-yq)}_{=:\,\mathcal{R}}.
\]

\medskip\noindent\textit{Step 2 (The $y$-remainder $\mathcal{R}$).}
Splitting $\bar y\bar q-yq=y(\bar q-q)+(\bar y-y)\bar q$,
\[
|\mathcal{R}|\le\kappa\bigl(|y|\,|q-\bar q|+|\bar y-y|\,|\bar q|\bigr)
\le C\delta,
\]
since $|y|\le C$ on the bounded $y$-domain and
$|q-\bar q|+|\bar y-y|\,|\bar q|\le\delta$.

\medskip\noindent\textit{Step 3 (Decomposition of $\mathcal{E}^0$).}
We decompose $\mathcal{E}^0$ into trace, gradient mismatch at fixed~$p$,
and $p$-dependent contributions:
\begin{align*}
\mathcal{E}^0
&=\underbrace{\tfrac12\,\Sigma(\bar p,\bar Z^0,\bar Z^1)^\top X\,\Sigma(\bar p,\bar Z^0,\bar Z^1)
-\tfrac12\,\Sigma(p,\bar Z^0,\bar Z^1)^\top Y\,\Sigma(p,\bar Z^0,\bar Z^1)}_{\text{(i) trace}}\\
&\quad+\underbrace{\mu_0(p)(\bar q-q)}_{\text{(ii) gradient mismatch}}
+\underbrace{[\mu_0(\bar p)-\mu_0(p)]\bar q
+e^{\lambda t}[\ell(\bar p)-\ell(p)]}_{\text{(iii) $p$-dependent}},
\end{align*}
where in (iii) the running cost $\ell$ is evaluated at
$(t,\bar Z^0,\bar Z^1,\cdot)$.

\smallskip\noindent\textit{(i) Trace.}
By hypothesis \eqref{eq:trace_hyp} evaluated at $(\bar Z^0,\bar Z^1)$,
\[
\tfrac12\,\Sigma(\bar p,\bar Z^0,\bar Z^1)^\top X\,\Sigma(\bar p,\bar Z^0,\bar Z^1)
-\tfrac12\,\Sigma(p,\bar Z^0,\bar Z^1)^\top Y\,\Sigma(p,\bar Z^0,\bar Z^1)
\le -\varepsilon\bar\Delta^2+r.
\]

\smallskip\noindent\textit{(ii) Gradient mismatch.}
By Assumption~\ref{assumption:cd}(2),
$|H^0(\bar Z^0)-H^1(\bar Z^1)|\le C(1+|\bar\Delta|)$,
and $|\bar\lambda(p,\bar Z^0,\bar Z^1)\,\bar\Delta|\le \bar a\,|\bar\Delta|$,
so $|\mu_0(p)|\le C(1+|\bar\Delta|)$ and
$|\mu_0(p)(\bar q-q)|\le C(1+|\bar\Delta|)\,|q-\bar q|$.

\smallskip\noindent\textit{(iii) $p$-dependent terms: case distinction.}

\noindent\emph{Case 1: $|\bar Z^0+\bar Z^1|<C(1+|\bar\Delta|)$.}
By Assumption~\ref{assumption:cd}(1), $|\bar\alpha|\le \bar a$.
The $p$-variation of the drift,
\[
|\mu_0(\bar p)-\mu_0(p)|\leq |\bar p-p|\,|\bar\alpha|\,|\bar\Delta|,
\]
contributes $|\mu_0(\bar p)-\mu_0(p)|\,|\bar q|
\le C|\bar\Delta|\,|p-\bar p|\,|\bar q|$.
The running-cost difference,
\[
\ell(\bar p)-\ell(p)
=(\bar p-p)\,\bar\alpha\,
\bigl[1-\tfrac{e^{\kappa(T-t)}}{2}(\bar Z^0+\bar Z^1)\bigr],
\]
satisfies $|\bar\alpha|(1+|\bar Z^0+\bar Z^1|)\le C(1+|\bar\Delta|)$
by the case condition, hence
$e^{\lambda t}|\ell(\bar p)-\ell(p)|\le C(1+|\bar\Delta|)\,|p-\bar p|$.
Together,
\[
\text{(iii)}\le C(1+|\bar\Delta|)\,|p-\bar p|\,(1+|\bar q|).
\]

\noindent\emph{Case 2: $|\bar Z^0+\bar Z^1|\ge C(1+|\bar\Delta|)$.}
By Assumption~\ref{assumption:cd}(2),
$L^{\lambda,\bar Z^0,\bar Z^1}(t,0,p,q,X)$ is independent of~$p$, so the
$p$-dependent contribution in (iii) vanishes.

In both cases,
$\text{(iii)}\le C(1+|\bar\Delta|)\,|p-\bar p|\,(1+|\bar q|)$.

\medskip\noindent\textit{Step 4 (Absorbing $|\bar\Delta|$).}
Collecting Steps~2--3,
\[
\mathcal{E}\;\le\;-\varepsilon\bar\Delta^2+r
+C(1+|\bar\Delta|)\underbrace{\bigl[|q-\bar q|
+|p-\bar p|(1+|\bar q|)\bigr]}_{=:\,\delta_0}+C\delta.
\]
Note $\delta_0\le\delta$. By Young's inequality,
\[
(1+|\bar\Delta|)\delta_0
=\delta_0+|\bar\Delta|\,\delta_0
\le\delta_0+\tfrac{\varepsilon}{2}\bar\Delta^2
+\tfrac{1}{2\varepsilon}\delta_0^2.
\]
Substituting and using $\delta_0\le\delta$,
\[
\mathcal{E}\;\le\;-\tfrac{\varepsilon}{2}\bar\Delta^2
+r+\tfrac{C}{2\varepsilon}\delta_0^2+C\delta_0+C\delta
\;\le\;r+C\,\delta^2/\varepsilon+C\,\delta.
\]
Sending $\eta\to 0$ in \eqref{eq:lem_diff} yields \eqref{eq:H_comp}.
\end{proof}

\subsubsection{Proof of Theorem \ref{thm:compinside}}

\begin{proof}
We prove that \(u\le v\) on \(\mathrm{cl}_y(\cD)\). Suppose, to the contrary, that
\[
\sup_{\mathrm{cl}_y(\cD)}(u-v)>0.
\]

\medskip\noindent
\textbf{Step 1.}
Fix \(\lambda>0\) and define
\[
f(p):=\sqrt{-\ln p}+\sqrt{-\ln(1-p)},\qquad p\in(0,1).
\]
Then \(f(p)\to+\infty\) as \(p\to0^+\) and as \(p\to1^-\). Moreover,
\[
\sup_{p\in(0,1)}p^2(1-p)^2|f''(p)|<\infty.
\]
Set
\begin{equation}\label{eq:KH_def}
K_H
:=
\frac12\,\bar a^{\,2}e^{\lambda T}
\sup_{p\in(0,1)}p^2(1-p)^2|f''(p)|<\infty,
\end{equation}
where
\[
\bar a:=\|A^0\|_\infty+\|A^1\|_\infty .
\]
For \(\varepsilon\in(0,1]\), define
\[
\tilde u_\varepsilon(t,y,p)
:=
e^{\lambda t}
\left(
u(t,y,p)
-\frac{\varepsilon}{T-t}
-\varepsilon^2 f(p)
\right),
\qquad
\tilde v(t,y,p):=e^{\lambda t}v(t,y,p).
\]
Since \(\sup_{\mathrm{cl}_y(\cD)}(u-v)>0\), for all \(\varepsilon>0\) sufficiently small,
\[
\sup_{\mathrm{cl}_y(\cD)}(\tilde u_\varepsilon-\tilde v)>0.
\]
The perturbation \(-\varepsilon/(T-t)\) forces
\[
\tilde u_\varepsilon(t,y,p)-\tilde v(t,y,p)\to-\infty
\qquad\text{as }t\uparrow T,
\]
and the perturbation \(-\varepsilon^2 f(p)\) forces
\[
\tilde u_\varepsilon(t,y,p)-\tilde v(t,y,p)\to-\infty
\qquad\text{as }p\to0^+\text{ or }p\to1^-.
\]
Together with the boundary condition on \(\cD_d\cup\cD_u\) and at $t=T$, this implies that the positive maximum of
\(\tilde u_\varepsilon-\tilde v\) is attained at some point
\[
(t_0,y_0,p_0)\in\cD .
\]

We fix $\lambda>0$. The function \(\tilde u_\varepsilon\) is a subsolution of the rescaled equation with Hamiltonian \(H^\lambda\):
\begin{equation}\label{eq:strict_sub}
-\partial_t\tilde u_\varepsilon+\lambda\tilde u_\varepsilon
+H^\lambda_*
\bigl(t,y,p,\nabla\tilde u_\varepsilon,\nabla^2\tilde u_\varepsilon\bigr)
\le
-\frac{\varepsilon e^{\lambda t}}{(T-t)^2}
+
K_H\varepsilon^2
\end{equation}
on \(\mathrm{cl}_y(\cD)\cap\{t<T\}\). The term \(K_H\varepsilon^2\) comes from the \(p\)-diffusion contribution of the barrier \(-\varepsilon^2 f(p)\), using \eqref{eq:KH_def} and
\[
|A^0(Z^0)-A^1(Z^1)|\le \bar a .
\]
Since
\[
\frac{e^{\lambda t}}{(T-t)^2}\ge \frac1{T^2},
\]
the right-hand side of \eqref{eq:strict_sub} is strictly negative for \(\varepsilon>0\) small enough. Moreover, \(\tilde v\) is a supersolution:
\begin{equation}\label{eq:v_super}\notag
-\partial_t\tilde v+\lambda\tilde v
+
(H^\lambda)^*
\bigl(t,y,p,\nabla\tilde v,\nabla^2\tilde v\bigr)
\ge0
\qquad\text{on }\cD.
\end{equation}

\medskip\noindent
\textbf{Step 2 (Doubling of variables).}
For \(n\ge1\), define
\begin{equation}\label{eq:phi_n}\notag
\Phi_n(t,\bar y,\bar p,y,p)
:=
\tilde u_\varepsilon(t,\bar y,\bar p)
-
\tilde v(t,y,p)
-
\frac n2(\bar y-y)^2
-
\frac{n^2}{2}(\bar p-p)^2
+
\varepsilon(\bar y-y_0)^2
\end{equation}
and
\[
M_n:=\sup \Phi_n.
\]
Let
\[
(t_n,\bar y_n,\bar p_n,y_n,p_n)
\]
be a maximizer of \(\Phi_n\) which exists by compactness of the domain.

\medskip\noindent
\textbf{Step 3 (Convergence and interiority).}
The standard doubling-variable argument yields
\begin{equation}\label{eq:pen_bound}
n(\bar y_n-y_n)^2\to0,
\qquad
n^2(\bar p_n-p_n)^2\to0.
\end{equation}
Moreover,
\[
M_n\to M_0,
\]
where
\[
M_0
:=
\sup_{(t,y,p)\in\mathrm{cl}_y(\cD)}
\left[
\tilde u_\varepsilon(t,y,p)-\tilde v(t,y,p)
+\varepsilon(y-y_0)^2
\right].
\]
Every limit point \((t_*,y_*,p_*)\) of both
\[
(t_n,\bar y_n,\bar p_n)
\qquad\text{and}\qquad
(t_n,y_n,p_n)
\]
satisfies
\[
\tilde u_\varepsilon(t_*,y_*,p_*)
-
\tilde v(t_*,y_*,p_*)
+
\varepsilon(y_*-y_0)^2
=
M_0>0.
\]
Since the \(y\)-domain is bounded, choosing \(\varepsilon\) small enough gives
\[
\tilde u_\varepsilon(t_*,y_*,p_*)
-
\tilde v(t_*,y_*,p_*)>0.
\]
The strict boundary behavior from Step~1 then implies
\[
(t_*,y_*,p_*)\in\cD.
\]
Hence, for all \(n\) sufficiently large,
\begin{equation}\label{eq:interior}\notag
(t_n,\bar y_n,\bar p_n)\in\cD,
\qquad
(t_n,y_n,p_n)\in\cD.
\end{equation}

\medskip\noindent
\textbf{Step 4 (Ishii's lemma).}
Define
\[
\psi(t,\bar y,\bar p,y,p)
:=
\frac n2(\bar y-y)^2
+
\frac{n^2}{2}(\bar p-p)^2
-
\varepsilon(\bar y-y_0)^2.
\]
We write the spatial Hessian in the grouped ordering
$(\bar y,\bar p,y,p).
$
In this ordering,
\[
M:=D^2\psi
=
\begin{pmatrix}
n-2\varepsilon & 0 & -n & 0\\
0 & n^2 & 0 & -n^2\\
-n & 0 & n & 0\\
0 & -n^2 & 0 & n^2
\end{pmatrix}.
\]
Equivalently, writing \(M\) in block form according to
$
(\bar y,\bar p)\quad\text{and}\quad(y,p),
$
we have
\[
M=
\begin{pmatrix}
M_{11}&M_{12}\\
M_{21}&M_{22}
\end{pmatrix},
\]
where
\[
M_{11}
=
\begin{pmatrix}
n-2\varepsilon&0\\
0&n^2
\end{pmatrix},
\qquad
M_{22}
=
\begin{pmatrix}
n&0\\
0&n^2
\end{pmatrix},
\]
and
\[
M_{12}=M_{21}
=
\begin{pmatrix}
-n&0\\
0&-n^2
\end{pmatrix}.
\]

For each \(n\), choose \(\eta_n:=n^{-3}\). Since both points are interior, parabolic Ishii lemma
\cite[Theorem~8.3]{CIL92} gives the existence of 
\(a_n=b_n\in\RR\) and \(X_n,Y_n\in\cS_2\) such that
\begin{align}
\bigl(a_n,(q_n^u,n^2(\bar p_n-p_n)),X_n\bigr)
&\in
\overline{\mathcal P}^{2,+}
\tilde u_\varepsilon(t_n,\bar y_n,\bar p_n),
\label{eq:jet_u}\\
\bigl(b_n,(q_n^v,n^2(\bar p_n-p_n)),Y_n\bigr)
&\in
\overline{\mathcal P}^{2,-}
\tilde v(t_n,y_n,p_n),
\label{eq:jet_v}
\end{align}
with
\begin{equation}\label{eq:ishii_matrix}
\begin{pmatrix}
X_n&0\\
0&-Y_n
\end{pmatrix}
\le
M+\eta_n M^2.
\end{equation}
Here \(a_n=b_n\) because the penalization is independent of time. The \(y\)-gradients are
\[
q_n^u
=
n(\bar y_n-y_n)-2\varepsilon(\bar y_n-y_0),
\qquad
q_n^v
=
n(\bar y_n-y_n).
\]
Therefore,
\begin{equation}\label{eq:grad_mismatch}
|q_n^u-q_n^v|
=
2\varepsilon|\bar y_n-y_0|
\le
C_d\varepsilon,
\end{equation}
where \(C_d\) is independent of \(n\) and \(\varepsilon\).

\medskip\noindent
\medskip\noindent
\textbf{Step 5.}
Fix $(Z^0,Z^1)\in\RR^2$. Recall from \eqref{eq:defSigma} that
\[
\Sigma(p,Z^0,Z^1)
=\bigl(Z^0-Z^1,\;p(1-p)(A^0(Z^0)-A^1(Z^1))\bigr)^\top,
\]
and consider the vector in the grouped ordering $(\bar y,\bar p,y,p)$,
\[
\xi:=\bigl(\Sigma(\bar p_n,Z^0,Z^1),\,\Sigma(p_n,Z^0,Z^1)\bigr)^\top\in\RR^4.
\]
Since $\xi\xi^\top\ge 0$, contracting \eqref{eq:ishii_matrix} with $\xi\xi^\top$ gives
\[
\mathrm{tr}\!\left(\xi\xi^\top\begin{pmatrix}X_n&0\\0&-Y_n\end{pmatrix}\right)
\le
\mathrm{tr}\bigl(\xi\xi^\top(M+\eta_n M^2)\bigr),
\]
meaning 
\[
\Sigma(\bar p_n,Z^0,Z^1)^\top X_n\,\Sigma(\bar p_n,Z^0,Z^1)
-\Sigma(p_n,Z^0,Z^1)^\top Y_n\,\Sigma(p_n,Z^0,Z^1)
\le
\xi^\top(M+\eta_n M^2)\xi.
\]

From the explicit expression of $M$,
\[
\xi^\top M\xi
=
-2\varepsilon(Z^0-Z^1)^2
+n^2\bigl(\bar p_n(1-\bar p_n)-p_n(1-p_n)\bigr)^2\bigl(A^0(Z^0)-A^1(Z^1)\bigr)^2,
\]
and
\[
\xi^\top M^2\xi
=
4\varepsilon^2(Z^0-Z^1)^2
+2n^4\bigl(\bar p_n(1-\bar p_n)-p_n(1-p_n)\bigr)^2\bigl(A^0(Z^0)-A^1(Z^1)\bigr)^2.
\]
Hence
\begin{align*}
\xi^\top(M+\eta_n M^2)\xi
&=(-2\varepsilon+4\eta_n\varepsilon^2)(Z^0-Z^1)^2\\
&\quad+(n^2+2\eta_n n^4)\bigl(\bar p_n(1-\bar p_n)-p_n(1-p_n)\bigr)^2\bigl(A^0(Z^0)-A^1(Z^1)\bigr)^2.
\end{align*}
Dividing by $2$,
\begin{align}\label{eq:trace_final}
&\tfrac12\,\Sigma(\bar p_n,Z^0,Z^1)^\top X_n\,\Sigma(\bar p_n,Z^0,Z^1)
-\tfrac12\,\Sigma(p_n,Z^0,Z^1)^\top Y_n\,\Sigma(p_n,Z^0,Z^1)\notag\\
&\quad\le
\varepsilon(Z^0-Z^1)^2(-1+2\eta_n\varepsilon)\notag\\
&\qquad+\tfrac12(n^2+2\eta_n n^4)\bigl(\bar p_n(1-\bar p_n)-p_n(1-p_n)\bigr)^2\bigl(A^0(Z^0)-A^1(Z^1)\bigr)^2.
\end{align}

The map $p\mapsto p(1-p)$ is $1$-Lipschitz on $[0,1]$, and
$|A^0(Z^0)-A^1(Z^1)|\le\bar a$ by Lemma~\ref{lem:verify_cd}(1), so
\[
\bigl(\bar p_n(1-\bar p_n)-p_n(1-p_n)\bigr)^2\bigl(A^0(Z^0)-A^1(Z^1)\bigr)^2
\le\bar a^{\,2}|\bar p_n-p_n|^2.
\]
With $\eta_n=n^{-3}$,
\[
\tfrac12(n^2+2\eta_n n^4)\bar a^{\,2}|\bar p_n-p_n|^2
=\tfrac12(n^2+2n)\bar a^{\,2}|\bar p_n-p_n|^2\longrightarrow 0
\]
by \eqref{eq:pen_bound}.

\medskip\noindent
\textbf{Step 6 (Application of the Hamiltonian comparison lemma).}
By \eqref{eq:trace_final}, for every $(Z^0,Z^1)\in\RR^2$,
\begin{align*}
&\tfrac12\,\Sigma(\bar p_n,Z^0,Z^1)^\top X_n\,\Sigma(\bar p_n,Z^0,Z^1)
-\tfrac12\,\Sigma(p_n,Z^0,Z^1)^\top Y_n\,\Sigma(p_n,Z^0,Z^1)\\
&\qquad\le
\varepsilon(Z^0-Z^1)^2(-1+2\eta_n\varepsilon)
+\tfrac12(n^2+2\eta_n n^4)\bar a^{\,2}|\bar p_n-p_n|^2.
\end{align*}
Since $\eta_n=n^{-3}$, for all $n$ sufficiently large,
\[
\varepsilon(-1+2\eta_n\varepsilon)\le -\frac{\varepsilon}{2}.
\]
Hence
\begin{align*}
&\tfrac12\,\Sigma(\bar p_n,Z^0,Z^1)^\top X_n\,\Sigma(\bar p_n,Z^0,Z^1)
-\tfrac12\,\Sigma(p_n,Z^0,Z^1)^\top Y_n\,\Sigma(p_n,Z^0,Z^1)
+\tfrac{\varepsilon}{2}(Z^0-Z^1)^2\\
&\qquad\le
\tfrac12(n^2+2\eta_n n^4)\bar a^{\,2}|\bar p_n-p_n|^2.
\end{align*}
Taking the supremum over $(Z^0,Z^1)\in\RR^2$, the trace hypothesis in Lemma~\ref{lem:Hcomp} is satisfied with coercivity parameter $\varepsilon/2$ and remainder
\[
r_n:=\tfrac12(n^2+2\eta_n n^4)\bar a^{\,2}|\bar p_n-p_n|^2.
\]
Moreover,
\[
r_n=\tfrac12(n^2+2n)\bar a^{\,2}|\bar p_n-p_n|^2\longrightarrow 0
\]
by \eqref{eq:pen_bound}.

Applying Lemma~\ref{lem:Hcomp}, after changing the constant $C$, gives
\begin{equation}\label{eq:H_diff_lemma}
H^\lambda(t_n,y_n,p_n,q_n^v,Y_n)
-H^\lambda(t_n,\bar y_n,\bar p_n,q_n^u,X_n)
\le r_n+\frac{C}{\varepsilon}\delta_n^2+C\delta_n,
\end{equation}
where
\[
\delta_n:=|q_n^v-q_n^u|+\bigl(|\bar y_n-y_n|+|\bar p_n-p_n|\bigr)(1+|q_n^u|).
\]
By \eqref{eq:grad_mismatch},
\[
|q_n^v-q_n^u|\le C_d\varepsilon,
\qquad
|q_n^u|\le n|\bar y_n-y_n|+C_d\varepsilon.
\]
Therefore,
\[
|\bar y_n-y_n|(1+|q_n^u|)
\le|\bar y_n-y_n|+n|\bar y_n-y_n|^2+C_d\varepsilon|\bar y_n-y_n|\longrightarrow 0,
\]
and
\begin{align*}
|\bar p_n-p_n|(1+|q_n^u|)
&\le|\bar p_n-p_n|+n|\bar p_n-p_n|\,|\bar y_n-y_n|+C_d\varepsilon|\bar p_n-p_n|\\
&\le|\bar p_n-p_n|+\tfrac12 n|\bar y_n-y_n|^2+\tfrac12 n|\bar p_n-p_n|^2+C_d\varepsilon|\bar p_n-p_n|\longrightarrow 0.
\end{align*}
Indeed,
\[
n|\bar y_n-y_n|^2\to 0,\qquad
n|\bar p_n-p_n|^2=\tfrac{1}{n}\,n^2|\bar p_n-p_n|^2\to 0
\]
by \eqref{eq:pen_bound}. Hence
\[
\limsup_{n\to\infty}\delta_n\le C_d\varepsilon,
\]
and consequently
\[
\limsup_{n\to\infty}\left(\frac{C}{\varepsilon}\delta_n^2+C\delta_n\right)\le C\varepsilon,
\]
with $C$ independent of $\varepsilon$. Combining this estimate with $r_n\to 0$ in \eqref{eq:H_diff_lemma},
\begin{equation}\label{eq:H_limsup}\notag
\limsup_{n\to\infty}\Big[
H^\lambda(t_n,y_n,p_n,q_n^v,Y_n)
-H^\lambda(t_n,\bar y_n,\bar p_n,q_n^u,X_n)
\Big]\le C\varepsilon.
\end{equation}
\medskip\noindent
\textbf{Step 7 .}
The jets \eqref{eq:jet_u}--\eqref{eq:jet_v} only give the viscosity
inequalities at the semicontinuous envelopes:
\begin{align}
-a_n+\lambda\tilde u_\varepsilon(t_n,\bar y_n,\bar p_n)
+(H^\lambda)_*\!\bigl(t_n,\bar y_n,\bar p_n,q_n^u,X_n\bigr)
&\le -\frac{\varepsilon\,e^{\lambda t_n}}{(T-t_n)^2}+K_H\varepsilon^2,
\notag\\[2pt]
-a_n+\lambda\tilde v(t_n,y_n,p_n)
+(H^\lambda)^{*}\!\bigl(t_n,y_n,p_n,q_n^v,Y_n\bigr)
&\ge 0.
\label{eq:sup_env}
\end{align}
We show that, at the maximizer, the envelopes coincide with $H^\lambda$.

\smallskip
\emph{(i) Upper semicontinuity of $H^\lambda$.}
The second-order argument of the Hamiltonian is a $2\times 2$ symmetric
matrix
\[
X=\begin{pmatrix} X_{yy} & X_{yp}\\[2pt] X_{yp} & X_{pp}\end{pmatrix}\in\cS_2,
\]
acting on the $(y,p)$-Hessian. With $\Sigma(p,z^0,z^1)$ as in
\eqref{eq:defSigma},
\begin{align*}
\tfrac12\,\mathrm{tr}(\Sigma\Sigma^\top X)
&=\tfrac12\,X_{yy}\,(z^0-z^1)^2
+X_{yp}\,(z^0-z^1)\,p(1-p)\bigl(A^0(z^0)-A^1(z^1)\bigr)\\[2pt]
&\quad+\tfrac12\,X_{pp}\,p^2(1-p)^2\bigl(A^0(z^0)-A^1(z^1)\bigr)^2,
\end{align*}
so the coefficient of $(z^0-z^1)^2$ in $-L^{z^0,z^1}(t,y,p,q,X)$ equals
$-\tfrac12 X_{yy}$, while the remaining terms grow at most linearly in
$(z^0,z^1)$ (using boundedness of $A^0,A^1$ and the Lipschitz property
of $H^0,H^1$ from Lemma~\ref{lem:verify_cd}(1)). In particular,
\[
X_{yy}>0\ \Longrightarrow\ -L^{z^0,z^1}(t,y,p,q,X)
\xrightarrow[|z^0-z^1|\to\infty]{}-\infty
\ \Longrightarrow\ H^\lambda(t,y,p,q,X)=-\infty.
\]

For every fixed $(z^0,z^1)\in\RR^2$, the map
$(t,y,p,q,X)\mapsto -L^{z^0,z^1}(t,y,p,q,X)$ is jointly continuous, so
$H^\lambda$, as the infimum of a family of continuous (hence upper
semicontinuous) functions, is upper semicontinuous on its domain. Hence
\begin{equation}\label{eq:Hstar_eq}
(H^\lambda)^{*}=H^\lambda \quad\text{everywhere.}
\end{equation}

\smallskip
\emph{(ii) Sign of $(Y_n)_{yy}$.}
By \eqref{eq:Hstar_eq} and \eqref{eq:sup_env},
\[
H^\lambda(t_n,y_n,p_n,q_n^v,Y_n)
=(H^\lambda)^{*}(t_n,y_n,p_n,q_n^v,Y_n)
\ge a_n-\lambda\tilde v(t_n,y_n,p_n)>-\infty.
\]
By (i), this rules out $(Y_n)_{yy}>0$, so
\begin{equation}\label{eq:Yyy}
(Y_n)_{yy}\le 0.
\end{equation}

\smallskip
\emph{(iii) Ishii's matrix inequality forces $(X_n)_{yy}\le -\varepsilon$.}
Apply \eqref{eq:ishii_matrix} to the rank--one test vector
$e:=(1,0,1,0)^\top\in\RR^4$. Direct computation in the grouped ordering
$(\bar y,\bar p,y,p)$ gives
\[
Me=(-2\varepsilon,0,0,0)^\top,\qquad
e^\top Me=-2\varepsilon,\qquad e^\top M^2 e=4\varepsilon^{\,2}.
\]
The left-hand side of \eqref{eq:ishii_matrix} contracted with $e$ equals
$(X_n)_{yy}-(Y_n)_{yy}$, hence
\[
(X_n)_{yy}-(Y_n)_{yy}\;\le\;-2\varepsilon+4\eta_n\varepsilon^{\,2}.
\]
Since $\eta_n=n^{-3}\to 0$, for all $n$ large enough
$4\eta_n\varepsilon^2\le \varepsilon$, and combining with \eqref{eq:Yyy},
\begin{equation}\label{eq:Xyy}
(X_n)_{yy}\;\le\;(Y_n)_{yy}-\varepsilon\;\le\;-\varepsilon\;<\;0.
\end{equation}

\smallskip
\emph{(iv) $(H^\lambda)_*=H^\lambda$ at $X_n$.}
On the open half-space $\{X_{yy}<0\}$, the integrand $-L^{z^0,z^1}$ is
quadratically coercive in $(z^0-z^1)$ with coefficient
$-X_{yy}/2\ge\varepsilon/2$ on a neighbourhood of any base point, while
the remaining terms are linear in $(z^0,z^1)$. A standard
Berge-type maximum theorem argument then yields that the infimum
defining $H^\lambda$ is attained on a compact set varying continuously
with the parameters, and that $H^\lambda$ is continuous on
$\{X_{yy}<0\}$. By \eqref{eq:Xyy} and non positivity of $(Y_n)_{yy}$,
\begin{equation}\label{eq:Hstarstar_eq}
(H^\lambda)_*\bigl(t_n,\bar y_n,\bar p_n,q_n^u,X_n\bigr)
\;=\;H^\lambda\bigl(t_n,\bar y_n,\bar p_n,q_n^u,X_n\bigr).
\end{equation}

\medskip\noindent
\textbf{Step 8.}
The subsolution and supersolution inequalities and \eqref{eq:Hstar_eq}, \eqref{eq:Hstarstar_eq} give
\[
\lambda\bigl(\tilde u_\varepsilon(t_n,\bar y_n,\bar p_n)
  -\tilde v(t_n,y_n,p_n)\bigr)
\;\le\;H^\lambda(t_n,y_n,p_n,q_n^v,Y_n)
  -H^\lambda(t_n,\bar y_n,\bar p_n,q_n^u,X_n)+K_H\,\varepsilon^2.
\]
Since $M_n\le\tilde u_\varepsilon(t_n,\bar y_n,\bar p_n)
-\tilde v(t_n,y_n,p_n)$ and $M_n\to M_0\ge\sup(\tilde u_\varepsilon
-\tilde v)$, taking $\limsup$ using Step~6:
\[
\lambda\sup_{\mathrm{cl}_y(\cD)}
(\tilde u_\varepsilon-\tilde v)
\;\le\;\lambda\,M_0\;\le\;C'\varepsilon+K_H\,\varepsilon^2.
\]
Since $\sup(\tilde u_\varepsilon-\tilde v)
=\sup_{\mathrm{cl}_y(\cD)}\bigl[e^{\lambda t}(u-v)
-e^{\lambda t}\varepsilon/(T-t)-e^{\lambda t}\varepsilon^2 f(p)\bigr]
\;\to\;\sup_{\mathrm{cl}_y(\cD)}e^{\lambda t}(u-v)
\;\ge\;\sup(u-v)>0$
as $\varepsilon\to0$, we obtain
\[
\lambda\sup(u-v)\;\le\;\lim_{\varepsilon\to0}\bigl(C'\varepsilon+K_H\,\varepsilon^2\bigr)=0,
\]
contradicting $\sup(u-v)>0$.
We conclude $u\le v$ on $\mathrm{cl}_y(\cD)$.
\qed
\end{proof}

\bibliographystyle{apalike}
\bibliography{main}
\end{document}